\theoremstyle{definition}
\newtheorem{definition}{Definition}[section]
\newtheorem{theorem}[definition]{Theorem}
\newtheorem{lemma}[definition]{Lemma}
\newtheorem{corollary}[definition]{Corollary}
\newtheorem{remark}[definition]{Remark}
\let\oldnl\nl 
\newcommand{\nonl}{\renewcommand{\nl}{\let\nl\oldnl}}
\newcommand{\bluecode}[1]{\small\texttt{\color{blue}#1}}
\newcommand{\cmark}{\color{ForestGreen} \ding{51}}
\newcommand{\xmark}{\color{BrickRed} \ding{55}}
\newcommand{\dimp}{p}
\DeclareMathOperator*{\argmin}{arg\,min}
\DeclareMathOperator{\rank}{rank}
\DeclareMathOperator{\diag}{diag}
\definecolor{blue1}{HTML}{6BC4FF}
\definecolor{blue2}{HTML}{5194FF}
\definecolor{blue3}{HTML}{3859FF}
\definecolor{my_orange}{HTML}{FE6100}
\definecolor{my_yellow}{HTML}{FFB000}
\newtheorem*{rep@theorem}{\rep@title}
\newcommand{\newreptheorem}[2]{%
	\newenvironment{rep#1}[1]{%
		\def\rep@title{#2 \ref{##1}}%
		\begin{rep@theorem}}%
		{\end{rep@theorem}}}
\title{Online Changepoint Detection via \\ Dynamic Mode Decomposition}
\author{Victor K. Khamesi  \and Niall M. Adams \and Dean A. Bodenham \and Edward A. K. Cohen}
\date{{\normalsize Department of Mathematics, Imperial College London,} \\
 {\normalsize South Kensington Campus, London SW7 2AZ, U.K.} \\
{\small \texttt{victor.khamesi21@imperial.ac.uk}, 
\quad \texttt{n.adams@imperial.ac.uk},
\quad \texttt{dean.bodenham@imperial.ac.uk},
\quad \texttt{e.cohen@imperial.ac.uk}}
}
\begin{document}

\maketitle

\begin{abstract}
	Detecting changes in data streams is a vital task in many applications. There is increasing interest in changepoint detection in the online setting, to enable real-time monitoring and support prompt responses and informed decision-making. Many approaches assume stationary sequences before encountering an abrupt change in the mean or variance. Notably less attention has focused on the challenging case where the monitored sequences exhibit trend, periodicity and seasonality. Dynamic mode decomposition is a data-driven dimensionality reduction technique that extracts the essential components of a dynamical system. We propose a changepoint detection method that leverages this technique to sequentially model the dynamics of a moving window of data and produce a low-rank reconstruction. A change is identified when there is a significant difference between this reconstruction and the observed data, and we provide theoretical justification for this approach. Extensive simulations demonstrate that our approach has superior detection performance compared to other methods for detecting small changes in mean, variance, periodicity, and second-order structure, among others, in data that exhibits seasonality. Results on real-world datasets also show excellent performance compared to contemporary approaches.
\end{abstract}

%%- - - - - - - - - - - - - - - - - - - - - - - - - - - - - - - - - - - - - %%
\section{Introduction}

Streaming data is ubiquitous in applications as diverse as cybersecurity \cite{cybersecurity}, speech recognition \cite{speech_recognition}, continual learning \cite{continual_learning}, finance \cite{finance}, meteorology \cite{meteorology}, biology \cite{biology}, and medicine \cite{medicine}. Detecting when the data generating mechanism changes is a crucial task in these applications. However, the time series data monitored in these applications often exhibit seasonality, such as higher values at certain times of the day, week, or year. Most changepoint detection methods are developed using the assumption that the data being analysed is piecewise-stationary. Such methods would be incapable of differentiating between changes in the distribution of the data and any seasonal effects.

In this paper, we propose a novel online changepoint detection method that can identify changepoints in streaming data while accounting for any seasonality in the time series. Our approach utilises a technique from dynamical systems called dynamic mode decomposition (DMD) \cite{dmdschmid, schmid_annual_reviews}, which can decompose a $p$-dimensional time series into spatial \emph{modes} and their associated \emph{dynamics}, encapsulating their frequency and growth rate. Focusing on the $r < p$ dominant modes, this provides a low-rank factorisation of the time series used to construct a statistic which indicates whether or not a change has occurred. We demonstrate the effectiveness of this approach on simulated data with seasonality that have changes in mean, variance, periodicity, and second-order structure, and on several real-world datasets. The method has several advantages: it is nonparametric, so the type of change does not need to be specified in advance; it has few parameters and is shown experimentally to be robust to choices for these parameters; it is unsupervised and does not require offline training. 

The rest of the paper is structured as follows: in Section \ref{sec:background}, we describe background for changepoint detection and DMD; in Section \ref{sec:methodology}, we introduce our novel method named \textbf{C}hange\textbf{P}oint \textbf{D}etection via \textbf{D}ynamic \textbf{M}ode \textbf{D}ecomposition (\textbf{CPDMD}); in Section \ref{sec:experiments_simulations}, we show that CPDMD achieves state-of-the-art performance on both synthetic and real-world data; in Section \ref{sec:conclusion}, we briefly summarise our conclusions. The experiments on synthetic data focus on the case where the data is univariate and has a single change, since we are already considering several different types of changes, but the real-world experiments cover the case of multivariate data with multiple changepoints.

%%%%%%%%%%%%%%%%%%%%%%%%%%%%%%%%%%%%%%%%%%%%%%%%

\section{Background}
\label{sec:background}

We first provide an overview of the changepoint detection problem and dynamic mode decomposition algorithm to establish the necessary context and background.

\subsection{Changepoint detection}

Consider a stream of multivariate observations $x_1, x_2, \dots$ where $x_t \in \mathbb{R}^p$, $p \in \mathbb{N}$, denotes the \hbox{$p$-dimensional} observation at regularly spaced and discrete time $t$, sampled from i.i.d. random variables $X_1, X_2, \dots$ with changepoints $\tau_1, \tau_2, \dots$ such that
\begin{equation*}
	\begin{aligned}
		X_1, X_2, \dots, X_{\tau_1} &\sim F_1, \\
		X_{\tau_1+1}, X_{\tau_1+2}, \dots, X_{\tau_2} &\sim F_2, \\
		X_{\tau_2+1}, X_{\tau_2+2}, \dots, X_{\tau_3} &\sim F_3, \text{ etc.},
	\end{aligned}
\end{equation*}
where $F_1, F_2, \dots$ is a sequence of \textit{unknown} probability distributions such that $F_k \neq F_{k+1}$ for all $k \in \mathbb{N}$. Changepoint detection seeks to estimate the changepoint locations $\tau_1, \tau_2, \dots$. In other words, the problem consists in dividing the sequence of observations $x_1, x_2, \dots$ into piecewise segments with the same data generation process. In the statistics and machine learning literature, changepoint detection is also referred to as distribution shift detection \cite{distribution_shift} or temporal segmentation \cite{temporal_segmentation}. While many approaches consider the \emph{offline} setting \cite{pelt, binseg, ecp}, where the dataset is static and fully available, we are interested in the \emph{online} setting, where observations are processed sequentially and changepoints need to be detected as soon as possible after occurring. 

\subsubsection{Related work}
\label{sec:relatedwork}

The changepoint literature is vast, so we only highlight a few notable approaches. Two early online approaches that are still widely used are CUSUM \cite{cusum} and EWMA \cite{ewma}. In recent years, an online Bayesian method \cite{bocpd} and its extensions (e.g. \cite{bocpdms}) have become popular, although they can be computationally expensive for large sequences. Another class of approaches based on density ratio estimation includes KLIEP \cite{kliep}, uLSIF \cite{ulsif}, and RuLSIF \cite{rulsif}, with RuLSIF shown to have best performance among these three. Singular spectrum analysis has been adapted for changepoint detection in both univariate \cite{ssa_cpd} and multivariate contexts \cite{mssa}. Previous studies \cite{dmd_changepoint_1, dmd_changepoint_2} have considered using dynamic mode decomposition (DMD) for changepoint detection, although this was suggested as a secondary application after developing general DMD methodology, and only brief examples were provided without a comprehensive performance assessment.

\subsection{Dynamic mode decomposition}

Dynamic mode decomposition (DMD)  \cite{dmdschmid, dmdbook, schmid_annual_reviews} is a data-driven model reduction algorithm that aims to describe high-dimensional dynamical systems by discovering a low-dimensional subspace capturing dominant dynamics. 

\subsubsection{Continuous and discrete time}

\paragraph{Continuous time.} 
DMD assumes data is collected from an underlying dynamical system defined by $\frac{\mathrm{d} {x}}{\mathrm{d} t} = {f}({x}, t; \theta)$ where ${x}(t) \in \mathbb{R}^{\dimp}$ is the $p$-dimensional observation of the system's state at time $t$, $\theta$ contains the system's parameters, and ${f} : \mathbb{R}^p \times \mathbb{R} \longrightarrow \mathbb{R}^p$ characterises the \textit{unknown} dynamics. DMD linearly approximates the local dynamics using an operator $\mathcal{A} \in \mathbb{R}^{p \times p}$ such that $\frac{\mathrm{d} {x}}{\mathrm{d} t} \simeq \mathcal{A} {x}$. Given an initial condition ${x}(0)$, this differential equation has closed-form solution ${x}(t) = \sum_{j=1}^{p} \boldsymbol{\phi}_j e^{\omega_j t} b_j$, where $\omega_j \in \mathbb{C}$ and $\boldsymbol{\phi}_j \in \mathbb{C}^p$ are respectively the eigenvalues and eigenvectors of the matrix $\mathcal{A}$ and $b_j \in \mathbb{C}$ are the coordinates of the initial condition ${x}(0)$ in the eigenvectors basis, for $j \in \{ 1, 2, ..., p \}$. 

\paragraph{Discrete time.} In practice, one can only collect a discrete set of observations from the studied dynamical system. Therefore, DMD equations may be discretised considering equally time-spaced by $\Delta t$ observations ${x}_k = {x}(k \Delta t) \in \mathbb{R}^p$, $k \in \mathbb{N}$, and the dynamical evolution is now described by ${x}_{k+1} = {F}({x}_k)$ for some \textit{unknown} function ${F} : \mathbb{R}^p \longrightarrow \mathbb{R}^p$ representing the dynamics. Similar to continuous time, one may approximate the local dynamics such that ${x}_{k+1} \simeq {A} {x}_k$, where ${A}~=~\exp{(\mathcal{A} \Delta t)} \in \mathbb{R}^{p \times p}$. The discretised system has solution ${x}_{k+1} = \sum_{j=1}^{p} \boldsymbol{\phi}_j \lambda_j^{k} b_j$, where $\lambda_j \in \mathbb{C}$ and $\boldsymbol{\phi}_j \in \mathbb{C}^p$ are respectively the eigenvalues (dynamics) and eigenvectors (modes) of the matrix $A$ and $b_j \in \mathbb{C}$ are the coordinates of the initial condition ${x}_1$ in the eigenvectors basis, for $j \in \{ 1, 2, ..., p \}$. The analogy between continuous and discrete time DMD is illustrated in Figure \ref{fig:dmd-continuous-discrete}.

\paragraph{Computing the modes and dynamics.} There are several approaches to extract the DMD modes and dynamics from a sequence of snapshots \cite{dmdschmid, low_rank_dmd, schmid_annual_reviews}. We use the standard approach from \cite{dmdschmid}, which relies on singular value decomposition;  details are provided in Appendix \ref{sec:pseudo_code_dmd}. 

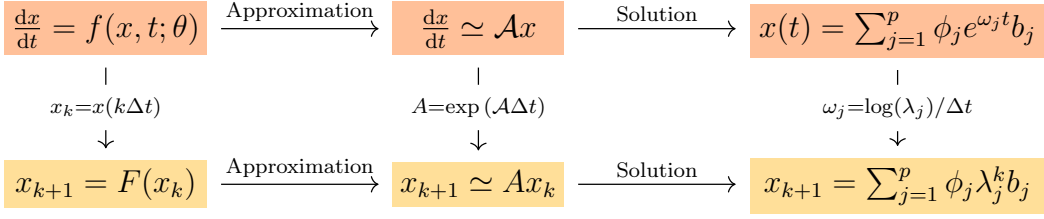
\begin{figure}[t]
	\centering
	\begin{tikzcd}[column sep=5pc, row sep=0.5pc]
		\colorbox{my_orange!40}{\makebox[6em][c]{$\frac{\mathrm{d} {x}}{\mathrm{d} t} = {f}({x}, t; \theta)$}} \arrow[dash]{d} \arrow{r}{\text{Approximation}} & \colorbox{my_orange!40}{\makebox[5em][c]{$\frac{\mathrm{d} {x}}{\mathrm{d} t} \simeq \mathcal{A} {x}$}} \arrow[dash]{d} \arrow{r}{\text{Solution}} & \colorbox{my_orange!40}{\makebox[9em][c]{${x}(t) = \sum_{j=1}^{p} \mathbf{\phi}_j e^{\omega_j t} b_j $}} \arrow[dash]{d} \\
		\scriptstyle {x}_k = {x}(k \Delta t) \arrow{d} & \scriptstyle {A} = \exp{( \mathcal{A} \Delta t )} \arrow{d} & \scriptstyle \omega_j = \log ( \lambda_j ) / \Delta t  \arrow{d} \\
		\colorbox{my_yellow!40}{\makebox[6em][c]{${x}_{k+1} = {F}({x}_k)$}} \arrow{r}{\text{Approximation}} & \colorbox{my_yellow!40}{\makebox[5em][c]{${x}_{k+1} \simeq {A} {x}_k$}} \arrow{r}{\text{Solution}} & \colorbox{my_yellow!40}{\makebox[9em][c]{${x}_{k+1} = \sum_{j=1}^{p} \mathbf{\phi}_j \lambda_j^{k} b_j $}}
	\end{tikzcd}
	\caption{DMD analogy between continuous time (top) and discrete time (bottom) settings.}
	\label{fig:dmd-continuous-discrete}
\end{figure}

%%%%%%%%%%%%%%%%%%%%%%%%%%%%%%%%%%%%%%%%%%%%%%%%

\section{Proposed changepoint detection algorithm}
\label{sec:methodology}

We start by describing our approach to using dynamic mode decomposition (DMD) to 
extract the dynamics from a multivariate stream, then used to construct a statistic which reflects changes.

\subsection{Space-time transformation of the data}

Consider a time series $x_1, x_2, \dots$ where $x_t = \begin{bNiceMatrix}[margin]
	\Block[fill=blue1!40,rounded-corners]{1-1}{} x_t^{(1)} & \Block[fill=blue2!40,rounded-corners]{1-1}{} x_t^{(2)} & \cdots & \Block[fill=blue3!40,rounded-corners]{1-1}{} x_t^{(p)}
\end{bNiceMatrix}^{\intercal} \in \mathbb{R}^p$ denotes the $p$-dimensional observation at time $t \in \mathbb{N}$, $p \in \mathbb{N}$. Given a fixed window length $w \in \mathbb{N}$, let 
\begin{equation}
	X_t \vcentcolon = \left[ x_{t-w+1}, x_{t-w+2}, \dots, x_t \right] = \begin{bNiceMatrix}[margin]
		\Block[fill=blue1!40,rounded-corners]{1-4}{} x_{t-w+1}^{(1)} & x_{t-w+2}^{(1)} & \cdots & x_{t}^{(1)} \\
		\Block[fill=blue2!40,rounded-corners]{1-4}{}
		x_{t-w+1}^{(2)} & x_{t-w+2}^{(2)} & \cdots & x_{t}^{(2)} \\
		\vdots & \vdots & \ddots & \vdots \\
		\Block[fill=blue3!40,rounded-corners]{1-4}{}
		x_{t-w+1}^{(p)} & x_{t-w+2}^{(p)} & \cdots & x_{t}^{(p)} \\
	\end{bNiceMatrix} \vcentcolon = \begin{bNiceMatrix}[margin]
		\Block[fill=blue1!40,rounded-corners]{1-1}{} X_t^{(1)} \\
		\Block[fill=blue2!40,rounded-corners]{1-1}{} X_t^{(2)} \\
		\vdots \\
		\Block[fill=blue3!40,rounded-corners]{1-1}{} X_t^{(p)} \\
	\end{bNiceMatrix} 
	\label{eq:window}
\end{equation}
denote the concatenation of the last $w$ snapshots at time $t$, reflecting the dynamics of the signal within this context window, where $X_t^{(j)} \vcentcolon = \begin{bNiceMatrix}[margin]
	x_{t-w+1}^{(j)} & x_{t-w+2}^{(j)} & \cdots & x_{t}^{(j)}
\end{bNiceMatrix} \in \mathbb{R}^{1 \times w}$ for $j~\in~\{ 1, 2, \dots, p \}$, and 
$X_t \in \mathbb{R}^{p \times w}$. In the following, elements belonging to the same component are coloured identically.

Furthermore, at time $t$, given a fixed auto-regressive order $d \in \{ 1, 2, \dots, w \}$, one may rearrange the finite set of observations in the current window $X_t$ and construct each Hankel matrix \cite{hodmd, hankel_dmd_koopman} 
$\mathcal{X}_t^{(j)} \in \mathbb{R}^{d \times (w-d+1)}$  from $X_t^{(j)}$, along with their concatenation $\mathcal{X}_t \in \mathbb{R}^{pd \times (w-d+1)}$ as
\begin{equation}
	\forall j \in \{1, 2, \dots, p\}, ~
	\mathcal{X}_t^{(j)} \vcentcolon = \left[ \begin{array}{cccc}
		x_{t-w+1}^{(j)} & x_{t-w+2}^{(j)} & \cdots & x_{t-d+1}^{(j)} \\
		x_{t-w+2}^{(j)} & x_{t-w+3}^{(j)} & \cdots & x_{t-d+2}^{(j)} \\
		\vdots & \vdots & \ddots & \vdots \\
		x_{t-w+d}^{(j)} & x_{t-w+d+1}^{(j)} & \cdots & x_{t}^{(j)} \\
	\end{array} \right] %\in \mathbb{R}^{d \times (w-d+1)} 
	\Rightarrow   \mathcal{X}_t \vcentcolon = \begin{bNiceMatrix}[margin]
		\Block[fill=blue1!40,rounded-corners]{1-1}{}
		\mathcal{X}_t^{(1)} \\
		\Block[fill=blue2!40,rounded-corners]{1-1}{}
		\mathcal{X}_t^{(2)} \\
		\vdots \\
		\Block[fill=blue3!40,rounded-corners]{1-1}{}
		\mathcal{X}_t^{(p)} \\
	\end{bNiceMatrix}.  
	\label{eq:block_hankel}
\end{equation}
Formatting data within a Hankel matrix has been suggested by previous works in the dynamical systems \cite{hodmd, hankel_dmd_koopman} and changepoint detection \cite{ssa_cpd, subspace_identification} literature. The concatenated $\mathcal{X}_t$ is then used to learn the dynamical properties of the multivariate signal at time $t$ within the current sliding window.

\subsection{Learning and reconstructing the stream's dynamics}

Our method leverages DMD to learn the dynamical properties of the input stream and flag times at which the underlying generating process of the dynamical system appears to have changed. Indeed, DMD provides a linear state-space model approximation of a nonlinear dynamical system by operating on delay-embedded data snapshots. 

After computing $\mathcal{X}_t  \in \mathbb{R}^{pd \times (w-d+1)}$ in Equation~\eqref{eq:block_hankel},
we then perform DMD on $\mathcal{X}_t$ by considering its $(w-d+1)$ columns as $pd$-dimensional snapshots. Incorporating time-delay embeddings is a typical technique in signal processing \cite{ssa_cpd, mssa} and allows one to express ergodic attractors of non-linear dynamical systems \cite{schmid_annual_reviews}. Performing DMD on time-delay embeddings rather than on the original observations is also known as Hankel DMD \cite{hankel_dmd_koopman} or Higher-Order DMD \cite{hodmd}. In particular, we use this variant of DMD to be more flexible and account for signals with \emph{low} spatial dependencies. Furthermore, DMD on time-delay embeddings has been shown to converge to the true eigenfunctions of the Koopman operator \cite{hankel_dmd_koopman}. Hence, given a rank $r \leq \min \{pd, w-d+1\}$, we apply DMD on $\mathcal{X}_t$ as described in Algorithm \ref{alg:dmd} and obtain spatial modes $\mathbf{\Phi}_t \in \mathbb{C}^{pd \times r}$, dynamics $\mathbf{\Omega}_t \in \mathbb{C}^{r \times r}$ and low-rank reconstruction of the Hankel batch $\widehat{\mathcal{X}}_t \in \mathbb{R}^{pd \times (w-d+1)}$.

The DMD reconstruction of the Hankel batch $\widehat{\mathcal{X}}_t$ produces multiple different estimates of the same original observations due to time-delay embeddings, as illustrated in Equation~\eqref{eq:reconstruction_hankel_component}. Therefore, from the reconstructed delay-embedded snapshots $\widehat{\mathcal{X}}_t$, one can recover many different reconstructions of the windowed batch $X_t$. For each component $j \in \{1, 2, \dots, p\}$, our reconstruction is defined as
\begin{equation}
	\widehat{\mathcal{X}}_t^{(j)} \vcentcolon = \begin{bNiceMatrix}[margin]
		\Block[fill=gray!30,rounded-corners]{1-4}{} 
		\widehat{x}_{t-w+1}^{(j), 1} & \widehat{x}_{t-w+2}^{(j), 2} & \cdots & \Block[fill=gray!30,rounded-corners]{4-1}{} \widehat{x}_{t-d+1}^{(j), w-d+1} \\
		\widehat{x}_{t-w+2}^{(j), 1} & \widehat{x}_{t-w+3}^{(j), 2} &  \cdots & \widehat{x}_{t-d+2}^{(j), w-d+1} \\
		\vdots & \color{gray} \vdots & \ddots & \vdots \\
		\widehat{x}_{t-w+d}^{(j), 1} & \widehat{x}_{t-w+d+1}^{(j), 2} & \cdots & \widehat{x}_{t}^{(j), w-d+1} \\
	\end{bNiceMatrix}  \Rightarrow \widehat{X}_t^{(j)} \vcentcolon = \begin{bNiceMatrix}[margin]
		\Block[fill=gray!30,rounded-corners]{7-1}{}
		\widehat{x}_{t-w+1}^{(j), 1} \\
		\widehat{x}_{t-w+2}^{(j), 2} \\
		\vdots \\
		\widehat{x}_{t-d+1}^{(j), w-d+1} \\
		\widehat{x}_{t-d+2}^{(j), w-d+1} \\
		\vdots \\
		\widehat{x}_{t}^{(j), w-d+1} \\
	\end{bNiceMatrix}^{\intercal},
	\label{eq:reconstruction_hankel_component}
\end{equation}
i.e. by traversing the path of $\widehat{\mathcal{X}}_t^{(j)} \in \mathbb{R}^{d \times (w-d+1)}$ first along its first row and then along the last column to obtain $\widehat{X}_t^{(j)} \in \mathbb{R}^{1 \times w}$. Applying this block reconstruction to each component of the Hankel batch leads to a reconstruction of the original windowed batch again using concatenation 
\begin{equation}
	\widehat{\mathcal{X}}_t \vcentcolon = \begin{bNiceMatrix}[margin]
		\Block[fill=blue1!40,rounded-corners]{1-1}{}
		\widehat{\mathcal{X}}_t^{(1)} \\
		\Block[fill=blue2!40,rounded-corners]{1-1}{}
		\widehat{\mathcal{X}}_t^{(2)} \\
		\vdots \\
		\Block[fill=blue3!40,rounded-corners]{1-1}{}
		\widehat{\mathcal{X}}_t^{(p)} \\
	\end{bNiceMatrix} \in \mathbb{R}^{pd \times (w-d+1)} \Rightarrow \widehat{X}_t \vcentcolon = \begin{bNiceMatrix}[margin]
		\Block[fill=blue1!40,rounded-corners]{1-1}{}
		\widehat{X}_t^{(1)} \\
		\Block[fill=blue2!40,rounded-corners]{1-1}{}
		\widehat{X}_t^{(2)} \\
		\vdots \\
		\Block[fill=blue3!40,rounded-corners]{1-1}{}
		\widehat{X}_t^{(p)} \\
	\end{bNiceMatrix} \in \mathbb{R}^{p \times w}.
	\label{eq:reconstruction_hankel}
\end{equation}
While this reconstruction is not unique, the one we consider in Equations~\eqref{eq:reconstruction_hankel_component} and \eqref{eq:reconstruction_hankel} may be considered as \textit{the most causal} since it only requires the first column of the windowed batch $X_t$, i.e. the true observation $x_{t-w+1}$, to compute the remaining $w-1$ columns of the windowed batch $\widehat{x}_{t-w+2}, \widehat{x}_{t-w+3}, \dots, \widehat{x}_{t}$ similarly to classical DMD settings. Our sequential transformation of the data and reconstruction of its dynamics is illustrated in Figure \ref{fig:summary_dmd}. 

\begin{figure}[t]
	\centering
	\includegraphics[width=\linewidth]{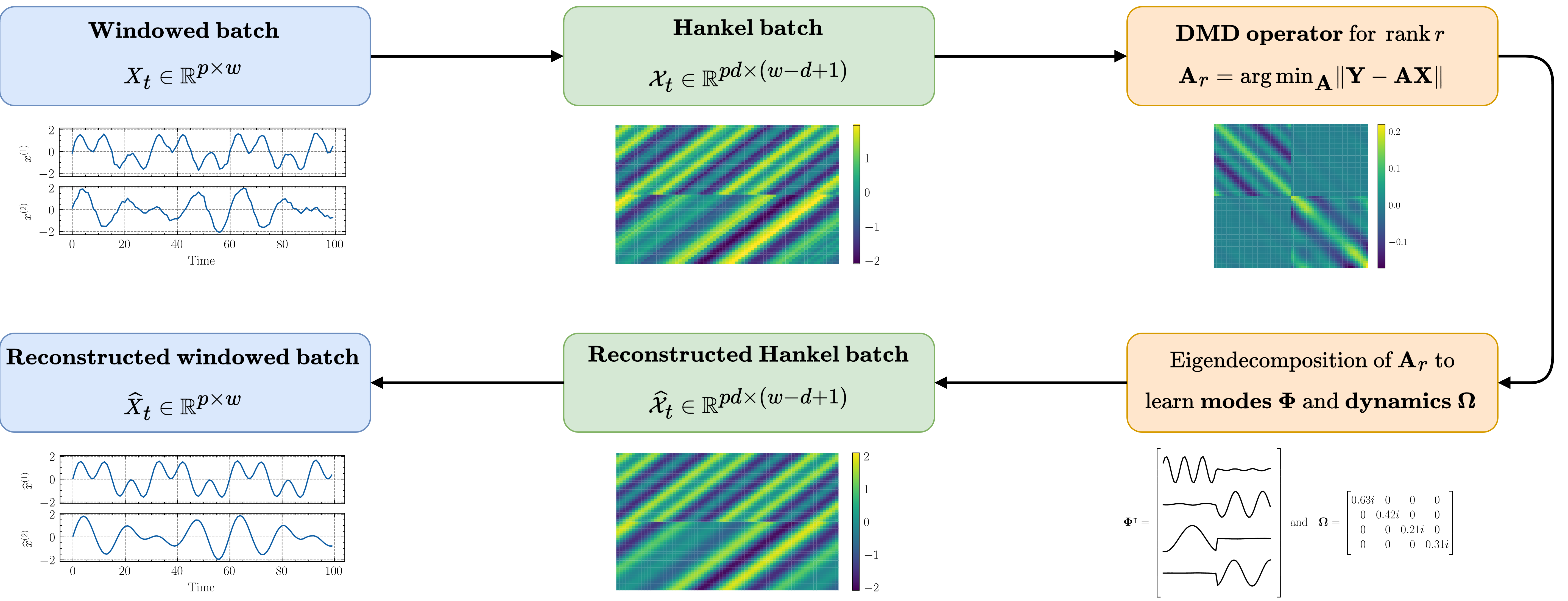}
	\caption{Illustration of the sequential data stream processing and reconstruction via DMD.}
	\label{fig:summary_dmd}
\end{figure}

\subsection{Detecting changepoints based on reconstruction error}

After obtaining a low-rank reconstruction of the windowed batch $\widehat{X}_t$, our method consists in detecting shifts in the data generating mechanism by monitoring the reconstruction quality \cite{reconstruction_error_anomaly}. Let $\varepsilon_t$ denote the reconstruction error on the windowed batches, and let $\delta_t$ be the error increments at time $t$, where
\begin{equation}
	\varepsilon_t \vcentcolon = \frac{1}{p w} \lVert X_t - \widehat{X}_t \rVert^2_F \in \mathbb{R}^{+}, \qquad \delta_t \vcentcolon = \varepsilon_t - \varepsilon_{t-1} \in \mathbb{R}.
	\label{eq:reconstruction_error}
\end{equation} 
Distribution shifts in the data are expected to manifest as perturbations in the reconstruction error over time.
While consecutive reconstruction error terms may be correlated due to their overlapping set of considered observations, their increments would be better modelled as stationary. Therefore, changepoints in the data can be detected by analysing the gradients of the reconstruction error. In discrete time settings with uniformly spaced observations, this approach is equivalent to monitoring the error increments, as defined in Equation~\eqref{eq:reconstruction_error}. To detect changes in the gradient of the reconstruction error, we use an adaptive version of EWMA \cite{ewma}, detailed in Appendix \ref{sec:background}. This distribution-free method has two hyperparameters, a learning rate $\lambda \in [0, 1]$ and a control limit $L \in \mathbb{R}^{*}_{+}$.

\subsection{Algorithm summary}

In summary, our proposal translates the task of changepoint detection in multivariate data, involving potentially different types of change, into the problem of univariate changepoint detection in the reconstruction error, which is computed efficiently at each time step within sequential context windows. The reconstruction method is based on the DMD algorithm applied to time-delay embeddings of the original data, extracting accurate representations of the dynamics. Changes in the input stream may be detected as perturbations of the reconstruction error increments using an adaptive EWMA algorithm. This approach is summarised in Algorithm \ref{alg:single_changepoint} for single changepoint detection. 

\begin{algorithm}[t]
	\caption{$\widehat{\tau} \gets\texttt{SingleCP}(x_1, x_2, \dots, x_T \mid T_0, w, d, r, \lambda, L)$}\label{alg:single_changepoint}
	
	\nonl \bluecode{/* Single changepoint detection via dynamic mode decomposition */}
	
	\KwIn{Sequence of observations $x_1, x_2, \dots, x_T \in \mathbb{R}^p$}
	
	\KwOut{First detected changepoint in the sequence $\widehat{\tau}$}
	
	\KwParam{Burn-in period $T_0 \in \mathbb{N}$, window length $w \in \mathbb{N}$, auto-regressive order $d \leq w$, SVD rank $r \leq \min \{ pd, w-d+1 \}$, adaptive EWMA learning rate $\lambda \in [0, 1]$ and limit $L \in \mathbb{R}^{\ast}_{+}$}
	
	\For{$t \in \{ w, w+1, \dots, T \}$}{
		$X_t \gets [ x_{t-w+1}, x_{t-w+2}, \dots, x_t ]$ \bluecode{\hspace{0px} /* Windowed batch (Equation \eqref{eq:window}) */} \\
		$\mathcal{X}_t \gets \texttt{Hankel}(X_t \mid d)$ \bluecode{\hspace{0px} /* Hankel batch (Equation \eqref{eq:block_hankel}) */} \\
		
		$\{ \widehat{\mathcal{X}}_t, \mathbf{\Phi}_t, \mathbf{\Omega}_t \} \gets \texttt{DMD}(\mathcal{X}_t \mid r)$ \bluecode{\hspace{0px} /* DMD on Hankel batch (Algorithm \ref{alg:dmd})  */} \\
		$\widehat{X}_t \gets \texttt{Unroll}(\widehat{\mathcal{X}}_t \mid d)$ 	\bluecode{\hspace{0px} /* Windowed batch reconstruction (Equations \eqref{eq:reconstruction_hankel_component} and \eqref{eq:reconstruction_hankel}) */} \\
		$\varepsilon_t \gets \lVert X_t - \widehat{X}_t \rVert^2_F$	\bluecode{\hspace{0px} /* Reconstruction error (Equation \eqref{eq:reconstruction_error}) */} \\
		$\delta_t \gets \varepsilon_t - \varepsilon_{t-1}$ \\
		$Z_t \gets (1 - \lambda) Z_{t-1} + \lambda \delta_t$ \bluecode{\hspace{0px} /* Adaptive EWMA on increments (Definition \ref{def:adaptive_ewma}) */} \\
		\If{$t > T_0 ~ \mathbf{and} ~ \left(Z_t > \mu_t + L \sigma_{Z_t} ~ \mathbf{or} ~ Z_t < \mu_t - L \sigma_{Z_t}\right)$}{
			\Return{$\widehat{\tau} = t$} \bluecode{\hspace{0px} /* Detected changepoint */}
		}
	}
	\Return{$\widehat{\tau} = \mathtt{None}$} \bluecode{\hspace{0px} /* No detected changepoint */} \\
	
\end{algorithm}

\subsection{Hyperparameter selection}
\label{sec:model_selection}

Algorithm \ref{alg:single_changepoint} requires specifying hyperparameters such as the window size $w$, the auto-regressive order $d$, and the rank $r$. To address the lack of prior knowledge for setting these parameters, we propose an approach that determines $w$, $d$, and $r$ based on a burn-in period $T_0 \in \mathbb{N}$ in an unsupervised and data-driven manner.

Assuming a fixed burn-in period $T_0$, we define a grid $\Theta$ satisfying the parameters constraints
\begin{equation*}
	\Theta \vcentcolon = \left\{ (w, d, r) \in \mathbb{N}^{3} \mid w \leq T_0, d \leq w, r \leq \min{ \{ pd, w-d+1 \}} \right\},
	\label{eq:parameters_constraints}
\end{equation*}
where $p$ is the dimension of the input stream. Running parallel competing models \cite{bocpdms}, hyperparameter selection is achieved by minimising the average reconstruction error, i.e.
\begin{equation*}
	\theta^{\ast} \vcentcolon = \argmin_{(w, d, r) \in \Theta} \frac{1}{T_0-w+1} \sum_{t=w}^{T_0-w+1} \varepsilon_t.
\end{equation*}
This approach is summarised in Algorithm \ref{alg:model_selection} and discussed further in Appendix \ref{sec:appendix_model_selection}. The remaining adaptive EWMA parameters $\lambda$ and $L$ are set by the user and control the overall sensitivity, although suggested empirical values from the EWMA literature \cite{ewma_suggested_params} are $\lambda \in [0.05, 1]$ and $L \in [2.4, 3]$.

%%%%%%%%%%%%%%%%%%%%%%%%%%%%%%%%%%%%%%%%%%%%%%%%

\subsection{Theoretical analysis}
\label{sec:theory}

\begin{theorem}
	\label{th:bound_eigendecomposition}
	Consider a $p$-dimensional stream of observations monitored via Algorithm \ref{alg:single_changepoint}. Let $\mathbf{A} \in \mathbb{C}^{pd \times pd}$ be the DMD operator at time $t$ and let $\mathbf{\Tilde{A}} \in \mathbb{C}^{pd \times pd}$ denote the DMD operator at time $t+1$. Then, there exists a perturbation matrix $\mathbf{E} \in \mathbb{C}^{pd \times pd}$ such that $\mathbf{\Tilde{A}} = \mathbf{A} + \mathbf{E}$, where the closed-form of $\mathbf{E}$ is detailed in Appendix \ref{sec:theoretical_analysis_eig}. 
	Furthermore, assuming that the DMD operator $\mathbf{A}$ is diagonalisable, then the following results hold:
	\begin{enumerate}[label=(\roman*)]
		\item Eigenvalues (dynamics): let $\Tilde{\lambda} \in \sigma(\mathbf{\Tilde{A}})$, i.e. $\Tilde{\lambda}$ is an eigenvalue of $\mathbf{\Tilde{A}}$, then 
		\begin{equation*}
			\min_{\lambda \in \sigma(\mathbf{A})} \lvert \Tilde{\lambda} - \lambda \rvert \leq \kappa_2(\mathbf{\Phi}) \lVert \mathbf{E} \rVert_2,
		\end{equation*} 
		where $\lVert \mathbf{E} \rVert_2$ denotes the matrix $2$-norm of $\mathbf{E}$, $\kappa_2(\mathbf{\Phi}) = \lVert \mathbf{\Phi} \rVert_2 \lVert \mathbf{\Phi}^{-1} \rVert_2$ is the condition number of $\mathbf{\Phi} \in \mathbb{C}^{pd \times pd}$ and $\mathbf{\Phi}$ is the eigenvectors matrix of $\mathbf{A}$. 
		\item Eigenvectors (modes): let $\boldsymbol{\Tilde{\phi}}_k \in \mathbb{C}^{pd}$ and $\boldsymbol{\phi}_k \in \mathbb{C}^{pd}$ denote normalised eigenvectors of $\mathbf{\Tilde{A}}$ and $\mathbf{A}$ respectively for $k \in \{ 1, 2, \dots, pd\}$, then
		\begin{equation*}
			\lVert \boldsymbol{\Tilde{\phi}}_k - \boldsymbol{\phi}_k \rVert_2 \leq \frac{\lVert \mathbf{E} \rVert_2}{\min_{j \neq k} \lvert \lambda_k - \lambda_j \rvert} + \mathcal{O}\left(\lVert \mathbf{E} \rVert_2^2\right), \forall k \in \{1, 2, \dots, pd\}.
		\end{equation*}
	\end{enumerate}
\end{theorem}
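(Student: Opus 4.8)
The plan is to recognise both inequalities as instances of classical matrix perturbation theory, once the perturbation matrix has been identified. The existence claim is immediate: taking $\mathbf{E} = \mathbf{\Tilde{A}} - \mathbf{A}$ gives $\mathbf{\Tilde{A}} = \mathbf{A} + \mathbf{E}$, and its explicit form in terms of the two consecutive Hankel batches $\mathcal{X}_t$, $\mathcal{X}_{t+1}$ and their pseudo-inverses is obtained by a direct (if tedious) computation from Algorithm~\ref{alg:dmd}, which I defer to Appendix~\ref{sec:theoretical_analysis_eig}. So the substance is in parts (i) and (ii).

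For part (i) I would apply the Bauer--Fike theorem. Since $\mathbf{A}$ is diagonalisable, write $\mathbf{A} = \mathbf{\Phi}\mathbf{\Lambda}\mathbf{\Phi}^{-1}$ with $\mathbf{\Lambda} = \diag(\lambda_1, \dots, \lambda_{pd})$ and $\sigma(\mathbf{A}) = \{\lambda_1, \dots, \lambda_{pd}\}$. Fix $\Tilde{\lambda} \in \sigma(\mathbf{\Tilde{A}})$; if $\Tilde{\lambda} \in \sigma(\mathbf{A})$ the left-hand side vanishes and there is nothing to prove, so assume $\Tilde{\lambda} \notin \sigma(\mathbf{A})$. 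Let $\Tilde{v}$ be a unit eigenvector of $\mathbf{\Tilde{A}}$ associated with $\Tilde{\lambda}$; then $(\mathbf{A} - \Tilde{\lambda}\mathbf{I})\Tilde{v} = -\mathbf{E}\Tilde{v}$, and since $\mathbf{A} - \Tilde{\lambda}\mathbf{I}$ is invertible, $\Tilde{v} = -(\mathbf{A} - \Tilde{\lambda}\mathbf{I})^{-1}\mathbf{E}\Tilde{v}$. Taking $2$-norms and using $\lVert\Tilde{v}\rVert_2 = 1$ gives $1 \le \lVert(\mathbf{A} - \Tilde{\lambda}\mathbf{I})^{-1}\rVert_2 \lVert\mathbf{E}\rVert_2$. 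Factoring $(\mathbf{A} - \Tilde{\lambda}\mathbf{I})^{-1} = \mathbf{\Phi}(\mathbf{\Lambda} - \Tilde{\lambda}\mathbf{I})^{-1}\mathbf{\Phi}^{-1}$, applying submultiplicativity, and noting that $\lVert(\mathbf{\Lambda} - \Tilde{\lambda}\mathbf{I})^{-1}\rVert_2 = \frac{1}{\min_{\lambda\in\sigma(\mathbf{A})}\lvert\Tilde{\lambda} - \lambda\rvert}$ (the $2$-norm of a diagonal matrix is its largest modulus entry) yields $1 \le \frac{\kappa_2(\mathbf{\Phi})\lVert\mathbf{E}\rVert_2}{\min_{\lambda\in\sigma(\mathbf{A})}\lvert\Tilde{\lambda} - \lambda\rvert}$, which rearranges to the claim.

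For part (ii) I would carry out the classical first-order eigenvector perturbation expansion. Here $\lambda_k$ is assumed simple (otherwise $\min_{j\ne k}\lvert\lambda_k - \lambda_j\rvert = 0$ and the statement is vacuous) and $\lVert\mathbf{E}\rVert_2$ is taken small relative to the spectral gap, so that $\mathbf{\Tilde{A}}$ still has a simple eigenvalue $\Tilde{\lambda}_k$ near $\lambda_k$ with eigenvector $\boldsymbol{\Tilde{\phi}}_k$ depending analytically on $\mathbf{E}$. Writing $\boldsymbol{\Tilde{\phi}}_k = \boldsymbol{\phi}_k + \boldsymbol{\phi}_k^{(1)} + \mathcal{O}(\lVert\mathbf{E}\rVert_2^2)$ and $\Tilde{\lambda}_k = \lambda_k + \mu_k^{(1)} + \mathcal{O}(\lVert\mathbf{E}\rVert_2^2)$, substituting into $\mathbf{\Tilde{A}}\boldsymbol{\Tilde{\phi}}_k = \Tilde{\lambda}_k\boldsymbol{\Tilde{\phi}}_k$ and collecting the first-order terms gives
\[
	(\mathbf{A} - \lambda_k\mathbf{I})\boldsymbol{\phi}_k^{(1)} = \mu_k^{(1)}\boldsymbol{\phi}_k - \mathbf{E}\boldsymbol{\phi}_k .
\]
Expanding $\boldsymbol{\phi}_k^{(1)} = \sum_j c_j \boldsymbol{\phi}_j$ in the eigenbasis and projecting onto the left eigenvectors $\boldsymbol{\psi}_j$ of $\mathbf{A}$, taken biorthonormal to the $\boldsymbol{\phi}_i$, isolates $c_j = \frac{\boldsymbol{\psi}_j^{*}\mathbf{E}\boldsymbol{\phi}_k}{\lambda_k - \lambda_j}$ for $j \ne k$, while the $\boldsymbol{\phi}_k$-component is pinned down (to first order) by the normalisation $\lVert\boldsymbol{\Tilde{\phi}}_k\rVert_2 = 1$. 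Bounding $\lvert c_j\rvert \le \frac{\lVert\mathbf{E}\boldsymbol{\phi}_k\rVert_2}{\min_{j\ne k}\lvert\lambda_k - \lambda_j\rvert} \le \frac{\lVert\mathbf{E}\rVert_2}{\min_{j\ne k}\lvert\lambda_k - \lambda_j\rvert}$ and collecting every remaining contribution into the quadratic remainder gives the stated inequality.

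The routine part is the norm bookkeeping; the delicate point is part (ii). Making the first-order expansion rigorous needs the smallness condition on $\lVert\mathbf{E}\rVert_2$ relative to the spectral gap, so that the simple eigenpair varies analytically and the remainder is genuinely $\mathcal{O}(\lVert\mathbf{E}\rVert_2^2)$; and for a general diagonalisable --- hence possibly non-normal --- $\mathbf{A}$, the clean gap-only bound strictly acquires a factor measuring the non-orthogonality of the eigenbasis when one passes from $\sum_{j\ne k}\lvert c_j\rvert^2$ to $\lVert\boldsymbol{\phi}_k^{(1)}\rVert_2^2$. I would therefore either state part (ii) under the extra assumption that $\mathbf{A}$ is normal --- in which case $\boldsymbol{\psi}_j = \boldsymbol{\phi}_j$ and $\lVert\boldsymbol{\phi}_k^{(1)}\rVert_2^2 = \sum_{j\ne k}\lvert c_j\rvert^2$ directly --- or absorb the $\mathcal{O}(1)$ conditioning factor into the implied constant of the higher-order term, referring to a standard text on matrix perturbation theory for the full derivation.
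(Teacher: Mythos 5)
Your overall strategy coincides with the paper's: both inequalities are obtained from classical perturbation theory, part (i) from the Bauer--Fike theorem and part (ii) from the first-order eigenvector expansion. The only difference is that you prove these two classical facts inline, whereas the paper simply invokes them as Lemma \ref{th:perturbation_eigenvalues} and Lemma \ref{th:perturbation_eigenvectors} (Property 5.5 of \cite{approx_eig}); your derivations are the standard proofs of those lemmas, so nothing is gained or lost there. The one place where you defer something the paper actually works out is the closed form of $\mathbf{E}$. You treat it as a ``tedious direct computation'' from Algorithm \ref{alg:dmd}; the paper instead obtains it by applying the windowed incremental SVD and DMD updates of \cite{incremental_svd_dmd} (Lemmas \ref{th:update_svd} and \ref{th:update_dmd}), which yield the rank-one form $\mathbf{E} = \left( \mathbf{x}_{t+1} - \mathbf{A}\mathbf{x}_t \right) \mathbf{v}_{\mathbf{\hat{S}}_{t,2}} \mathbf{\Sigma}^{-1}_{\mathbf{X}_{t+1}} \mathbf{U}^{\ast}_{\mathbf{X}_{t+1}}$. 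This is not merely cosmetic: the fact that $\mathbf{E}$ is proportional to the one-step prediction residual $\mathbf{x}_{t+1} - \mathbf{A}\mathbf{x}_t$ is the substantive content of the first claim and is what the paper's ``Consequences'' discussion rests on, so a complete proof needs this derivation (or the cited lemmas) rather than a placeholder. Finally, your caveat on part (ii) --- that for a diagonalisable but non-normal $\mathbf{A}$ the gap-only bound picks up a factor reflecting the non-orthogonality of the eigenbasis, or requires a smallness condition on $\lVert \mathbf{E} \rVert_2$ for the remainder to be genuinely $\mathcal{O}\left(\lVert \mathbf{E} \rVert_2^2\right)$ --- is a legitimate point of care that the paper sidesteps by citing Property 5.5 as stated; you are being more careful than the source here, not less.
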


\paragraph{Consequences.} 
Theorem \ref{th:bound_eigendecomposition} shows that, following our approach, within consecutive time steps in which we model the modes and dynamics of the input stream as pairs of eigenvalues and eigenvectors of the DMD operator, the recently estimated modes and dynamics are \textit{close} to the previous ones provided that the perturbation due to the sliding window and the new observed datum is \textit{small}. Indeed, the closed-form of $\mathbf{E}$ in Appendix \ref{sec:theoretical_analysis_eig} shows that the perturbation is proportional to the predictive error of the previous DMD model on the new snapshot. However, if the underlying generating process has not changed, the predictive accuracy remains \emph{high}, provided that parameters are set to minimise the error as discussed in Section \ref{sec:model_selection}. Therefore, when monitoring a sequence with no change in the underlying generating process, modes and dynamics extracted by the sequential DMD models are similar, leading to a \textit{stable} reconstruction error.

\subsection{Computational complexity}
\label{sec:complexity}

\begin{theorem}
	The computational complexity of Algorithm \ref{alg:single_changepoint} in the context of single changepoint detection is $\mathcal{O}(pd(w-d) \cdot \min \{ pd, w-d \})$, where $p$ is the dimension of the input stream, $w$ is the window length and $d$ denotes the auto-regressive order. 
	\label{th:complexity}
\end{theorem}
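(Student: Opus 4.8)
The plan is to bound the work done in one pass through the body of the \texttt{for} loop of Algorithm~\ref{alg:single_changepoint}, which in the online setting is the quantity of interest (the cost of processing the whole stream is then this per-step bound times the number of time steps). I would list the operations performed in a single iteration and argue that every one of them, apart from the call to \texttt{DMD}, is of lower or comparable order, so that the DMD step governs the complexity.

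First I would dispose of the inexpensive steps. Extracting the windowed batch $X_t\in\mathbb{R}^{p\times w}$ from the stream is a slicing operation, $\mathcal{O}(pw)$; assembling the block-Hankel matrix $\mathcal{X}_t\in\mathbb{R}^{pd\times(w-d+1)}$ via the rearrangement in Equation~\eqref{eq:block_hankel} writes $pd(w-d+1)$ entries, hence $\mathcal{O}(pd(w-d))$; \texttt{Unroll} reads off $pw$ entries of $\widehat{\mathcal{X}}_t$ along the prescribed path, $\mathcal{O}(pw)$; forming $\varepsilon_t=\frac{1}{pw}\lVert X_t-\widehat{X}_t\rVert_F^2$ is $\mathcal{O}(pw)$; and the increment $\delta_t$, the adaptive-EWMA update of $Z_t$, and the threshold test are $\mathcal{O}(1)$ scalar operations. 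Each of these is dominated by $\mathcal{O}\big(pd(w-d)\cdot\min\{pd,w-d\}\big)$.

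The heart of the argument is the cost of \texttt{DMD}$(\mathcal{X}_t\mid r)$ as given in Algorithm~\ref{alg:dmd}. Its dominant step is the thin (truncated) singular value decomposition of the first snapshot matrix, which has shape $pd\times(w-d)$; by the standard complexity of SVD this costs $\mathcal{O}\big(pd(w-d)\min\{pd,w-d\}\big)$. I would then verify that no later operation exceeds this: projecting the second snapshot matrix onto the $r$ leading left singular vectors to build the reduced operator $\widetilde{\mathbf{A}}\in\mathbb{C}^{r\times r}$ is a chain of matrix products costing $\mathcal{O}(pd(w-d)r)$; its eigendecomposition is $\mathcal{O}(r^3)$; lifting the reduced eigenvectors to the DMD modes $\mathbf{\Phi}_t\in\mathbb{C}^{pd\times r}$ and solving the least-squares system for the mode amplitudes are each $\mathcal{O}(pd(w-d)r)$; and reconstructing $\widehat{\mathcal{X}}_t$ from $\mathbf{\Phi}_t$, the dynamics $\mathbf{\Omega}_t$ and the amplitudes is $\mathcal{O}(pd(w-d)r)$. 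Invoking the rank constraint $r\le\min\{pd,w-d+1\}$, all of these collapse to $\mathcal{O}\big(pd(w-d)\min\{pd,w-d\}\big)$, so the SVD is the bottleneck and the DMD call has this complexity.

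Combining the two groups of bounds, one iteration costs $\mathcal{O}\big(pd(w-d)\cdot\min\{pd,w-d\}\big)$, which is exactly the claimed bound. The main obstacle is the bookkeeping inside the DMD subroutine: one has to confirm that the truncated SVD, and not the eigendecomposition of $\widetilde{\mathbf{A}}$ or any of the reconstruction matrix products, is the dominant term, and it is precisely here that the hypothesis $r\le\min\{pd,w-d+1\}$ is needed. It is also worth remarking that consecutive windows $X_t$ and $X_{t+1}$ differ in only one column, so incremental SVD techniques could reduce the per-step cost further; the stated bound reflects recomputing the decomposition from scratch at each step and makes no use of this overlap.
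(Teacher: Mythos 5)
Your proposal is correct and follows essentially the same route as the paper: the per-iteration cost is dominated by the DMD call, whose bottleneck is the SVD of the $pd\times(w-d)$ lagged snapshot matrix at cost $\mathcal{O}(pd(w-d)\min\{pd,w-d\})$, with the eigendecomposition, mode lifting, and reconstruction all absorbed via the constraint $r\le\min\{pd,w-d+1\}$, and the Hankel formatting and EWMA updates being lower order (the paper isolates the DMD accounting into a separate lemma, but the content is identical). Your closing remark about exploiting the one-column overlap between consecutive windows via incremental SVD is a sensible observation not made in the paper's proof, though the paper does use that incremental machinery elsewhere for its perturbation analysis.
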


Theorem \ref{th:complexity} shows that complexity scales linearly with the dimension, in addition to that our proposed approach is efficient and its complexity is simply determined by its parameters. Figure \ref{fig:complexity} in Appendix \ref{sec:theoretical_analysis_complexity} illustrates how the complexity of our proposed method for single changepoint detection scales with varying dimension, window length and auto-regressive order.  Further discussion on complexity can be found in Appendix \ref{sec:theoretical_analysis_complexity}, particularly Corollary \ref{th:complexity_model_selection} on hyperparameter selection.

%%%%%%%%%%%%%%%%%%%%%%%%%%%%%%%%%%%%%%%%%%%%%%%%

\section{Experiments and simulations}
\label{sec:experiments_simulations}

Algorithm \ref{alg:multiple_changepoints_with_model_selection} in Appendix \ref{sec:appendix_cpdmd}  describes our proposed method with auto-adaptive parameters combining Algorithms \ref{alg:single_changepoint} and \ref{alg:model_selection} to process data streams with multiple changepoints. We conduct a comparative evaluation of the CPDMD algorithm against six established changepoint detection methods: EWMA \cite{ewma}, EWMVar \cite{ewmvar}, RuLSIF \cite{rulsif}, BOCPDMS \cite{bocpdms}, mSSA and its variant mSSA-MW \cite{mssa}. A comparison of these selected algorithms capabilities can be found in Table \ref{tab:methods_capabilities} in Appendix \ref{sec:synthetic_models_parameters}. We do not compare to \cite{dmd_changepoint_1, dmd_changepoint_2} cited in Section~\ref{sec:relatedwork}; those works focused on developing DMD methodology in which changepoint detection was only considered as a secondary application, without a comprehensive performance assessment, and implementations of these methods are unavailable.

The simulation study presented in Section \ref{sec:experiments_synthetic} focuses on online changepoint detection in synthetic univariate time series that may exhibit trend and seasonality, with a single change in the data generating mechanism. We focus our simulation study on the single changepoint setting, which is a common approach \cite{ross_cpm, single_cp_chapter}, as poor performance in this simpler case is indicative of potential struggles with the more challenging multiple changepoint scenario, and the latter may be addressed, among other approaches, by iteratively restarting a performing single changepoint detector. 
We reserve comprehensive multivariate simulations for future work in the interest of  conciseness, since 
changepoint detection of multivariate data is more complex with a wider variety of change types. However, since the problem of multiple changepoint detection in multivariate data remains important across various application domains, we demonstrate the effectiveness of CPDMD on real-world datasets in Section~\ref{sec:experiments_real_world}.

\begin{figure}[t]
	\centering
	\includegraphics[width=\linewidth]{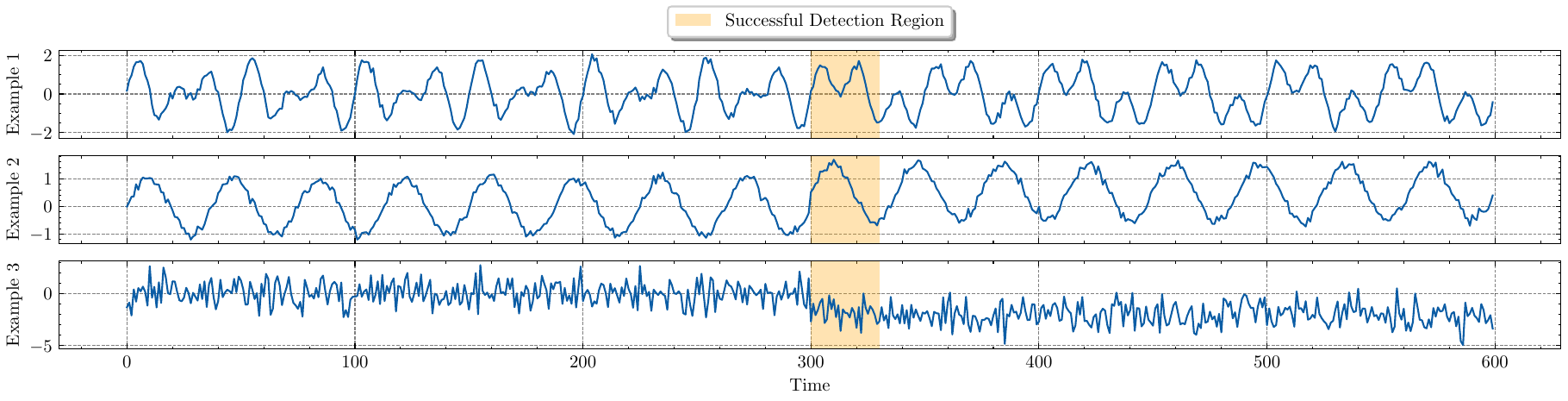}
	\caption{Three examples of sequences considered in synthetic data simulations.}
	\label{fig:illustration_double_proxy}
\end{figure}

\subsection{Synthetic data}
\label{sec:experiments_synthetic}

\paragraph{Types of change.}
We evaluate algorithms performance on seasonal univariate streams containing a specific type of change: periodicity, location, amplitude, trend, and double periodicity. We also consider streams that do not present any seasonality, with changes in the mean or variance. See Figure~\ref{fig:illustration_double_proxy} for examples of these sequence types and Appendix \ref{sec:data_generation_process} for additional details.

\paragraph{Simulation settings.}
Each algorithm requires parameters to be set. To reduce any bias induced by parameters specification, we follow the recommendations from \cite{cpd_evaluation} by reporting the performance obtained with default parameters, along with the best performance from a grid of parameters which is defined for each selected algorithm in Appendix \ref{sec:synthetic_models_parameters}. 

\paragraph{Performance metrics.}
We assess performance via well-established online changepoint detection performance metrics: precision $P$, recall $R$, $F1$-Score and Average Run Lengths $\text{ARL}_0$ and $\text{ARL}_1$. We provide definitions of these metrics in Appendix \ref{sec:background}. 
The standard errors for $P$, $R$, and $F_1$, computed via a bootstrap procedure, and $\text{SDRL}_0$ and $\text{SDRL}_1$ are shown in parentheses.
An effective algorithm would aim to maximise the $F1$-Score and $\text{ARL}_0$, while minimising $\text{ARL}_1$.

\paragraph{Results.}
Table \ref{tab:performance_synthetic_without_bocpdms} shows that CPDMD significantly outperforms every other considered method in terms of precision, recall and  $F_1$-Score for both best and default parameters, while the corresponding values of $\text{ARL}_0$ and $\text{ARL}_1$ are often ranked among the two best ones. However, since no other method achieves similar $F_1$-Scores, the interpretation is more complex and strictly comparing raw values would make the comparison of $\text{ARLs}$ biased. This is indeed well shown and complemented by Figure \ref{fig:simulations_global_without_bocpdms} which provides a better overview of the performance of each algorithm across all tested sets of parameters. CPDMD reaches the highest values of $F_1$-Score for the lowest values of $\text{ARL}_1$. Moreover, given a desired sensitivity defined by a desired $\text{ARL}_0$, simulations show that there always exists a CPDMD model which obtains the best $F_1$-Score. Another important aspect of changepoint detection algorithms is the scaling law between $\text{ARL}_0$ and $\text{ARL}_1$, and our proposed approach shows among the best scaling laws, i.e. given any desired value of $\text{ARL}_0$, it has among the lowest $\text{ARL}_1$. CPDMD is also shown to be robust to parameters selection, with close performance of default and best parameters, along with no widely scattered points in Figure \ref{fig:simulations_global_without_bocpdms}. Our proposed approach performance is followed by EWMA and  RuLSIF, while the latter was originally proposed in retrospective changepoint detection and could be adapted to sequential context using an \emph{a priori unknown} threshold as the decision rule. Note that due to its increased computational complexity on long sequences, the $\text{ARL}_0$ of BOCPDMS is not estimated: however, precision, recall, $F_1$-Score and $\text{ARL}_1$ of the BOCPDMS algorithm on synthetic data is detailed in Appendix \ref{sec:global_with_bocpdms}. Results per change type are provided in Appendix \ref{sec:experiment_details_synthetic}.

\begin{table}[t]
	\centering
	\smaller
	\caption{Evaluation of selected changepoint detection algorithms on synthetic data, comparing performance achieved with the parameter set yielding highest $F_1$-Score (\textit{best}) against default parametrisation (\textit{default}). We highlight the top two performance metrics for each set of parameters in bold.}
	\begin{tabular}{l l r@{\hskip\tabcolsep} >{\smaller}r@{\hskip\tabcolsep} r@{\hskip\tabcolsep} >{\smaller}r@{\hskip\tabcolsep} r@{\hskip\tabcolsep} >{\smaller}r@{\hskip\tabcolsep} r@{\hskip\tabcolsep} >{\smaller}r@{\hskip\tabcolsep} r@{\hskip\tabcolsep} >{\smaller}r@{\hskip\tabcolsep}}
		\toprule
		\textbf{Algorithm} & \textbf{Params.} & \multicolumn{2}{c}{$\boldsymbol{P}$} & \multicolumn{2}{c}{$\boldsymbol{R}$} & \multicolumn{2}{c}{$\boldsymbol{F_1}$} & \multicolumn{2}{c}{$\mathbf{ARL_1}$} & \multicolumn{2}{c}{$\mathbf{ARL_0}$} \\
		\midrule
		\multirow{2}{*}{EWMA} & Best & \textbf{.669} & (.004) & \textbf{.474}  & (.003) & \textbf{.555} & (.003) & 9.78 & (7.44) & \textbf{71085.25} & (41976.94) \\
		& Default & .188 & (.003) & .181 & (.003) & .185 & (.003) & \textbf{5.88} & (5.11) & 4261.15 & (18156.77) \\
		\hline
		\multirow{2}{*}{EWMVar} & Best & .275 & (.005) & .131 & (.002) & .178 & (.003) & 10.96 & (6.86) & \textbf{71202.77} & (45092.24) \\
		& Default & \textbf{.452} & (.009) & .089 & (.002) & .148 & (.003) & 12.36 & (7.40) & \textbf{99900.00} & (0.00) \\
		\hline
		\multirow{2}{*}{RuLSIF} & Best & .537 & (.013) & .414 & (.010) & .468 & (.011) & 19.01 & (6.80) & 41934.74 & (49177.96) \\
		& Default & .448 & (.011) & \textbf{.364} & (.010) & \textbf{.401} &(.010)  & 23.58 & (4.61) & 57147.14 & (49746.53) \\
		\hline
		\multirow{2}{*}{mSSA} & Best & .475 & (.005) & .266 & (.003) & .341 & (.004) & \textbf{6.91} & (6.09) & 43786.39 & (47153.54) \\
		& Default & .304 & (.003) & .210 & (.002) & .248 & (.003) & \textbf{6.28} & (5.80) & 27023.57 & (42587.84) \\
		\hline
		\multirow{2}{*}{mSSA-MW} & Best & .443 & (.005) & .313 & (.004)& .367 & (.004) & 8.79 & (5.87) & 13834.54 & (32965.67) \\
		& Default & .043 & (.001) & .041 & (.001) & .042 & (.001) & 9.11 & (7.29) & 8210.28 & (20999.07) \\
		\hline
		\multirow{2}{*}{\textbf{CPDMD}} & Best & \textbf{.960} & (.001) & \textbf{.902} & (.002) & \textbf{.930} & (.002) & \textbf{7.11} & (6.14) & 4034.30 & (3997.10) \\
		& Default & \textbf{.978} & (.001) & \textbf{.807} & (.003) & \textbf{.884} & (.002) & 9.53 & (6.47) & \textbf{58341.28} & (38161.07) \\
		\bottomrule
	\end{tabular}
	\label{tab:performance_synthetic_without_bocpdms}
\end{table}

\begin{figure}[t]
	\centering
	\includegraphics[width=\linewidth]{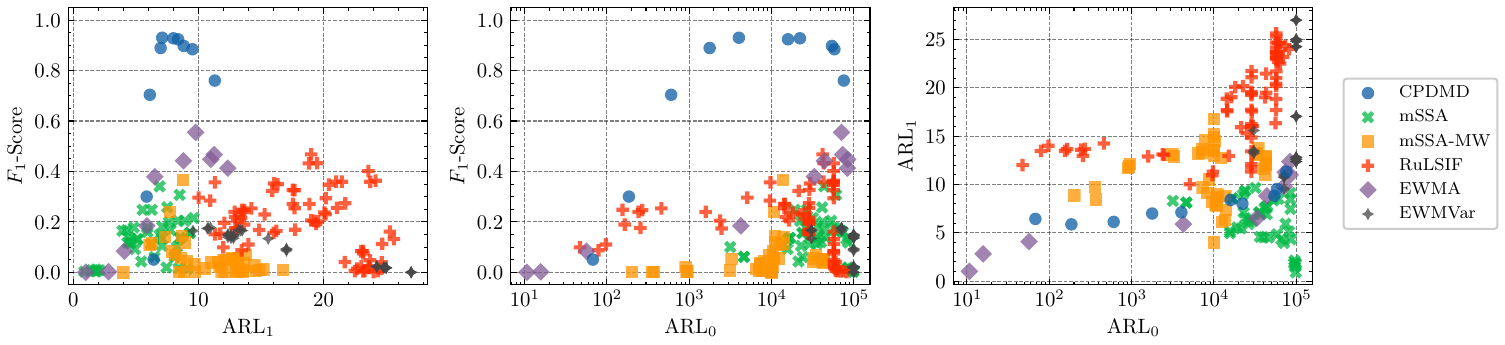}
	\caption{Performance comparison of changepoint detection algorithms on synthetic data across all scenarios. Each dot corresponds to an algorithm with a specific choice of parameter values.}
	\label{fig:simulations_global_without_bocpdms}
\end{figure}

\subsection{Real-world data}
\label{sec:experiments_real_world}

\paragraph{Datasets.}
We consider three real-world datasets from various application domains. The HASC dataset \cite{hasc} is an activity recognition dataset, consisting of three-dimensional acceleration measurements from wearable devices, in which the goal is to detect when a user switches from an activity to another. The Digits dataset consists of sequences of $8 \times 8$ greyscale images extracted from \cite{digits}, where images of the same digit are displayed for a specific duration before switching to a different digit. The Yahoo Webscope S5 dataset \cite{yahoo} is a collection of univariate time series extracted from real-world web traffic. We provide additional description of these datasets in Appendix \ref{sec:datasets}.

\paragraph{Simulation settings.}
Similarly to simulations on synthetic data, we evaluate the performance of each selected algorithm on a grid of parameters with consistency across burn-in periods of each algorithm and dataset. This grid is detailed in Appendix \ref{sec:appendix_real_world_parameters}. Note that both EWMA and EWMVar are univariate methods, and there may exist many different ways to use those on multivariate streams. In this work, we adapt them to the multivariate datasets by running separate instances per component and taking the union of detected changepoints across all components.

\paragraph{Performance metrics.}
We assess the empirical performance of each algorithm via $F_1$-Score and covering metric $\mathcal{C}$, both defined in Appendix \ref{sec:background}. Since changepoint locations in real-world data  are the result of human annotations, shifts may be detected slightly before their location times, making the computation of $\text{ARL}_1$ intractable. Instead, we use the covering metric, often used in offline changepoint detection as a good measure of the quality of a segmentation with multiple changepoints.

\paragraph{Results.}
Table \ref{tab:performance_real_world} demonstrates superior overall performance of CPDMD compared to other methods. CPDMD consistently ranks among the top two methods across the two considered metrics. Similar to the synthetic data simulations, this experiment suggests that CPDMD is robust to parameter selection, achieving near-optimal $F_1$-Scores with default parameters, particularly on the HASC and Yahoo datasets. On the Digits dataset, CPDMD outperforms all other algorithms, indicating its suitability for high-dimensional datasets. While some methods, such as mSSA-MW on the HASC dataset and mSSA on the Yahoo dataset, show comparable performance, no single method matches CPDMD's overall effectiveness. RuLSIF's performance cannot be computed for the Digits dataset due to matrix inversion issues, likely due to high dimensionality and low variability of certain pixels. Overall, the performance gap between CPDMD and other methods is smaller on real-world data compared to synthetic data. This may be due to CPDMD's sensitivity to subtle changes, which might not be labelled as changepoints in the datasets, increasing false positives. Additionally, annotation fluctuations, especially in the HASC dataset, might cause delays in labelling actual changes. Lastly, results on the Yahoo dataset demonstrate CPDMD's efficiency for both change and anomaly detection.

\begin{table}[t]
	\centering
	\smaller
	\caption{Evaluation of selected changepoint detection algorithms on real-world data, comparing performance achieved with the parameter set yielding highest $F_1$-Score (\textit{best}) against default parametrisation (\textit{default}). We highlight the top two performance metrics for each set of parameters in bold.}
	\begin{tabular}{l l r@{\hskip\tabcolsep} >{\scriptsize}r@{\hskip\tabcolsep} r@{\hskip\tabcolsep} >{\scriptsize}r@{\hskip\tabcolsep} r@{\hskip\tabcolsep} >{\scriptsize}r@{\hskip\tabcolsep} r@{\hskip\tabcolsep} >{\scriptsize}r@{\hskip\tabcolsep} r@{\hskip\tabcolsep} >{\scriptsize}r@{\hskip\tabcolsep} r@{\hskip\tabcolsep} >{\scriptsize}r@{\hskip\tabcolsep}}
		\toprule
		\multirow{2}{*}{\textbf{Algorithm}} & \multirow{2}{*}{\textbf{Params.}} & \multicolumn{4}{c}{\textbf{HASC}} & \multicolumn{4}{c}{\textbf{Digits}} &  \multicolumn{4}{c}{\textbf{Yahoo}} \\
		& & \multicolumn{2}{c}{$\boldsymbol{F_1}$} & \multicolumn{2}{c}{$\boldsymbol{\mathcal{C}}$}  & \multicolumn{2}{c}{$\boldsymbol{F_1}$} & \multicolumn{2}{c}{$\boldsymbol{\mathcal{C}}$} & \multicolumn{2}{c}{$\boldsymbol{F_1}$} & \multicolumn{2}{c}{$\boldsymbol{\mathcal{C}}$} \\ 
		\midrule
		\multirow{2}{*}{EWMA} & Best & .347 & (.086) & .470 & (.081) & .067 & (.005) & .329 & (.044) & .454 & (.300) & .593 & (.237)  \\
		& Default & .216 & (.030) & .400 & (.072) & .025 & (.001) & .129 & (.017) & .240 & (.248) & .428 & (.183)  \\
		\midrule
		\multirow{2}{*}{EWMVar} & Best & .460 & (.069) & \textbf{.638} & (.077) & .091 & (.006) & .262 & (.029) & .442 & (.365) & .700 & (.240)  \\
		& Default & .384 & (.039) & \textbf{.615} & (.059) & .053 & (.004) & .183 & (.018) & .380 & (.358) & .642 & (.255)  \\
		\hline
		\multirow{2}{*}{BOCPDMS} & Best & .209 & (.121) & .466 & (.096) & .037 & (.001) & .036 & (.002) & .096 & (.217) & .616 & (.273)  \\
		& Default & .194 & (.114) & .456& (.089) & .013 & (.002) & .032 & (.004) & .080 & (.195) & .610 & (.273)  \\
		\hline
		\multirow{2}{*}{RuLSIF} & Best & .420 & (.080) & .512 & (.055) & N/A & (N/A) & N/A & (N/A) & .247 & (.352) & .573 & (.221)  \\
		& Default & .267 & (.095) & \textbf{.563} & (.073) & N/A & (N/A) & N/A & (N/A) & .232 & (.361) & .614 & (.234)  \\
		\hline
		\multirow{2}{*}{mSSA} & Best & .213 & (.169) & .260 & (.145) & .073 & (.162) & .265 & (.096) & \textbf{.540} & (.325) &\textbf{.710} & (.212)  \\
		& Default & .183 & (.126) & .239 & (.137) & .073 & (.162) & .267 & (.090) & \textbf{.540} & (.325) & \textbf{.710} & (.212)  \\
		\hline
		\multirow{2}{*}{mSSA-MW} & Best & \textbf{.486} & (.096) & .586 & (.046) & \textbf{.206} & (.200) & \textbf{.555} & (.121) & .403 & (.298) & .578 & (.211)  \\
		& Default & \textbf{.403} & (.086) & .557 & (.063) & \textbf{ .145} & (.130) & \textbf{.592} & (.106) & .399 & (.296) & .548 & (.224)  \\
		\hline
		\multirow{2}{*}{\textbf{CPDMD}} & Best & \textbf{.499} & (.077) & \textbf{.608} & (.065) & \textbf{.645} & (.274) & \textbf{.657} & (.148) & \textbf{.526} & (.332) & \textbf{.726} & (.199)  \\
		& Default & \textbf{.476} & (.109) & .539 & (.092) & \textbf{.483} & (.296) & \textbf{.484} & (.200) & \textbf{.526} & (.332) & \textbf{.726} & (.199)  \\
		\bottomrule
	\end{tabular}
	\label{tab:performance_real_world}
\end{table}

%%%%%%%%%%%%%%%%%%%%%%%%%%%%%%%%%%%%%%%%%%%%%%%%

\section{Conclusion}
\label{sec:conclusion}
In this paper we propose CPDMD, a nonparametric data-driven approach that leverages DMD to detect changepoints in multivariate streaming data for a variety of change types. This method has excellent performance on both synthetic and real-world datasets.

\paragraph{Limitations.} Since CPDMD detects a change in reconstruction error, it is unclear what type of change actually occurs; monitoring the modes and dynamics directly could provide insight into the type of change. The hyperparameter selection approach is based on a grid search, which is performant but could potentially be improved. One potential improvement could be to replace the adaptive EWMA with a method better suited to detect changes in the reconstruction error, although we did not find a better method from empirical results. These research areas are left for future work.

\section{Acknowledgements}
Victor Khamesi is funded by a Roth Scholarship from the Department of Mathematics, Imperial College London. Ed Cohen acknowledges funding from the EPSRC, grant number EP/X002195/1.

\clearpage

%%%%%%%%%%%%%%%%%%%%%%%%%%%%%%%%%%%%%%%%%%%%%%%%%%%%%%%%%%%%

\normalsize
\appendix

\newpage
\section{Background}
\label{sec:background_appendix}

\begin{definition}[EWMA algorithm \cite{ewma}]
	Consider a univariate stream $y_1, y_2, \dots$ with mean $\mu$ and variance $\sigma^2$. Let $\lambda \in [0, 1]$ denote the learning rate and $L \in \mathbb{R^{\ast}_{+}}$ a control parameter for the algorithm's sensitivity. Let $Z_0 = \mu$ and, for $t \in \mathbb{N}^{\ast}$
	\begin{equation*}
		Z_t \vcentcolon = (1 - \lambda) Z_{t-1} + \lambda y_t.
	\end{equation*}
	Then, the standard deviation of $Z_t$ is 
	\begin{equation*}
		\sigma_{Z_t} = \sigma \sqrt{\frac{\lambda}{2-\lambda} \left( 1 - (1 - \lambda)^{2t} \right)},
	\end{equation*}
	and	a change is detected when either $Z_t > \mu + L \sigma_{Z_t}$ or $Z_t < \mu - L \sigma_{Z_t}$.
	\label{def:ewma}
\end{definition}

We introduce in Definition \ref{def:adaptive_ewma} a slightly modified version of the original EWMA algorithm introduced by \cite{ewma} where the mean and variance of the input stream are sequentially updated rather than being set to estimates on historical data. 

\begin{definition}[Adaptive EWMA algorithm]
	Consider a univariate stream $y_1, y_2, \dots$ with both unknown mean $\mu$ and variance $\sigma^2$. Let $\lambda \in [0, 1]$ denote the learning rate and $L \in \mathbb{R^{\ast}_{+}}$ a control parameter for the algorithm's sensitivity. Let $Z_1 = y_1$ and, for $t \geq 2$,
	\begin{equation*}
		Z_t \vcentcolon = (1 - \lambda) Z_{t-1} + \lambda y_t.
	\end{equation*}
	Let $\mu_t$ and $\sigma_t^2$ denote sequential estimates of the input mean and variance respectively computed as in Lemma \ref{prop:efficient_update}. The standard deviation of $Z_t$ is computed as
	\begin{equation*}
		\sigma_{Z_t} =  \sigma_t \sqrt{\frac{\lambda}{2-\lambda} \left( 1 - (1 - \lambda)^{2t} \right)},
	\end{equation*}
	and a change is detected when either $Z_t > \mu_t + L \sigma_{Z_t}$ or $Z_t < \mu_t - L \sigma_{Z_t}$.
	\label{def:adaptive_ewma}
\end{definition}

\begin{definition}[Average run lengths \cite{cusum}]
	Consider a stream with a changepoint located at time $\tau$, and assume the stream is monitored by an online changepoint detection algorithm which infers the first changepoint of the sequence, denoted as $\widehat{\tau}$. Then, the Average Run Length 0 ($\text{ARL}_0$) is computed as the average number of observations until a changepoint is detected, when the algorithm is run over a sequence with no changepoint, i.e.
	\begin{equation*}
		\text{ARL}_0 \vcentcolon = \mathbb{E}\left[ \widehat{\tau} \mid \tau \to \infty \right]. 
	\end{equation*}
	The Average Run Length 1 ($\text{ARL}_1$) is computed as the average number of observations between a changepoint location and it is being detected, i.e.
	\begin{equation*}
		\text{ARL}_1 \vcentcolon = \mathbb{E}\left[ \widehat{\tau} - \tau \mid \widehat{\tau} \geq \tau \right]. 
	\end{equation*}
	Hence, an effective algorithm should aim to maximise the $\text{ARL}_0$ while minimising the $\text{ARL}_1$.
\end{definition}

In practice, for a given algorithm, we estimate the $\text{ARL}_1$ by running simulations on multiple data streams with a changepoint and obtaining for each sequence the detection delay, conditioning on correct detections and therefore excluding false detections from the computation. Thus, the $\text{ARL}_1$ is computed by averaging these delay values, and the corresponding standard deviation $\text{SDRL}_1$ (standard deviation run length) is obtained by taking the sample standard deviation of these run lengths. Similarly, we estimate the $\text{ARL}_0$ by running a given algorithm on multiple \textit{sufficiently long} sequences without changepoints and obtain the corresponding run lengths as the first time to false detection. Then, the $\text{ARL}_0$ is computed by averaging these run lengths and its corresponding $\text{SDRL}_0$ by estimating the standard deviation of the run lengths. In practice, we may need to truncate the time series length for some algorithms with parameters leading to low sensitivity (see Table \ref{tab:simulations_sequence_samples}).

\begin{definition}[Precision, recall, and $F_1$-Score \cite{pelt, cpd_evaluation}] \label{def:f1}
	Consider a sequence with $n$ changepoints denoted as $\tau_1, \tau_2, \dots, \tau_n$ and $m$ detected changepoints $\widehat{\tau}_1, \widehat{\tau}_2, \dots, \widehat{\tau}_m$. The set of correctly detected changepoints, or set of true positives, is defined as 
	\begin{equation*}
		\mathcal{TP} \vcentcolon = \left\{ \tau_i \mid \exists ~ \widehat{\tau}_j : \mu_l \leq \widehat{\tau}_j - \tau_i \leq \mu_r \right\},
	\end{equation*}
	where $\mu_l \geq 0$ is a left margin and $\mu_r \geq 0$ is a right margin, defining the detection acceptance region around the true changepoint. 
	
	As in classical binary classification settings, the performance of a changepoint detection algorithm may be expressed in terms of precision $P$, defined as the ratio of correctly detected changepoints over the number of detected changepoints, and recall $R$, defined as the ratio of correctly detected changepoints over the number of true changepoints, i.e.
	\begin{equation*}
		P \vcentcolon = \frac{\operatorname{card}(\mathcal{TP})}{m} \text{ and } R \vcentcolon = \frac{\operatorname{card}(\mathcal{TP})}{n}.
	\end{equation*}
	The corresponding $F_1$-Score is defined as
	\begin{equation*}
		F_1 \vcentcolon = \frac{2 P R}{P + R}
	\end{equation*}
\end{definition}

\begin{remark}
	The left margin accomodates the detection of a changepoint preceeding the closest true annotated changepoint, particularly to account for fluctuating annotations decisions in real-world data. The right margin extends the acceptance region on the right, defining the acceptable range within which a detected change can be reasonably associated with the most recent true change. 
\end{remark}

\begin{definition}[Covering \cite{covering, cpd_evaluation}]
	Consider a sequence of length $T$ with true changepoints defined by the ordered set $\mathcal{T} = \{ \tau_1, \tau_2, \dots, \tau_n \}$ with $\tau_i \in [1, T]$ for $i \in \{ 1, 2, \dots, n \}$ and $\tau_i < \tau_j$ for $i < j$. Then, $\mathcal{T}$ implies a partition $\mathcal{G}$ of the interval $[1, T]$ into disjoint sets $\mathcal{A}_j$, where $\mathcal{A}_j$ is the segment from $\tau_{j-1}$ to $\tau_j - 1$, for $j \in \{ 1, 2, \dots, n+1 \}$, with convention $\tau_0 = 1$ and $\tau_{n+1} = T+1$. 
	
	For two sets $\mathcal{A}, \mathcal{A}^{\prime} \subseteq [1, T]$, the Jaccard index also known as Intersection over Union, is given by
	\begin{equation*}
		J(\mathcal{A}, \mathcal{A}^{\prime}) \vcentcolon = \frac{\operatorname{card}(\mathcal{A} \cap \mathcal{A}^{\prime})}{\operatorname{card}(\mathcal{A} \cup \mathcal{A}^{\prime})}
	\end{equation*} 
	For a sequence of length $T$, given a ground truth partition $\mathcal{G}$ and a predicted partition $\mathcal{G}^{\prime}$, the covering metric is defined as 
	\begin{equation*}
		\mathcal{C}(\mathcal{G}, \mathcal{G}^{\prime}) \vcentcolon = \frac{1}{T} \sum_{\mathcal{A} \in \mathcal{G}} \operatorname{card}(\mathcal{A}) \cdot \max_{\mathcal{A}^{\prime} \in \mathcal{G}^{\prime}} J(\mathcal{A}, \mathcal{A}^{\prime}). 
	\end{equation*}
\end{definition}

We rely on coding implementations of $F_1$-Score and covering provided by the Alan Turing Institute GitHub repository \href{https://github.com/alan-turing-institute/TCPDBench}{\small \texttt{https://github.com/alan-turing-institute/TCPDBench}} and detailed in \cite{cpd_evaluation}.

\newpage
\section{Algorithms pseudo-code}
\label{sec:pseudo_code}

\subsection{Dynamic mode decomposition}
\label{sec:pseudo_code_dmd}

We provide in Algorithm \ref{alg:dmd} a pseudo-code for the dynamic mode decomposition algorithm as introduced by \cite{dmdschmid} based on singular value decomposition. 

\begin{algorithm}[H]
	\caption{$\{ \mathbf{\widehat{x}}, \mathbf{\Phi}, \mathbf{\Omega} \} \gets \texttt{DMD}(\mathbf{x} \mid r)$}\label{alg:dmd}
	
	\nonl \bluecode{/* Dynamic Mode Decomposition */} 
	
	\KwIn{Snapshots $\mathbf{x} = [ \mathbf{x}_1, \mathbf{x}_2, \dots, \mathbf{x}_m ] \in \mathbb{R}^{p \times m}$} 
	\KwOut{Low-rank reconstruction $\mathbf{\widehat{x}} \in \mathbb{R}^{p \times m}$, modes $\mathbf{\Phi} \in \mathbb{C}^{p \times r}$, dynamics $\mathbf{\Omega} \in \mathbb{C}^{r \times r}$}
	\KwParam{SVD rank $r \leq \min{(p, m)}$}
	
	\bluecode{/* Lagged matrices (Equation \eqref{eq:snapshots_lagged}) */} \\
	$\mathbf{X} \gets [ \mathbf{x}_1, \mathbf{x}_2, \dots, \mathbf{x}_{m-1} ]$ \\
	$\mathbf{Y} \gets [ \mathbf{x}_2, \mathbf{x}_3, \dots, \mathbf{x}_{m} ]$ \\
	\bluecode{/* Singular Value Decomposition */} \\
	$\mathbf{U}, \mathbf{\Sigma}, \mathbf{V} \gets \texttt{SVD}(\mathbf{X})$ where  $\mathbf{X} \simeq \mathbf{U} \mathbf{\Sigma} \mathbf{V}^{\ast}$, $\mathbf{U} \in \mathbb{C}^{p \times r}$, $\mathbf{\Sigma} \in \mathbb{C}^{r \times r}$ and $\mathbf{V} \in \mathbb{C}^{(m - 1) \times r}$\\
	\bluecode{/* Operator projected onto principal modes */} \\
	$\mathbf{\Tilde{A}}_r \gets \mathbf{U}^{\ast} \mathbf{Y} \mathbf{V} \mathbf{\Sigma}^{-1}$ \\
	\bluecode{/* Eigendecomposition */} \\
	$ \mathbf{\Lambda}, \mathbf{W} \gets \texttt{eig}(\mathbf{\Tilde{A}}_r)$ where $\mathbf{\Tilde{A}}_r \mathbf{W} = \mathbf{W} \mathbf{\Lambda}$, $\mathbf{W} \in \mathbb{C}^{r \times r}$ and $\mathbf{\Lambda} = \diag(\lambda_1, \lambda_2, ..., \lambda_r) \in \mathbb{C}^{r \times r}$ \\
	\bluecode{/* Dynamics, modes and amplitudes */} \\
	$\mathbf{\Omega} \gets  \diag(\omega_1, \omega_2, ..., \omega_r)$ where $\omega_j = \log{(\lambda_j)} / \Delta t$ for $j \in \{ 1, 2, \dots, r \}$ \\
	$\mathbf{\Phi} \gets \mathbf{Y} \mathbf{V} \mathbf{\Sigma}^{-1} \mathbf{W}$ \\
	$\mathbf{b} \gets \mathbf{\Phi}^{\dagger} \mathbf{x}_1$ \\
	\bluecode{/* Reconstruction */} \\
	\For{$k \in \{ 1, 2, \dots, m\}$}{
		$\mathbf{\widehat{x}}_k \gets \mathbf{\Phi} \exp{(\mathbf{\Omega} (k-1) \Delta t)} \mathbf{b}$
	}
	$\mathbf{\widehat{x}} \gets [ \widehat{\mathbf{x}}_1, \widehat{\mathbf{x}}_2, \dots, \widehat{\mathbf{x}}_m ]$ \\
	\Return{$\{ \mathbf{\widehat{x}}, \mathbf{\Phi}, \mathbf{\Omega} \}$}
	
\end{algorithm}

\paragraph{Dynamic mode decomposition algorithm.}

Assume one monitors a dynamical system and collects $m$ observations $\mathbf{x}_1, \mathbf{x}_2, \dots, \mathbf{x}_m \in \mathbb{R}^p$. Observations are compiled in two distinct lagged data sets $\mathbf{X}, \mathbf{Y} \in \mathbb{R}^{p \times (m-1)}$
\begin{equation}
	\mathbf{X} = \left[ \begin{array}{cccc}
		\mathbf{x}_1 & \mathbf{x}_2 & \cdots & \mathbf{x}_{m-1} \\
	\end{array} \right], ~ \mathbf{Y} = \left[ \begin{array}{cccc}
		\mathbf{x}_2 & \mathbf{x}_3 & \cdots & \mathbf{x}_m \\
	\end{array} \right].
	\label{eq:snapshots_lagged}
\end{equation}
Then, DMD consists in finding the \emph{best-fit} linear operator $\mathbf{A} \in \mathbb{R}^{p \times p}$ such that $\mathbf{Y} = \mathbf{A} \mathbf{X}$. In fact, the optimal solution is defined as $\mathbf{\Tilde{A}} = \mathbf{Y} \mathbf{X}^{\dagger}$ where $\mathbf{X}^{\dagger}$ denotes the Moore-Penrose pseudo-inverse of $\mathbf{X}$. This regression problem may be solved using classical least-squares optimisation. However, DMD rather estimates a low-rank eigendecomposition of the linear operator $\mathbf{\Tilde{A}}$, since the complete estimation of $\mathbf{\Tilde{A}} \in \mathbb{R}^{p \times p}$ may become intractable for values of $p \gg 1$ \cite{dmdschmid}. Hence, DMD solves the following optimisation problem
\begin{equation*}
	\mathbf{\Tilde{A}}_r = \argmin_{\mathbf{A} \in \mathbb{R}^{p \times p}} \lVert \mathbf{Y} - \mathbf{A} \mathbf{X} \rVert^2_{F} \text{ with } \rank{\mathbf{A}} \leq r,
\end{equation*}
using singular value decomposition, which results in extracting and learning dynamics $\mathbf{\Omega} \in \mathbb{C}^{r \times r}$ and spatial modes $\mathbf{\Phi} \in \mathbb{C}^{p \times r}$ directly from the eigendecomposition of the best-fit operator $\mathbf{\Tilde{A}}_r$. A low-rank reconstruction of the input is computed as
\begin{equation*}
	\mathbf{\widehat{x}}_{t+1} = \boldsymbol{\Phi} \exp{(\mathbf{\Omega} t)} \boldsymbol{\Phi}^{\dagger} \mathbf{x}_1, ~ \text{for } t \in \{ 0, 1, \dots, m-1 \}.
\end{equation*}
The DMD algorithm is described in Algorithm \ref{alg:dmd}. Note that DMD extracts pairs of modes and their associated dynamics: modes can be interpreted as coherent structures or spatial \textit{shapes} of the signal, and dynamics encode the temporal evolution of the related mode, with real part corresponding to their growth and imaginary part to their oscillating frequency. Computational complexity of DMD algorithm is described in Lemma \ref{prop:dmd_complexity} in Appendix \ref{sec:theoretical_analysis_complexity}.

\newpage
\subsection{Unsupervised hyperparameter selection}
\label{sec:appendix_model_selection}

We provide in Algorithm \ref{alg:model_selection} a pseudo-code for our proposed hyperparameter selection on burn-in period. 

\begin{algorithm}[H]
	\caption{$\theta^{\ast} = ( w, d, r ) \gets \texttt{HyperparameterSelection}(x_1, x_2, \dots, x_{T_0} \mid T_0)$}\label{alg:model_selection}
	
	\nonl \bluecode{/* Hyperparameter selection by average reconstruction error minimisation */}
	
	\KwIn{Sequence of observations on burn-in period $x_1, x_2, \dots, x_{T_0} \in \mathbb{R}^p$}
	
	\KwOut{Optimal parameters $\theta^{\ast} = ( w, d, r )$}
	
	\KwParam{Burn-in period $T_0 \in \mathbb{N}$}
	
	\bluecode{/* Grid of parameters generation and initalisation */} \\
	$\Theta \gets \texttt{grid}(T_0, p)$ \\
	$\varepsilon^{\ast} \gets \infty$ \\
	\bluecode{/* Explore possible models (in parallel) */} \\
	\For{$( w, d, r ) \in \Theta$}{
		$\varepsilon_t \gets 0$ for $t \in \{ w, w+1, \dots, T_0 \}$ \\
		\For{$t \in \{ w, w+1, \dots, T_0 \}$}{
			\bluecode{/* Sequential DMD on Hankel batches and reconstruction */} \\
			$X_t \gets [ x_{t-w+1}, x_{t-w+2}, \dots, x_t ]$ \\
			$\mathcal{X}_t \gets \texttt{Hankel}(X_t \mid d)$ \\
			$\{ \widehat{\mathcal{X}}_t, \mathbf{\Phi}_t, \mathbf{\Omega}_t \} \gets \texttt{DMD}(\mathcal{X}_t \mid r)$ \\
			$\widehat{X}_t \gets \texttt{Unroll}(\widehat{\mathcal{X}}_t \mid d)$ \\
			\bluecode{/* Reconstruction error */} \\
			$\varepsilon_t \gets \lVert X_t - \widehat{X}_t \rVert^2_F$ \\
		}
		\bluecode{/* Average reconstruction error */} \\
		$\varepsilon \gets \frac{1}{T_0 - w + 1} \sum_{t=w}^{T_0} \varepsilon_t$ \\
		\If{$\varepsilon < \varepsilon^{\ast}$}{
			\bluecode{/* Update best parameters and error */} \\
			$\varepsilon^{\ast} \gets \varepsilon$ \\
			$\theta^{\ast} \gets ( w, d, r )$
		}
	}
	\Return{$\theta^{\ast} = ( w, d, r )$}
	
\end{algorithm}

\paragraph{Classical hyperparameter selection methods.}
On the one hand, classical unsupervised selection techniques involve the definition of a criterion to optimise, such as AIC \cite{aic} or BIC \cite{bic}. Such criteria are often parametric: they compute a likelihood function and thus require distribution assumptions. On the other hand, some hyperparameter selection methods rely on heuristics, such as the elbow \cite{elbow} or silhouette \cite{silhouette} methods in clustering, but are not strict objective functions to optimise. Moreover, in contrast to iterative and sequential approaches such as Bayesian optimisation, our hyperparameter selection necessitates a streaming paradigm, where operations may be executed concurrently on different threads as suggested in \cite{bocpdms}, to enable real-time processing. 

\paragraph{Approximation vs. reconstruction.} Note that this hyperparameter selection procedure does not simply choose the most complex model on the grid of tested parameters: while at fixed window length $w$ and auto-regressive order $d$, the approximation error of DMD would decrease when increasing the rank of the decomposition $r$, it may not be the case for the average reconstruction error. This is the result of DMD reconstruction being based on learned modes and dynamics rather than simply multiplying the first lagged matrix by the DMD operator. Indeed, the approximation error may be seen as a one-step ahead forecasting optimisation, where each observed snapshot is used to forecast the next one. In the contrary, the reconstruction error could be considered as a multiple steps ahead forecasting, where only the first snapshot is used to recurrently predict the next ones from the learned modes and dynamics. The optimal window length $w$, the auto-regressive order $d$ and the rank $r$ are typically determined by the underlying dynamical system properties.

\paragraph{Practical implementation.} In practice, not all triplets of $\Theta$ can be explored and truncation of the parameters space is necessary due to computational constraints (e.g., the number of threads). We further discuss the grid generation process used in simulations and experiments in Appendix \ref{sec:experiments_details}. 

% explain here why it is equivalent to select the most complex model

\newpage
\subsection{Multiple changepoints detection}
\label{sec:appendix_cpdmd}

We provide in Algorithm \ref{alg:multiple_changepoints_with_model_selection} a pseudo-code for our proposed method in the context of multiple changepoints, with parameters being auto-adjusted to the post-change distribution after each detected changepoint. 

\begin{algorithm}[H]
	\caption{$\boldsymbol{\widehat{\tau}} \gets \texttt{CPDMD}(x_1, x_2, \dots, x_T  \mid T_0, \lambda, L)$}\label{alg:multiple_changepoints_with_model_selection}
	\nonl \bluecode{/* Multiple changepoint detection via DMD with hyperparameter selection */}
	
	\KwIn{Sequence of observations $x_1, x_2, \dots, x_T \in \mathbb{R}^p$}
	
	\KwOut{List of detected changepoints in the sequence $\boldsymbol{\widehat{\tau}}$}
	
	\KwParam{Burn-in period $T_0 \in \mathbb{N}$, adaptive EWMA learning rate $\lambda \in [0, 1]$ and limit $L \in \mathbb{R}^{\ast}_{+}$}
	
	$\widehat{\tau} \gets 1$ \\
	$\boldsymbol{\widehat{\tau}} \gets \varnothing$ \\
	\While{$\widehat{\tau} \neq \mathtt{None}$}{
		\bluecode{/* Check that there are enough remaining samples */} \\
		\eIf{$T - \widehat{\tau} \geq T_0$}{
			\bluecode{/* Hyperparameter selection on new data distribution (Algorithm \ref{alg:model_selection}) */} \\ 
			$\theta^{\ast} = ( w, d, r ) \gets \texttt{HyperparameterSelection}(x_{\tau}, x_{\tau+1}, \dots, x_{T_0} \mid T_0)$ \\
			\bluecode{/* Find the next changepoint in the remaining sequence (Algorithm \ref{alg:single_changepoint}) */} \\ 
			$\widehat{\tau} \gets \texttt{SingleCP}(x_{\tau}, x_{\tau+1}, \dots, x_T \mid T_0, w, d, r, \lambda, L)$
		}{
			$\widehat{\tau} \gets \mathtt{None}$
		}
		\bluecode{/* Add the new detected changepoint */} \\
		$\boldsymbol{\widehat{\tau}} \gets \boldsymbol{\widehat{\tau}} \cup \{\widehat{\tau}\}$
	}
	\Return{$\boldsymbol{\widehat{\tau}}$} 
	
\end{algorithm}

\newpage
\section{Theoretical analysis}
\label{sec:theoretical_analysis}

\subsection{Spectral properties of the DMD operator}
\label{sec:theoretical_analysis_eig}

\subsubsection{Windowed incremental algorithms}

\begin{lemma}[Windowed incremental SVD algorithm, Proposition 2 from \citep{incremental_svd_dmd}]
	\label{th:update_svd}
	Consider a windowed dataset 
	\begin{equation*}
		\mathbf{X}_k = \left[ \begin{array}{cccc}
			\mathbf{x}_{k-w+1} & \mathbf{x}_{k-w+2} & \cdots & \mathbf{x}_{k}
		\end{array} \right] \in \mathbb{R}^{n \times w},
	\end{equation*}
	where $n$ is the dimension of the measurements and $w$ is the window length. Consider the other windowed dataset corresponding to the consecutive window
	\begin{equation*}
		\mathbf{X}_{k+1} = \left[ \begin{array}{cccc}
			\mathbf{x}_{k-w+2} & \mathbf{x}_{k-w+3} & \cdots & \mathbf{x}_{k+1}
		\end{array} \right] \in \mathbb{R}^{n \times w}. 
	\end{equation*}
	Assume the SVD of $\mathbf{X}_k$ is known: $\mathbf{X}_k =  \mathbf{U}_{\mathbf{X}_k} \mathbf{\Sigma}_{\mathbf{X}_k} \mathbf{V}_{\mathbf{X}_k}^{\ast}$. Then, the SVD of $\mathbf{X}_{k+1}$ can be computed from the SVD of $\mathbf{X}_k$ as follows. 
	
	Define the new dataset $\mathbf{\acute{X}_k}$ such that the first column of $\mathbf{X}_k$ is eliminated
	\begin{equation*}
		\mathbf{\acute{X}_k} =  \left[ \begin{array}{cccc}
			\mathbf{x}_{k-w+2} & \mathbf{x}_{k-w+3} & \cdots & \mathbf{x}_{k}
		\end{array} \right] \in \mathbb{R}^{n \times (w-1)}.
	\end{equation*}
	Then, the SVD of $\mathbf{\acute{X}_k}$ is given by $\mathbf{\acute{X}_k} = \mathbf{U}_{\mathbf{\acute{X}_k}} \mathbf{\Sigma}_{\mathbf{\acute{X}_k}} \mathbf{V}^{\ast}_{\mathbf{\acute{X}_k}}$ such that
	\begin{equation*}
		\mathbf{U}_{\mathbf{\acute{X}_k}} =  \mathbf{U}_{\mathbf{X}_k}  \mathbf{U}_{\mathbf{\acute{S}}_k}, ~ \mathbf{\Sigma}_{\mathbf{\acute{X}_k}} = \mathbf{\Sigma}_{\mathbf{\acute{S}_k}} \text{ and } \mathbf{V}_{\mathbf{\acute{X}_k}} =  \mathbf{V}_{\mathbf{X}_{k, 2}}  \mathbf{V}_{\mathbf{\acute{S}_k}},
	\end{equation*}
	where $\mathbf{V}_{\mathbf{X}_{k, 2}}$ is the submatrix of $\mathbf{V}_{\mathbf{X}_{k}}$ obtained after removing its first row, i.e. $\mathbf{V}_{\mathbf{X}_k} = \left[ \begin{array}{c}
		\mathbf{v}_{\mathbf{X}_{k,1}} \\
		\mathbf{V}_{\mathbf{X}_{k, 2}}
	\end{array} \right]$, and $\mathbf{U}_{\mathbf{\acute{S}}_k}$, $\mathbf{\Sigma}_{\mathbf{\acute{S}_k}}$ and $\mathbf{V}_{\mathbf{\acute{S}_k}}$ are obtained from the SVD of $\mathbf{\acute{S}}_k = \mathbf{U}_{\mathbf{\acute{S}}_k} \mathbf{\Sigma}_{\mathbf{\acute{S}_k}} \mathbf{V}_{\mathbf{\acute{S}_k}}^{\ast}$ where 
	\begin{equation*}
		\mathbf{\acute{S}}_k = \mathbf{\Sigma}_{\mathbf{X}_k} - \mathbf{U}_{\mathbf{X}_k}^{\ast} \mathbf{x}_{k-w+1} \mathbf{z}_1^{\intercal} \mathbf{V}_{\mathbf{X}_k},
	\end{equation*}
	and $\mathbf{z}_1 = \left[\begin{array}{cccc}
		1 & 0 & \cdots & 0
	\end{array}\right]^{\intercal} \in \mathbb{R}^w$. Therefore, the SVD of $\mathbf{X}_{k+1}$ is given by $\mathbf{X}_{k+1} = \mathbf{U}_{\mathbf{X}_{k+1}} \mathbf{\Sigma}_{\mathbf{X}_{k+1}} \mathbf{V}^{\ast}_{\mathbf{X}_{k+1}}$ where
	\begin{equation}
		\begin{aligned}
			\mathbf{U}_{\mathbf{X}_{k+1}} &= \mathbf{U}_{\mathbf{\acute{X}_k}} \mathbf{U}_{\mathbf{\hat{S}}_k} = \mathbf{U}_{\mathbf{X}_k} \mathbf{U}_{\mathbf{\acute{S}}_k} \mathbf{U}_{\mathbf{\hat{S}}_k}, \\
			\mathbf{\Sigma}_{\mathbf{X}_{k+1}} &= \mathbf{\Sigma}_{\mathbf{\hat{S}}_k}, \\
			\mathbf{V}_{\mathbf{X}_{k+1}} &= \left[ \begin{array}{c}
				\mathbf{V}_{\mathbf{\acute{X}_k}} \mathbf{V}_{\mathbf{\hat{S}}_{k, 1}} \\
				\mathbf{v}_{\mathbf{\hat{S}}_{k, 2}}
			\end{array} \right] = \left[ \begin{array}{c}
				\mathbf{V}_{\mathbf{X}_{k, 2}} \mathbf{V}_{\mathbf{\acute{S}}_k} \mathbf{V}_{\mathbf{\hat{S}}_{k, 1}} \\
				\mathbf{v}_{\mathbf{\hat{S}}_{k, 2}}
			\end{array} \right].
		\end{aligned}
		\label{eq:update_svd}
	\end{equation}
	where $\mathbf{\hat{S}}_k = \left[ \begin{array}{cc}
		\mathbf{\Sigma}_{\mathbf{\acute{X}}_k} & \mathbf{U}_{\mathbf{\acute{X}}_k}^{\ast} \mathbf{x}_{k+1}
	\end{array} \right] = \mathbf{U}_{\mathbf{\hat{S}}_k} \mathbf{\Sigma}_{\mathbf{\hat{S}}_k} \mathbf{V}^{\ast}_{\mathbf{\hat{S}}_k}$.
\end{lemma}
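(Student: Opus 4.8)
The statement is Proposition~2 of the cited reference, so the plan is to reconstruct its proof by decomposing one window slide into two elementary SVD modifications: a rank-one column deletion that takes $\mathbf{X}_k$ to $\mathbf{\acute{X}}_k$, followed by a column append that takes $\mathbf{\acute{X}}_k$ to $\mathbf{X}_{k+1}$. In each stage I would exhibit the claimed factors, check that the outer factors have orthonormal columns, and verify that their product reproduces the target matrix; the asserted SVD of $\mathbf{X}_{k+1}$ then follows by composing the two stages, which is also what produces the iterated products $\mathbf{U}_{\mathbf{X}_k}\mathbf{U}_{\mathbf{\acute{S}}_k}\mathbf{U}_{\mathbf{\hat{S}}_k}$ and $\mathbf{V}_{\mathbf{X}_{k,2}}\mathbf{V}_{\mathbf{\acute{S}}_k}\mathbf{V}_{\mathbf{\hat{S}}_{k,1}}$.

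For the deletion, I would start from the identity $\mathbf{X}_k - \mathbf{x}_{k-w+1}\mathbf{z}_1^{\intercal} = \begin{bmatrix}\mathbf{0} & \mathbf{\acute{X}}_k\end{bmatrix}$, substitute $\mathbf{X}_k = \mathbf{U}_{\mathbf{X}_k}\mathbf{\Sigma}_{\mathbf{X}_k}\mathbf{V}_{\mathbf{X}_k}^{\ast}$, and use that $\mathbf{x}_{k-w+1}$ is a column of $\mathbf{X}_k$ (hence lies in $\operatorname{range}(\mathbf{U}_{\mathbf{X}_k})$) and that $\mathbf{z}_1$ lies in $\operatorname{range}(\mathbf{V}_{\mathbf{X}_k})$ to factor the left-hand side as $\mathbf{U}_{\mathbf{X}_k}\bigl(\mathbf{\Sigma}_{\mathbf{X}_k}-\mathbf{U}_{\mathbf{X}_k}^{\ast}\mathbf{x}_{k-w+1}\mathbf{z}_1^{\intercal}\mathbf{V}_{\mathbf{X}_k}\bigr)\mathbf{V}_{\mathbf{X}_k}^{\ast}=\mathbf{U}_{\mathbf{X}_k}\mathbf{\acute{S}}_k\mathbf{V}_{\mathbf{X}_k}^{\ast}$. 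Inserting the small SVD $\mathbf{\acute{S}}_k=\mathbf{U}_{\mathbf{\acute{S}}_k}\mathbf{\Sigma}_{\mathbf{\acute{S}}_k}\mathbf{V}_{\mathbf{\acute{S}}_k}^{\ast}$ gives $\begin{bmatrix}\mathbf{0} & \mathbf{\acute{X}}_k\end{bmatrix}=(\mathbf{U}_{\mathbf{X}_k}\mathbf{U}_{\mathbf{\acute{S}}_k})\mathbf{\Sigma}_{\mathbf{\acute{S}}_k}(\mathbf{V}_{\mathbf{X}_k}\mathbf{V}_{\mathbf{\acute{S}}_k})^{\ast}$, whose outer factors have orthonormal columns by composition of isometries. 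Because the first column of $\begin{bmatrix}\mathbf{0} & \mathbf{\acute{X}}_k\end{bmatrix}$ vanishes, the first row of $\mathbf{V}_{\mathbf{X}_k}\mathbf{V}_{\mathbf{\acute{S}}_k}$ is supported on the zero singular values of $\mathbf{\acute{S}}_k$; deleting that first row therefore preserves orthonormality of the remaining columns, discards nothing carrying a positive singular value, and, since row deletion commutes with right multiplication, turns $\mathbf{V}_{\mathbf{X}_k}\mathbf{V}_{\mathbf{\acute{S}}_k}$ into $\mathbf{V}_{\mathbf{X}_{k,2}}\mathbf{V}_{\mathbf{\acute{S}}_k}$. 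Comparing factors yields the stated SVD of $\mathbf{\acute{X}}_k$.

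For the append, $\mathbf{X}_{k+1}=\begin{bmatrix}\mathbf{\acute{X}}_k & \mathbf{x}_{k+1}\end{bmatrix}$, and using $\mathbf{x}_{k+1}\in\operatorname{range}(\mathbf{U}_{\mathbf{\acute{X}}_k})$ I would write $\mathbf{X}_{k+1}=\mathbf{U}_{\mathbf{\acute{X}}_k}\,\mathbf{\hat{S}}_k\,\mathbf{B}^{\ast}$ with $\mathbf{\hat{S}}_k=\begin{bmatrix}\mathbf{\Sigma}_{\mathbf{\acute{X}}_k} & \mathbf{U}_{\mathbf{\acute{X}}_k}^{\ast}\mathbf{x}_{k+1}\end{bmatrix}$ and $\mathbf{B}$ the matrix $\mathbf{V}_{\mathbf{\acute{X}}_k}$ bordered by a unit diagonal entry (so $\mathbf{B}$ is block diagonal with orthonormal columns). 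Inserting the small SVD $\mathbf{\hat{S}}_k=\mathbf{U}_{\mathbf{\hat{S}}_k}\mathbf{\Sigma}_{\mathbf{\hat{S}}_k}\mathbf{V}_{\mathbf{\hat{S}}_k}^{\ast}$ gives $\mathbf{X}_{k+1}=(\mathbf{U}_{\mathbf{\acute{X}}_k}\mathbf{U}_{\mathbf{\hat{S}}_k})\mathbf{\Sigma}_{\mathbf{\hat{S}}_k}(\mathbf{B}\mathbf{V}_{\mathbf{\hat{S}}_k})^{\ast}$ with isometric outer factors; partitioning $\mathbf{V}_{\mathbf{\hat{S}}_k}$ into its first $r$ rows $\mathbf{V}_{\mathbf{\hat{S}}_{k,1}}$ and its last row $\mathbf{v}_{\mathbf{\hat{S}}_{k,2}}$ and left-multiplying by the block-diagonal $\mathbf{B}$ reproduces the block form claimed for $\mathbf{V}_{\mathbf{X}_{k+1}}$, and substituting the formulas from the deletion stage for $\mathbf{U}_{\mathbf{\acute{X}}_k}$ and $\mathbf{V}_{\mathbf{\acute{X}}_k}$ yields the fully expanded expressions in \eqref{eq:update_svd}.

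I expect the main obstacle to be the orthonormality bookkeeping across the row deletion and the row insertion: one must confirm that $\mathbf{\acute{S}}_k$ is rank-deficient exactly along the direction $\mathbf{v}_{\mathbf{X}_{k,1}}$, so that trimming the first row of $\mathbf{V}_{\mathbf{X}_k}\mathbf{V}_{\mathbf{\acute{S}}_k}$ leaves an isometry, and that the range conditions used to factor out $\mathbf{U}_{\mathbf{X}_k}$, $\mathbf{V}_{\mathbf{X}_k}$, and $\mathbf{U}_{\mathbf{\acute{X}}_k}$ genuinely hold. The conditions on $\mathbf{x}_{k-w+1}$ and $\mathbf{x}_{k+1}$ are automatic, since these are honest columns of the respective windows; the condition on $\mathbf{z}_1$, together with the preservation of rank $r$ through the deletion, is where one invokes that the SVDs are taken to full column rank (or, in the rank-deficient case, simply discards the extra zero singular value). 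Once these points are settled, the three formulas of each stage and the composition claim follow by direct comparison of factors.
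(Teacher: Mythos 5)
The paper does not actually prove this lemma: it is imported verbatim as Proposition~2 of the cited reference \citep{incremental_svd_dmd} and used as a black box in the proof of Theorem~\ref{th:bound_eigendecomposition}, so there is no in-paper argument to compare against. Your two-stage skeleton (rank-one downdate to delete the first column, then a column append, composing the two small SVDs) is the standard route to results of this type and matches the structure the statement itself encodes. Your treatment of the deletion stage is essentially right; in particular, the observation that the first row of $\mathbf{V}_{\mathbf{X}_k}\mathbf{V}_{\mathbf{\acute{S}}_k}$ is supported on the zero singular values of $\mathbf{\acute{S}}_k$, so that trimming it preserves orthonormality of the columns that matter, is exactly the point that has to be made there.

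There is, however, a genuine gap in the append stage. You assert that the range condition on $\mathbf{x}_{k+1}$ is ``automatic, since these are honest columns of the respective windows.'' It is not: $\mathbf{x}_{k+1}$ is the newly arriving snapshot and is \emph{not} a column of $\mathbf{\acute{X}}_k=[\mathbf{x}_{k-w+2},\dots,\mathbf{x}_k]$, so there is no reason for it to lie in $\operatorname{range}(\mathbf{U}_{\mathbf{\acute{X}}_k})=\operatorname{col}(\mathbf{\acute{X}}_k)$ unless that column space is all of $\mathbb{R}^n$, i.e.\ $\operatorname{rank}(\mathbf{\acute{X}}_k)=n$, which requires $w-1\geq n$. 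Without this, your factorisation $\mathbf{X}_{k+1}=\mathbf{U}_{\mathbf{\acute{X}}_k}\mathbf{\hat{S}}_k\mathbf{B}^{\ast}$ reproduces only $[\mathbf{\acute{X}}_k,\ \mathbf{U}_{\mathbf{\acute{X}}_k}\mathbf{U}_{\mathbf{\acute{X}}_k}^{\ast}\mathbf{x}_{k+1}]$, which differs from $\mathbf{X}_{k+1}$ by the residual of $\mathbf{x}_{k+1}$ orthogonal to $\operatorname{col}(\mathbf{\acute{X}}_k)$; an exact update would require augmenting the left basis by that normalised residual, in the style of Brand's incremental SVD, which the statement does not do. Note also that the rank hypotheses you invoke for the two stages pull in opposite directions: making $\mathbf{z}_1\in\operatorname{range}(\mathbf{V}_{\mathbf{X}_k})$ automatic via full column rank of $\mathbf{X}_k$ needs $n\geq w$, while making the append exact needs $n\leq w-1$, so a single appeal to ``full rank'' cannot discharge both. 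The clean repair is to carry a full (square-$\mathbf{V}$) SVD through the deletion stage and to state the append stage as exact only under the explicit hypothesis $\mathbf{x}_{k+1}\in\operatorname{col}(\mathbf{\acute{X}}_k)$ (or as the best approximation within that subspace otherwise), which is the assumption the cited proposition implicitly carries.
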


\begin{lemma}[Windowed incremental DMD algorithm, Theorem 2 from \cite{incremental_svd_dmd}]
	\label{th:update_dmd}
	Assume that we have access to the following windowed datasets at time $t_{k+1}$
	\begin{equation*}
		\mathbf{X}_k = \left[ \begin{array}{cccc}
			\mathbf{x}_{k-w+1} & \mathbf{x}_{k-w+2} & \cdots & \mathbf{x}_{k} 
		\end{array} \right], ~ \mathbf{Y}_k = \left[ \begin{array}{cccc}
			\mathbf{x}_{k-w+2} & \mathbf{x}_{k-w+3} & \cdots & \mathbf{x}_{k+1}
		\end{array} \right].
	\end{equation*}
	Suppose that at time $t_{k+1}$, the SVD of $\mathbf{X}_k = \mathbf{U}_{\mathbf{X}_k} \mathbf{\Sigma}_{\mathbf{X}_k} \mathbf{V}_{\mathbf{X}_k}^{\ast}$ is known, along with the corresponding DMD operator $\mathbf{A}_k$. 
	
	At time $t_{k+2}$, a new measurement $\mathbf{x}_{k+2}$ is observed, and one can construct the updated datasets
	\begin{equation*}
		\mathbf{X}_{k+1} = \left[ \begin{array}{cccc}
			\mathbf{x}_{k-w+2} & \mathbf{x}_{k-w+3} & \cdots & \mathbf{x}_{k+1} 
		\end{array} \right], ~ \mathbf{Y}_{k+1} = \left[ \begin{array}{cccc}
			\mathbf{x}_{k-w+3} & \mathbf{x}_{k-w+4} & \cdots & \mathbf{x}_{k+2} 
		\end{array} \right].
	\end{equation*}
	Then, at time $t_{k+2}$, the SVD of $\mathbf{X}_{k+1}$ can be incrementally computed from the known SVD of $\mathbf{X}_k$ using Lemma \ref{th:update_svd} and Equation~\eqref{eq:update_svd}. Thus, the DMD operator $\mathbf{A}_{k+1}$ is
	\begin{equation*}
		\mathbf{A}_{k+1} = \mathbf{A}_k + \left( \mathbf{x}_{k+2} - \mathbf{A}_k \mathbf{x}_{k+1} \right) \mathbf{v}_{\mathbf{\hat{S}}_{k, 2}} \mathbf{\Sigma}_{\mathbf{X}_{k+1}}^{-1} \mathbf{U}_{\mathbf{X}_{k+1}}^{\ast}.
	\end{equation*}
\end{lemma}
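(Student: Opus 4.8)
The plan is to work directly from the least-squares characterisation of the DMD operator, $\mathbf{A}_{k+1} = \mathbf{Y}_{k+1}\mathbf{X}_{k+1}^{\dagger}$, and to substitute the incrementally-computed SVD of $\mathbf{X}_{k+1}$ furnished by Lemma \ref{th:update_svd}, so that $\mathbf{X}_{k+1}^{\dagger} = \mathbf{V}_{\mathbf{X}_{k+1}} \mathbf{\Sigma}_{\mathbf{X}_{k+1}}^{-1} \mathbf{U}_{\mathbf{X}_{k+1}}^{\ast}$. The objective is then to show that the single freshly observed snapshot $\mathbf{x}_{k+2}$ is the only source of discrepancy between $\mathbf{A}_{k+1}$ and $\mathbf{A}_k$, which is exactly what the claimed rank-one correction encodes.

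First I would exploit the sliding-window structure. Note that $\mathbf{X}_{k+1} = \mathbf{Y}_k = \left[ \mathbf{x}_{k-w+2}, \dots, \mathbf{x}_{k+1} \right]$, and that the first $w-1$ columns of $\mathbf{Y}_{k+1}$ coincide with the last $w-1$ columns of $\mathbf{X}_{k+1}$, the final column being $\mathbf{x}_{k+2}$. Because $\mathbf{A}_k = \mathbf{Y}_k\mathbf{X}_k^{\dagger}$ reproduces each snapshot pair used in its own fit exactly, i.e. $\mathbf{A}_k\mathbf{X}_k = \mathbf{Y}_k$ when $\mathbf{X}_k$ has full column rank, every column satisfies $\mathbf{A}_k\mathbf{x}_{k-w+1+j} = \mathbf{x}_{k-w+2+j}$ for $j = 1, \dots, w-1$. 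Assembling these identities column by column yields the compact decomposition
\[
	\mathbf{Y}_{k+1} = \mathbf{A}_k \mathbf{X}_{k+1} + \left( \mathbf{x}_{k+2} - \mathbf{A}_k \mathbf{x}_{k+1} \right) \mathbf{e}_w^{\intercal},
\]
where $\mathbf{e}_w \in \mathbb{R}^w$ is the last canonical basis vector and $\mathbf{x}_{k+1} = \mathbf{X}_{k+1}\mathbf{e}_w$ is the most recent column; the bracketed factor is precisely the one-step prediction residual of the previous model on the incoming datum.

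Next I would right-multiply this decomposition by $\mathbf{X}_{k+1}^{\dagger}$. The first term contributes $\mathbf{A}_k \mathbf{X}_{k+1}\mathbf{X}_{k+1}^{\dagger} = \mathbf{A}_k$, using that $\mathbf{X}_{k+1}\mathbf{X}_{k+1}^{\dagger} = \mathbf{U}_{\mathbf{X}_{k+1}}\mathbf{U}_{\mathbf{X}_{k+1}}^{\ast}$ acts as the identity on the range in which $\mathbf{A}_k$ operates. For the second term I would invoke Lemma \ref{th:update_svd}: since $\mathbf{e}_w^{\intercal}\mathbf{V}_{\mathbf{X}_{k+1}}$ extracts the last row of $\mathbf{V}_{\mathbf{X}_{k+1}}$, which by Equation~\eqref{eq:update_svd} equals $\mathbf{v}_{\mathbf{\hat{S}}_{k, 2}}$, one obtains $\mathbf{e}_w^{\intercal}\mathbf{X}_{k+1}^{\dagger} = \mathbf{v}_{\mathbf{\hat{S}}_{k, 2}}\mathbf{\Sigma}_{\mathbf{X}_{k+1}}^{-1}\mathbf{U}_{\mathbf{X}_{k+1}}^{\ast}$. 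Combining the two contributions produces exactly the stated update for $\mathbf{A}_{k+1}$.

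The main obstacle is justifying the two ``projection equals identity'' reductions at once: the exact reproduction $\mathbf{A}_k\mathbf{X}_k = \mathbf{Y}_k$ needed for the decomposition requires $\mathbf{X}_k$ to have full column rank, whereas $\mathbf{A}_k\mathbf{X}_{k+1}\mathbf{X}_{k+1}^{\dagger} = \mathbf{A}_k$ requires the row space of $\mathbf{A}_k$ to lie in the column space of $\mathbf{X}_{k+1}$. I would therefore devote most of the effort to verifying that the rank and consistency conditions under which Lemma \ref{th:update_svd} is applied, namely that the dominant rank-$r$ subspace is preserved across consecutive windows, guarantee both reductions. This is where the incremental SVD structure does the real work: passing from $\mathbf{X}_k$ to $\mathbf{X}_{k+1}$ alters the retained subspace only through the rank-one modification encoded in $\mathbf{\hat{S}}_k$, so that after projection the sole surviving correction is the residual direction $\mathbf{x}_{k+2} - \mathbf{A}_k\mathbf{x}_{k+1}$, weighted by the last row of $\mathbf{X}_{k+1}^{\dagger}$.
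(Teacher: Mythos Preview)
The paper does not supply its own proof of this lemma: it is quoted verbatim as Theorem~2 of the cited reference and then used as a black box in the proof of Theorem~\ref{th:bound_eigendecomposition}. So there is no in-paper argument to compare against, and your task is really to produce a self-contained justification.

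Your outline is the natural one and matches the standard derivation: write $\mathbf{Y}_{k+1}=\mathbf{A}_k\mathbf{X}_{k+1}+(\mathbf{x}_{k+2}-\mathbf{A}_k\mathbf{x}_{k+1})\mathbf{e}_w^{\intercal}$, right-multiply by $\mathbf{X}_{k+1}^{\dagger}$, and read off the last row of $\mathbf{V}_{\mathbf{X}_{k+1}}$ from Equation~\eqref{eq:update_svd} as $\mathbf{v}_{\mathbf{\hat S}_{k,2}}$. That second step is clean and correct.

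The place where you are right to flag difficulty is $\mathbf{A}_k\mathbf{X}_{k+1}\mathbf{X}_{k+1}^{\dagger}=\mathbf{A}_k$. This is \emph{not} automatic: since $\mathbf{A}_k=\mathbf{Y}_k\mathbf{V}_{\mathbf{X}_k}\mathbf{\Sigma}_{\mathbf{X}_k}^{-1}\mathbf{U}_{\mathbf{X}_k}^{\ast}$, the rows of $\mathbf{A}_k$ lie in $\mathrm{col}(\mathbf{X}_k)$, and you need $\mathrm{col}(\mathbf{X}_k)\subseteq\mathrm{col}(\mathbf{X}_{k+1})$, i.e.\ the discarded snapshot $\mathbf{x}_{k-w+1}$ must lie in $\mathrm{span}\{\mathbf{x}_{k-w+2},\dots,\mathbf{x}_{k+1}\}$. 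Rather than leaving this as a vague ``rank and consistency'' check, observe that this is precisely the regime in which Lemma~\ref{th:update_svd} itself is exact: the appended column in $\mathbf{\hat S}_k=[\mathbf{\Sigma}_{\mathbf{\acute X}_k},\,\mathbf{U}_{\mathbf{\acute X}_k}^{\ast}\mathbf{x}_{k+1}]$ already presumes $\mathbf{x}_{k+1}\in\mathrm{col}(\mathbf{U}_{\mathbf{\acute X}_k})$, so the incremental SVD framework is built on the hypothesis that successive windows share a common column space. State that hypothesis explicitly at the outset; then both reductions (the exact reproduction $\mathbf{A}_k\mathbf{X}_k=\mathbf{Y}_k$ and the projector identity) follow immediately, and the rest of your argument goes through without further work.
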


\subsubsection{Perturbation theory}

\begin{lemma}[Bauer-Fike Theorem \cite{bauer_fike}, Theorem 5.3 from \cite{approx_eig}]
	\label{th:perturbation_eigenvalues}
	Let $A \in \mathbb{C}^{n \times n}$ be a diagonalisable matrix and denote by $X = \left[ \mathbf{x}_1, \mathbf{x}_2, \dots, \mathbf{x}_n \right] \in \mathbb{C}^{n \times n}$ the matrix of its right eigenvectors, where $\mathbf{x}_k \in \mathbb{C}^n$ for $k\in~\{ 1, 2, \dots, n \}$, such that $D = X^{-1} A X = \diag (\lambda_1, \lambda_2, \dots, \lambda_n)$, $\lambda_i$ being eigenvalues of $A$, $i \in \{ 1, 2, \dots, n\}$. 
	
	Let $E \in \mathbb{C}^{n \times n}$ be a perturbation of $A$. Let $\mu$ be an eigenvalue of the matrix $A+E \in \mathbb{C}^{n \times n}$, then 
	\begin{equation*}
		\min_{\lambda \in \sigma(A)} \lvert \lambda - \mu \rvert \leq \kappa_p(X) \lVert E \rVert_p,
	\end{equation*}
	where $\lVert E \rVert_p$ is any matrix $p$-norm of $E$, and $\kappa_p(X) = \lVert X \rVert_p \lVert X^{-1} \rVert_p$ is called the condition number of the eigenvalue problem for matrix $A$.
\end{lemma}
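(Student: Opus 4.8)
The plan is to prove the inequality by a direct perturbation argument that exploits the diagonalisation $A = X D X^{-1}$. First I would dispose of the trivial case: if $\mu \in \sigma(A)$, then $\min_{\lambda \in \sigma(A)} \lvert \lambda - \mu \rvert = 0$ and the bound holds automatically since $\kappa_p(X) \lVert E \rVert_p \geq 0$. Thus I assume henceforth that $\mu \notin \sigma(A)$, which is precisely the condition guaranteeing that the diagonal matrix $D - \mu I$ is invertible, a fact I will need below.

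Since $\mu$ is an eigenvalue of $A + E$, there exists a nonzero eigenvector $v \in \mathbb{C}^n$ with $(A + E) v = \mu v$, equivalently $(A - \mu I) v = -E v$. Substituting $A = X D X^{-1}$ and multiplying on the left by $X^{-1}$, and inserting $X X^{-1} = I$ on the right-hand side, I would rewrite this identity as $(D - \mu I)(X^{-1} v) = -(X^{-1} E X)(X^{-1} v)$. Setting $y \vcentcolon = X^{-1} v$, which is nonzero because $X$ is invertible and $v \neq 0$, and using that $D - \mu I$ is invertible, this rearranges to $y = -(D - \mu I)^{-1} (X^{-1} E X) \, y$.

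Next I would take the induced $p$-norm of both sides and apply submultiplicativity twice, obtaining $\lVert y \rVert_p \leq \lVert (D - \mu I)^{-1} \rVert_p \, \lVert X^{-1} \rVert_p \, \lVert E \rVert_p \, \lVert X \rVert_p \, \lVert y \rVert_p$. Dividing through by $\lVert y \rVert_p > 0$ and recognising $\kappa_p(X) = \lVert X \rVert_p \lVert X^{-1} \rVert_p$ yields $1 \leq \lVert (D - \mu I)^{-1} \rVert_p \, \kappa_p(X) \, \lVert E \rVert_p$. The final ingredient is to evaluate $\lVert (D - \mu I)^{-1} \rVert_p$: since $(D - \mu I)^{-1}$ is diagonal with entries $(\lambda_i - \mu)^{-1}$, and the operator norm of a diagonal matrix equals the largest modulus of its diagonal entries, I obtain $\lVert (D - \mu I)^{-1} \rVert_p = \max_i \lvert \lambda_i - \mu \rvert^{-1} = \big( \min_{\lambda \in \sigma(A)} \lvert \lambda - \mu \rvert \big)^{-1}$. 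Substituting this and rearranging gives exactly the claimed bound.

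I expect the main subtlety to lie in justifying the diagonal-norm identity $\lVert (D - \mu I)^{-1} \rVert_p = \big( \min_{\lambda \in \sigma(A)} \lvert \lambda - \mu \rvert \big)^{-1}$ uniformly across all induced $p$-norms rather than only for the Euclidean case; this holds because for a diagonal operator the bound $\lVert D x \rVert_p \leq (\max_i \lvert d_i \rvert) \lVert x \rVert_p$ follows entrywise, while equality is attained on a standard basis vector, so the operator norm equals the maximal diagonal modulus independently of $p$. The remaining steps are routine manipulation of the eigenvalue equation together with submultiplicativity of the induced norm.
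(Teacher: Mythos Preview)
Your proof is correct and is precisely the classical Bauer--Fike argument. The paper does not actually prove this lemma: it is stated as a citation of an external result (the original Bauer--Fike paper and Theorem~5.3 of the cited reference), so there is no in-paper proof to compare against. Your derivation---handling the trivial case $\mu \in \sigma(A)$, then conjugating the eigenvalue equation by $X$ to pass to the diagonal form, applying submultiplicativity of the induced $p$-norm, and using that the induced $p$-norm of a diagonal matrix equals the maximum modulus of its entries---is exactly the standard textbook route, and your explicit justification of the diagonal-norm identity for all $p$ is a welcome point of care that many presentations gloss over.
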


\begin{lemma}[Property 5.5 from \cite{approx_eig}]
	\label{th:perturbation_eigenvectors}
	Let $A \in \mathbb{C}^{n \times n}$ be a diagonalisable matrix and denote by $X = \left[ \mathbf{x}_1, \mathbf{x}_2, \dots, \mathbf{x}_n \right] \in \mathbb{C}^{n \times n}$ the matrix of its right eigenvectors, where $\mathbf{x}_k \in \mathbb{C}^n$ for $k\in~\{ 1, 2, \dots, n \}$, such that $D = X^{-1} A X = \diag (\lambda_1, \lambda_2, \dots, \lambda_n)$, $\lambda_i$ being eigenvalues of $A$, $i \in \{ 1, 2, \dots, n\}$. 
	
	Let $E \in \mathbb{C}^{n \times n}$ be a perturbation of $A$ with $\lVert E \rVert_2 = 1$  and let $A(\varepsilon) = A + \varepsilon E$. Then, the eigenvectors $\mathbf{x}_k$ and $\mathbf{x}_k(\varepsilon)$ of matrices $A$ and $A(\varepsilon)$, with $\lVert \mathbf{x}_k(\varepsilon) \rVert_2 = \lVert \mathbf{x} \rVert_2 = 1$, for $k \in \{ 1, 2, \dots, n\}$ satisfy
	\begin{equation*}
		\lVert \mathbf{x}_k(\varepsilon) - \mathbf{x}_k \rVert_2 \leq \frac{\varepsilon}{\min_{j \neq k} \lvert \lambda_k - \lambda_j \rvert} + \mathcal{O}\left( \varepsilon^2 \right), ~ \forall k \in \{1, 2, \dots, n\}.
	\end{equation*}
\end{lemma}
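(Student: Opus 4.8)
The right-hand side of the bound is only meaningful when $\min_{j \neq k} \lvert \lambda_k - \lambda_j \rvert > 0$, so I would begin by recording that this hypothesis forces $\lambda_k$ to be a \emph{simple} eigenvalue: for a diagonalisable $A$, geometric and algebraic multiplicities coincide, and distinctness from all other $\lambda_j$ then pins the multiplicity at one. Simplicity is exactly what licenses analytic (Kato--Rellich) perturbation theory, so the plan is to expand the perturbed eigenpair of $A(\varepsilon) = A + \varepsilon E$ as analytic functions of $\varepsilon$ near $0$,
\[
	\lambda_k(\varepsilon) = \lambda_k + \varepsilon \lambda_k^{(1)} + \mathcal{O}(\varepsilon^2), \qquad \mathbf{x}_k(\varepsilon) = \mathbf{x}_k + \varepsilon \mathbf{x}_k^{(1)} + \mathcal{O}(\varepsilon^2),
\]
so that $\lVert \mathbf{x}_k(\varepsilon) - \mathbf{x}_k \rVert_2 = \varepsilon \lVert \mathbf{x}_k^{(1)} \rVert_2 + \mathcal{O}(\varepsilon^2)$. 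The task then reduces to bounding the first-order coefficient $\lVert \mathbf{x}_k^{(1)} \rVert_2$ by $\bigl(\min_{j\neq k}\lvert\lambda_k - \lambda_j\rvert\bigr)^{-1}$.

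Next I would substitute the expansions into $A(\varepsilon)\mathbf{x}_k(\varepsilon) = \lambda_k(\varepsilon)\mathbf{x}_k(\varepsilon)$ and match the coefficients of $\varepsilon^1$, yielding the first-order equation
\[
	(A - \lambda_k I)\,\mathbf{x}_k^{(1)} = \lambda_k^{(1)} \mathbf{x}_k - E \mathbf{x}_k.
\]
To solve it I would introduce the dual basis of left eigenvectors $\mathbf{y}_j^{*}$, namely the rows of $X^{-1}$, which satisfy $\mathbf{y}_j^{*} A = \lambda_j \mathbf{y}_j^{*}$ and the biorthogonality $\mathbf{y}_j^{*}\mathbf{x}_l = \delta_{jl}$. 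Writing $\mathbf{x}_k^{(1)} = \sum_j c_j \mathbf{x}_j$ and applying $\mathbf{y}_k^{*}$ (whose action annihilates $A - \lambda_k I$) recovers the eigenvalue correction $\lambda_k^{(1)} = \mathbf{y}_k^{*} E \mathbf{x}_k$, while applying $\mathbf{y}_j^{*}$ for $j \neq k$ gives the off-diagonal coefficients $c_j = \mathbf{y}_j^{*} E \mathbf{x}_k / (\lambda_k - \lambda_j)$. The remaining coefficient $c_k$ is not determined by the eigen-equation but by the normalisation $\lVert \mathbf{x}_k(\varepsilon) \rVert_2 = 1$, which to first order imposes $\mathrm{Re}\,(\mathbf{x}_k^{*}\mathbf{x}_k^{(1)}) = 0$; fixing the phase of $\mathbf{x}_k(\varepsilon)$ then allows the choice $c_k = 0$, so $\mathbf{x}_k^{(1)} = \sum_{j\neq k} c_j \mathbf{x}_j$.

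Finally I would bound the norm. In the clean case of orthonormal eigenvectors (i.e.\ $A$ normal, $\mathbf{y}_j = \mathbf{x}_j$), Parseval's identity gives $\lVert \mathbf{x}_k^{(1)} \rVert_2^2 = \sum_{j\neq k} \lvert c_j \rvert^2 \le \bigl(\min_{j\neq k}\lvert\lambda_k-\lambda_j\rvert\bigr)^{-2}\sum_{j\neq k}\lvert \mathbf{x}_j^{*} E \mathbf{x}_k\rvert^2 \le \bigl(\min_{j\neq k}\lvert\lambda_k-\lambda_j\rvert\bigr)^{-2}\lVert E\mathbf{x}_k\rVert_2^2$, and $\lVert E \mathbf{x}_k \rVert_2 \le \lVert E\rVert_2 \lVert \mathbf{x}_k\rVert_2 = 1$ closes the estimate; taking square roots and multiplying through by $\varepsilon$ delivers the stated inequality, with the discarded higher-order terms collected into $\mathcal{O}(\varepsilon^2)$. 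The main obstacle I anticipate is the genuinely non-normal case: there the left eigenvectors $\mathbf{y}_j$ differ from the $\mathbf{x}_j$, so the sum $\sum_{j\neq k}\lvert \mathbf{y}_j^{*} E \mathbf{x}_k\rvert^2$ need not telescope into $\lVert E\mathbf{x}_k\rVert_2^2$ and the naive argument instead picks up a factor of the eigenvector condition number $\kappa_2(X)$ — precisely the factor that appears in the Bauer--Fike bound of Lemma~\ref{th:perturbation_eigenvalues}. I would resolve this either by restricting to the (near-)normal DMD operators that arise in practice, where the eigenbasis is well-conditioned, or by absorbing $\kappa_2(X)$ into the implicit constant; I would also need to confirm that the $\mathcal{O}(\varepsilon^2)$ remainder is uniform, which follows because the radius of analyticity of the eigenpair is itself controlled by the same spectral gap.
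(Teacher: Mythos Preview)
The paper does not actually prove this lemma: it is quoted verbatim as ``Property 5.5 from \cite{approx_eig}'' and then invoked as a black box inside the proof of Theorem~\ref{th:bound_eigendecomposition}. So there is no paper proof to compare against, and your proposal is in fact \emph{more} than what the authors supply.

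That said, your argument is the standard first-order Kato--Rellich expansion and is exactly the right machinery for a result of this shape. The derivation of $\mathbf{x}_k^{(1)} = \sum_{j\neq k} \frac{\mathbf{y}_j^{*}E\mathbf{x}_k}{\lambda_k-\lambda_j}\,\mathbf{x}_j$ and the normalisation fixing $c_k=0$ are both correct. Your identification of the obstacle in the non-normal case is also accurate and worth flagging: as stated, the bound contains no factor of $\kappa_2(X)$, and the Parseval step you use genuinely requires orthonormality of the $\mathbf{x}_j$. For a merely diagonalisable $A$ one would expect $\kappa_2(X)$ (or at least $\lVert X\rVert_2\lVert X^{-1}\rVert_2$ split across the two bases) to appear, just as it does in Bauer--Fike. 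The cited source likely either assumes normality, works in the eigenbasis norm, or hides the constant in the $\mathcal{O}$-term; any of your proposed resolutions (restrict to normal, absorb $\kappa_2(X)$ into the constant) is reasonable for the downstream application, since Theorem~\ref{th:bound_eigendecomposition} already carries a $\kappa_2(\mathbf{\Phi})$ in its eigenvalue bound and the authors' use of the lemma is qualitative.
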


\subsubsection{Main theoretical result}

We prove in the following section Theorem \ref{th:bound_eigendecomposition} which consists of the main theoretical result of our work. %It shows that, under no change in the underlying data generation process, the modes and dynamics extracted by dynamic mode decomposition, which are given by the eigenvectors and eigenvalues of the linear operator, are similar. This holds since, under no change in the data generation process, the predictive accuracy of consecutive DMD models is similar, and \textit{small}, provided our hyperparameter selection approach extracts a good set of parameters which minimises the reconstruction error, and thus the predictive accuracy, i.e., under no change, $\lVert \mathbf{E} \rVert_2$ is \textit{small}. Therefore, this result on similar extracted modes and dynamics leads to a stable reconstruction error over time under no change. 

\begin{reptheorem}{th:bound_eigendecomposition}[Upper bound on perturbations of DMD operator eigenvalues and eigenvectors]
	Consider a $p$-dimensional stream of observations monitored via Algorithm \ref{alg:single_changepoint}. Let $\mathbf{A} \in \mathbb{C}^{pd \times pd}$ be the DMD operator at time $t$ and let $\mathbf{\Tilde{A}} \in \mathbb{C}^{pd \times pd}$ denote the DMD operator at time $t+1$. Then, there exists a perturbation matrix $\mathbf{E} \in \mathbb{C}^{pd \times pd}$ such that $\mathbf{\Tilde{A}} = \mathbf{A} + \mathbf{E}$, where the closed-form of $\mathbf{E}$ is detailed in Appendix \ref{sec:theoretical_analysis_eig}. 
	Furthermore, assuming that the DMD operator $\mathbf{A}$ is diagonalisable, then the following results hold:
	\begin{enumerate}[label=(\roman*)]
		\item Eigenvalues (dynamics): let $\Tilde{\lambda} \in \sigma(\mathbf{\Tilde{A}})$, i.e. $\Tilde{\lambda}$ is an eigenvalue of $\mathbf{\Tilde{A}}$, then 
		\begin{equation*}
			\min_{\lambda \in \sigma(\mathbf{A})} \lvert \Tilde{\lambda} - \lambda \rvert \leq \kappa_2(\mathbf{\Phi}) \lVert \mathbf{E} \rVert_2,
		\end{equation*} 
		where $\lVert \mathbf{E} \rVert_2$ denotes the matrix $2$-norm of $\mathbf{E}$, $\kappa_2(\mathbf{\Phi}) = \lVert \mathbf{\Phi} \rVert_2 \lVert \mathbf{\Phi}^{-1} \rVert_2$ is the condition number of $\mathbf{\Phi} \in \mathbb{C}^{pd \times pd}$ and $\mathbf{\Phi}$ is the eigenvectors matrix of $\mathbf{A}$. 
		\item Eigenvectors (modes): let $\boldsymbol{\Tilde{\phi}}_k \in \mathbb{C}^{pd}$ and $\boldsymbol{\phi}_k \in \mathbb{C}^{pd}$ denote normalised eigenvectors of $\mathbf{\Tilde{A}}$ and $\mathbf{A}$ respectively for $k \in \{ 1, 2, \dots, pd\}$, then
		\begin{equation*}
			\lVert \boldsymbol{\Tilde{\phi}}_k - \boldsymbol{\phi}_k \rVert_2 \leq \frac{\lVert \mathbf{E} \rVert_2}{\min_{j \neq k} \lvert \lambda_k - \lambda_j \rvert} + \mathcal{O}\left(\lVert \mathbf{E} \rVert_2^2\right), \forall k \in \{1, 2, \dots, pd\}.
		\end{equation*}
	\end{enumerate}
\end{reptheorem}

\begin{proof}
	At time $t$, the new datum $x_t \in \mathbb{R}^{p}$ is observed and one can construct the Hankel batch $\mathcal{X}_t \in \mathbb{R}^{pd \times (w-d+1)}$, with given window length $w \in \mathbb{N}$ and auto-regressive order $d \in \{ 1, 2, \dots, w\}$, such that
	\begin{equation*}
		\mathcal{X}_t = \begin{bNiceMatrix}[margin, first-row, last-row]
			\rule[-8pt]{0pt}{8pt}
			\Block{2-5}{} \\
			\Block[fill=blue1!40,rounded-corners]{4-5}{}
			x_{t-w+1}^{(1)} & 	x_{t-w+2}^{(1)} & \cdots & 	x_{t-d}^{(1)} & x_{t-d+1}^{(1)} \\
			x_{t-w+2}^{(1)} & 	x_{t-w+3}^{(1)} & \cdots & 	x_{t-d+1}^{(1)} & x_{t-d+2}^{(1)} \\
			\vdots & \vdots & \ddots & \vdots & \vdots \\
			x_{t-w+d}^{(1)} & 	x_{t-w+d+1}^{(1)} & \cdots & 	x_{t-1}^{(1)} & x_{t}^{(1)} \\
			\Block[fill=blue2!40,rounded-corners]{4-5}{}
			x_{t-w+1}^{(2)} & 	x_{t-w+2}^{(2)} & \cdots & 	x_{t-d}^{(2)} & x_{t-d+1}^{(2)} \\
			x_{t-w+2}^{(2)} & 	x_{t-w+3}^{(2)} & \cdots & 	x_{t-d+1}^{(2)} & x_{t-d+2}^{(2)} \\
			\vdots & \vdots & \ddots & \vdots & \vdots \\
			x_{t-w+d}^{(2)} & 	x_{t-w+d+1}^{(2)} & \cdots & 	x_{t-1}^{(2)} & x_{t}^{(2)} \\
			\vdots & \vdots & \vdots & \vdots & \vdots \\
			\Block[fill=blue3!40,rounded-corners]{4-5}{}
			x_{t-w+1}^{(p)} & 	x_{t-w+2}^{(p)} & \cdots & 	x_{t-d}^{(p)} & x_{t-d+1}^{(p)} \\
			x_{t-w+2}^{(p)} & 	x_{t-w+3}^{(p)} & \cdots & 	x_{t-d+1}^{(p)} & x_{t-d+2}^{(p)} \\
			\vdots & \vdots & \ddots & \vdots & \vdots \\
			x_{t-w+d}^{(p)} & 	x_{t-w+d+1}^{(p)} & \cdots & 	x_{t-1}^{(p)} & x_{t}^{(p)} \\
			\Block{1-4}{\scriptstyle\mathbf{X}_t \in \mathbb{R}^{pd \times (w-d)}} & 
			\rule[8pt]{0pt}{8pt}
			\CodeAfter
			\UnderBrace[yshift=1mm,shorten]{last-1}{last-4}{}
			\OverBrace[yshift=1mm,shorten]{1-2}{1-last}{\scriptstyle\mathbf{Y}_t \in \mathbb{R}^{pd \times (w-d)}} 
		\end{bNiceMatrix} \in \mathbb{R}^{pd \times (w-d+1)}.
	\end{equation*}
	Recall that our proposed method desribed in Algorithm \ref{alg:single_changepoint} consists in applying DMD to snapshots of $\mathcal{X}_t$. Rigorously, we may define the snapshots of $\mathcal{X}_t$ as its $(w-d+1)$ column vectors, i.e. for $k \in \{ t-w+d, t-w+d+1, \dots, t \}$
	\begin{equation}
		\mathbf{x}_k^{\intercal} = \begin{bNiceMatrix}[margin]
			\Block[fill=blue1!40,rounded-corners]{1-3}{}
			x_{k-d+1}^{(1)} & \cdots & x_{k}^{(1)}  & \Block[fill=blue2!40,rounded-corners]{1-3}{} x_{k-d+1}^{(2)} & \cdots & x_{k}^{(2)} & \cdots & \Block[fill=blue3!40,rounded-corners]{1-3}{} x_{k-d+1}^{(p)} & \cdots & x_{k}^{(p)} 
		\end{bNiceMatrix} \in \mathbb{R}^{pd}.
		\label{eq:snapshot_vectors}
	\end{equation}
	This leads to defining the lagged snapshots matrices $\mathbf{X}_t~\in~\mathbb{R}^{pd \times (w-d)}$ and $\mathbf{Y}_t \in \mathbb{R}^{pd \times (w-d)}$ such that
	\begin{equation*}
		\begin{aligned}
			\mathbf{X}_t &= \begin{bNiceMatrix}
				\mathbf{x}_{t-w+d} & \mathbf{x}_{t-w+d+1} & \cdots & \mathbf{x}_{t-1} 
			\end{bNiceMatrix} \in \mathbb{R}^{pd \times (w-d)}, \\
			\mathbf{Y}_t &= \begin{bNiceMatrix}
				\mathbf{x}_{t-w+d+1} & \mathbf{x}_{t-w+d+2} & \cdots & \mathbf{x}_{t} 
			\end{bNiceMatrix} \in \mathbb{R}^{pd \times (w-d)}.
		\end{aligned}
	\end{equation*}
	Let $\mathbf{A} \in \mathbb{C}^{pd \times pd}$ denote the DMD operator at time $t$ extracted from snapshots matrices $\mathbf{X}_t$ and $\mathbf{Y}_t$. 
	
	Similarly, at time $t+1$, the new datum $x_{t+1} \in \mathbb{R}^p$ is observed, one can construct $\mathcal{X}_{t+1}~\in~\mathbb{R}^{pd \times (w-d+1)}$ and using the same notations for the snapshots vectors as in Equation \eqref{eq:snapshot_vectors}, one may obtain 
	\begin{equation*}
		\begin{aligned}
			\mathbf{X}_{t+1} &= \begin{bNiceMatrix}
				\mathbf{x}_{t-w+d+1} & \mathbf{x}_{t-w+d+2} & \cdots & \mathbf{x}_{t} 
			\end{bNiceMatrix} \in \mathbb{R}^{pd \times (w-d)}, \\
			\mathbf{Y}_{t+1} &= \begin{bNiceMatrix}
				\mathbf{x}_{t-w+d+2} & \mathbf{x}_{t-w+d+3} & \cdots & \mathbf{x}_{t+1} 
			\end{bNiceMatrix} \in \mathbb{R}^{pd \times (w-d)},
		\end{aligned}
	\end{equation*}
	along with the corresponding DMD operator at time $t+1$, denoted by $\mathbf{\Tilde{A}} \in \mathbb{C}^{pd \times pd}$. 
	
	Applying Lemma \ref{th:update_svd} and Lemma \ref{th:update_dmd} leads to
	\begin{equation*}
		\begin{aligned}
			\mathbf{\Tilde{A}} &= \mathbf{A} + \left( \mathbf{x}_{t+1} - \mathbf{A} \mathbf{x}_t \right) \mathbf{v}_{\mathbf{\hat{S}}_{t, 2}} \mathbf{\Sigma}^{-1}_{\mathbf{X}_{t+1}} \mathbf{U}_{\mathbf{X}_{t+1}}^{\ast} \\	
			&= \mathbf{A} + \mathbf{E},
		\end{aligned}
	\end{equation*}
	where $\mathbf{E} =  \left( \mathbf{x}_{t+1} - \mathbf{A} \mathbf{x}_t \right) \mathbf{v}_{\mathbf{\hat{S}}_{t, 2}} \mathbf{\Sigma}^{-1}_{\mathbf{X}_{t+1}} \mathbf{U}_{\mathbf{X}_{t+1}}^{\ast} \in \mathbb{C}^{pd \times pd}$. The matrices $\mathbf{\Sigma}^{-1}_{\mathbf{X}_{t+1}}$, $ \mathbf{U}_{\mathbf{X}_{t+1}}$ and $\mathbf{v}_{\mathbf{S}_{t, 2}}$ are obtained from the SVD of $\mathbf{X}_{t+1} =  \mathbf{U}_{\mathbf{X}_{t+1}} \mathbf{\Sigma}_{\mathbf{X}_{t+1}} \mathbf{V}_{\mathbf{X}_{t+1}}^{\ast}$ using Equation \eqref{eq:update_svd}. 
	
	Under the usual assumption of diagonalisable DMD operators \cite{dmd_theory_applications}, let $\mathbf{\Phi}~=~\left[ \boldsymbol{\phi}_1, \boldsymbol{\phi}_2, \dots, \boldsymbol{\phi}_{pd} \right]~\in~\mathbb{C}^{pd \times pd}$ denote the matrix of the right normalised eigenvectors of $\mathbf{A}$ where $\boldsymbol{\phi}_k \in \mathbb{C}^{pd}$ for $k \in \{ 1, 2, \dots, pd \}$, then Lemma \ref{th:perturbation_eigenvalues} for the matrix $2$-norm implies, for any eigenvalue $\Tilde{\lambda}$ of $\mathbf{\Tilde{A}}$,
	\begin{equation}
		\min_{\lambda \in \sigma(\mathbf{A})} \lvert \Tilde{\lambda} - \lambda \rvert \leq \kappa_2(\mathbf{\Phi}) \lVert \mathbf{E} \rVert_2,
	\end{equation}
	where $\sigma(\mathbf{A})$ denotes the spectrum of $\mathbf{A}$ and $\kappa_2(\mathbf{\Phi}) = \lVert \mathbf{\Phi} \rVert_2 \lVert \mathbf{\Phi}^{-1} \rVert_2$ is the condition number of $\mathbf{\Phi} \in \mathbb{C}^{pd \times pd}$. Moreover, Lemma \ref{th:perturbation_eigenvectors} implies that given the normalised eigenvectors $\boldsymbol{\phi}_k \in \mathbb{C}^{pd}$ and $\boldsymbol{\Tilde{\phi}}_k \in \mathbb{C}^{pd}$ of $\mathbf{A}$ and $\mathbf{\Tilde{A}}$ respectively, $k \in \{ 1, 2, \dots, pd\}$, then
	\begin{equation*}
		\lVert \boldsymbol{\Tilde{\phi}}_k - \boldsymbol{\phi}_k \rVert_2 \leq \frac{\lVert \mathbf{E} \rVert_2}{\min_{j \neq k} \lvert \lambda_k - \lambda_j \rvert} + \mathcal{O}\left(\lVert \mathbf{E} \rVert_2^2\right), \forall k \in \{1, 2, \dots, pd\}.
	\end{equation*}
\end{proof}

\clearpage

\subsection{Computational complexity}
\label{sec:theoretical_analysis_complexity}

\begin{lemma}[Efficient update of the mean and variance \cite{welford}]
	\label{prop:efficient_update}
	Consider a univariate time series $y_1, y_2, \dots$. Let $\mu_t$ denote the streaming sample mean and $\sigma_t^2$ the streaming sample variance of observations up to time $t$, i.e. $y_1, y_2, \dots, y_t$. Then, given initial conditions $\mu_1$ and $\sigma_1^2$, for $t \geq 2$
	\begin{equation*}
		\begin{aligned}
			\mu_t &= \frac{t-1}{t} \mu_{t-1} + \frac{y_t}{t}, \\
			\sigma_t^2 &= \frac{t-1}{t} \sigma_{t-1}^2 + \frac{(y_t - \mu_t)(y_t - \mu_{t-1})}{t}.
		\end{aligned}
	\end{equation*}
	Therefore, the sample mean and variance can be sequentially updated with $\mathcal{O}(1)$ time complexity sequentially without having to store all previous observed values.
\end{lemma}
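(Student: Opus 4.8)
The plan is to prove Lemma \ref{prop:efficient_update} by induction on $t$, establishing the two recursions separately and then observing that each update clearly costs $\mathcal{O}(1)$ time and space. First I would set up notation: write $\bar{y}_t = \frac{1}{t}\sum_{i=1}^t y_i$ for the sample mean and, choosing the divisor-by-$t$ convention consistent with the stated formulas, $\sigma_t^2 = \frac{1}{t}\sum_{i=1}^t (y_i - \bar{y}_t)^2$ for the sample variance. The mean recursion is immediate: from $t\bar{y}_t = (t-1)\bar{y}_{t-1} + y_t$ we divide by $t$ to obtain $\mu_t = \frac{t-1}{t}\mu_{t-1} + \frac{y_t}{t}$, which also yields the useful identity $\mu_t - \mu_{t-1} = \frac{y_t - \mu_{t-1}}{t}$ that feeds into the variance step.

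For the variance, I would introduce the sum of squared deviations $S_t := \sum_{i=1}^t (y_i - \mu_t)^2 = t\sigma_t^2$ and show $S_t = S_{t-1} + (y_t - \mu_t)(y_t - \mu_{t-1})$; dividing by $t$ and writing $S_{t-1} = (t-1)\sigma_{t-1}^2$ then gives exactly the claimed recursion. To establish the $S_t$ update, I would expand $S_t = \sum_{i=1}^{t} (y_i - \mu_t)^2$, split off the $i=t$ term, and rewrite each $y_i - \mu_t = (y_i - \mu_{t-1}) - (\mu_t - \mu_{t-1})$ for $i \le t-1$. Expanding the square produces $S_{t-1}$, a cross term $-2(\mu_t-\mu_{t-1})\sum_{i=1}^{t-1}(y_i - \mu_{t-1})$ which vanishes since $\mu_{t-1}$ is the mean of $y_1,\dots,y_{t-1}$, and a term $(t-1)(\mu_t - \mu_{t-1})^2$. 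Combining this with the $i=t$ contribution $(y_t - \mu_t)^2$ and substituting $\mu_t - \mu_{t-1} = \frac{y_t-\mu_{t-1}}{t}$ and $y_t - \mu_t = \frac{(t-1)(y_t - \mu_{t-1})}{t}$, the algebra collapses to $S_{t-1} + (y_t - \mu_t)(y_t - \mu_{t-1})$ after factoring out $(y_t-\mu_{t-1})^2/t^2$ from the remaining quadratic terms.

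I expect the only genuinely delicate point to be the bookkeeping in that quadratic simplification — making sure the coefficients $(t-1) + (t-1)^2 = (t-1)t$ combine correctly so that the net coefficient of $(y_t - \mu_{t-1})^2$ is $\frac{t-1}{t}$, which is precisely $(y_t - \mu_t)(y_t - \mu_{t-1})/(y_t-\mu_{t-1})^2$ evaluated via the two substitutions. The remaining claim about complexity is then essentially a remark: each of $\mu_t$ and $\sigma_t^2$ is computed from $\mu_{t-1}$, $\sigma_{t-1}^2$, the current observation $y_t$, and the counter $t$ using a fixed number of arithmetic operations, so only a constant amount of state is carried forward and the per-step cost is $\mathcal{O}(1)$, with no need to retain $y_1, \dots, y_{t-1}$. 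Finally I would note that the base case at $t=2$ (or whatever initial conditions $\mu_1, \sigma_1^2$ are prescribed) is verified directly from the definitions, completing the induction.
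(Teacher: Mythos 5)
Your proposal is correct: the mean recursion, the sum-of-squared-deviations identity $S_t = S_{t-1} + (y_t-\mu_t)(y_t-\mu_{t-1})$ (with the coefficient check $(t-1)+(t-1)^2=(t-1)t$ working out exactly as you anticipate), and the choice of the divide-by-$t$ variance convention are all consistent with the stated formulas, and the $\mathcal{O}(1)$ claim follows immediately. The paper itself gives no proof of this lemma --- it simply cites Welford --- so your derivation, which is the standard one for Welford's online algorithm, supplies exactly what the paper omits.
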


\begin{lemma}[Computational complexity of DMD defined in Algorithm \ref{alg:dmd}]
	\label{prop:dmd_complexity}
	Given a $n$-dimensional snapshots matrix $\mathbf{x} = [\mathbf{x}_1, \mathbf{x}_2, \dots, \mathbf{x}_m] \in \mathbb{R}^{n \times m}$ and a fixed SVD rank $r \leq \min{(m, n)}$, Algorithm \ref{alg:dmd} extracts the modes $\mathbf{\Phi} \in \mathbb{C}^{n \times r}$, the dynamics $\mathbf{\Omega} \in \mathbb{C}^{r \times r}$ and the low-rank reconstruction of the inputs $\mathbf{\widehat{x}} \in \mathbb{R}^{n \times m}$ with time complexity $\mathcal{O}(m n \cdot \min \{m, n\})$. 
\end{lemma}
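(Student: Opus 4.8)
The plan is to walk through Algorithm \ref{alg:dmd} line by line and bound the cost of each operation, then take the dominant term. First I would observe that forming the lagged matrices $\mathbf{X}$ and $\mathbf{Y}$ from the input $\mathbf{x} \in \mathbb{R}^{n \times m}$ is free (or $\mathcal{O}(nm)$ if one copies the data), since they are just column-slices of $\mathbf{x}$, each of size $n \times (m-1)$. The first substantive step is the (thin/truncated) SVD of $\mathbf{X} \in \mathbb{R}^{n \times (m-1)}$: computing a rank-$r$ truncated SVD, or a full thin SVD followed by truncation, costs $\mathcal{O}\big((m-1)\,n \cdot \min\{m-1, n\}\big) = \mathcal{O}(mn \cdot \min\{m,n\})$ by standard dense linear algebra (e.g. Golub--Van Loan). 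This is already the claimed bound, so the remaining task is to check that no subsequent step exceeds it.

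Next I would bound the remaining lines. The reduced operator $\mathbf{\tilde A}_r = \mathbf{U}^{\ast}\mathbf{Y}\mathbf{V}\mathbf{\Sigma}^{-1}$ is a product of matrices of sizes $(r \times n)$, $(n \times (m-1))$, $((m-1) \times r)$ and $(r \times r)$; multiplying these in sequence costs $\mathcal{O}(rn(m-1)) + \mathcal{O}(r(m-1)r) + \mathcal{O}(r^3) = \mathcal{O}(rnm)$, which is $\le \mathcal{O}(nm\cdot\min\{m,n\})$ since $r \le \min\{m,n\}$. The eigendecomposition of the $r \times r$ matrix $\mathbf{\tilde A}_r$ costs $\mathcal{O}(r^3)$, and forming $\mathbf{\Omega} = \diag(\log \lambda_j / \Delta t)$ is $\mathcal{O}(r)$. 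The modes $\mathbf{\Phi} = \mathbf{Y}\mathbf{V}\mathbf{\Sigma}^{-1}\mathbf{W}$ again cost $\mathcal{O}(rnm)$ by associative multiplication, and $\mathbf{b} = \mathbf{\Phi}^{\dagger}\mathbf{x}_1$ costs $\mathcal{O}(nr^2 + r^3)$ to form the pseudo-inverse of the $n \times r$ matrix $\mathbf{\Phi}$ and apply it. Finally the reconstruction loop over $k \in \{1,\dots,m\}$ computes $\mathbf{\hat x}_k = \mathbf{\Phi}\exp(\mathbf{\Omega}(k-1)\Delta t)\mathbf{b}$; since $\mathbf{\Omega}$ is diagonal, each $\exp(\mathbf{\Omega}(k-1)\Delta t)\mathbf{b}$ is an $\mathcal{O}(r)$ elementwise operation and the multiplication by $\mathbf{\Phi} \in \mathbb{C}^{n \times r}$ is $\mathcal{O}(nr)$, for a total of $\mathcal{O}(mnr)$ over all $m$ iterations. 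Summing, every step is $\mathcal{O}(nm \cdot \min\{m,n\})$ or smaller, and the SVD term dominates (or ties), giving the claimed complexity.

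The only mild subtlety I expect is justifying the $\mathcal{O}(mn\cdot\min\{m,n\})$ cost for the SVD step in a way that is robust to which SVD variant Algorithm \ref{alg:dmd} actually invokes: a full SVD of an $n \times (m-1)$ matrix is $\mathcal{O}(\min\{n,m\}^2 \max\{n,m\})$, which equals $\mathcal{O}(nm\cdot\min\{n,m\})$ only after noting $\min\{n,m\}\max\{n,m\} = nm$; a truncated rank-$r$ randomized or Krylov SVD is cheaper still. I would state this identity explicitly and cite a standard reference for dense SVD complexity. Everything else is a routine accounting of matrix-product dimensions, with the repeated use of $r \le \min\{m,n\}$ to absorb the low-rank terms into the dominant one.
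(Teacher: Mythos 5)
Your proposal is correct and follows essentially the same route as the paper: a line-by-line accounting of Algorithm \ref{alg:dmd} in which the dense SVD of $\mathbf{X}$ supplies the dominant $\mathcal{O}(mn\cdot\min\{m,n\})$ term and every subsequent step (eigendecomposition, pseudo-inverse, reconstruction) is absorbed using $r \leq \min\{m,n\}$. The only cosmetic difference is that you additionally cost out the intermediate matrix products for $\mathbf{\tilde A}_r$ and $\mathbf{\Phi}$, which the paper leaves implicit, and you cite generic dense-SVD complexity where the paper spells out the bidiagonalisation-plus-QR route.
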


\begin{proof}
	
	The computational complexity of DMD algorithm is primarily determined by the SVD step, which is performed on the first lagged matrix $\mathbf{X} \in \mathbb{R}^{n \times m}$. The SVD of $\mathbf{X}$ can be performed in two steps \cite{svd_complexity}: the first one is to reduce $\mathbf{X}$ to a bidiagonal matrix, which can be done in $\mathcal{O}(m n \cdot \min \{m, n\})$ operations; the second one consists in computing the SVD of this new bidiagonal matrix using QR algorithm, which can be done in $\mathcal{O}(\min \{n, m\})$ operations. Therefore, the total cost for the computation of the SVD of $\mathbf{X}$ is $\mathcal{O}(m n \cdot \min \{m, n\})$ since the first step dominates the complexity. 
	
	The eigendecomposition of the projected DMD operator $\mathbf{\Tilde{A}}_r \in \mathbb{C}^{r \times r}$ can be computed using QR algorithm to obtain all eigenpairs, with a time complexity of $\mathcal{O}(r^3)$. However, one can easily show that $r \leq \min{(m, n)}$ implies $r^3 \leq m n \cdot \min \{m, n\}$: taking without loss of generality $m \leq n$, i.e. $\min{(m, n)} = m$, we have $r \leq m$ and thus 
	\begin{equation*}
		r^3 \leq m^3 \leq m^2 n = m n \cdot \min{(m, n)}.
	\end{equation*}
	Therefore, this shows that the eigendecomposition of the projected DMD operator step is dominated by the earlier SVD computation of $\mathbf{X}$. 
	
	The calculation of the pseudo-inverse of $\mathbf{\Phi} \in \mathbb{C}^{n \times r}$ relies on the SVD of $\mathbf{\Phi}$: indeed, if $\mathbf{\Phi}~=~\mathbf{U}_{\mathbf{\Phi}} \mathbf{\Sigma}_{\mathbf{\Phi}} \mathbf{V}_{\mathbf{\Phi}}^{\ast}$ where $\mathbf{U}_{\mathbf{\Phi}} \in \mathbb{C}^{n \times n}$, $ \mathbf{\Sigma}_{\mathbf{\Phi}} \in \mathbb{C}^{n \times r}$ and $\mathbf{V}_{\mathbf{\Phi}} \in \mathbb{C}^{r \times r}$, then $\mathbf{\Phi}^{\dagger} =  \mathbf{V}_{\mathbf{\Phi}} \mathbf{\Sigma^{\dagger}}_{\mathbf{\Phi}} \mathbf{U}_{\mathbf{\Phi}}^{\ast}$ where $\mathbf{\Sigma^{\dagger}}_{\mathbf{\Phi}}$ is the pseudo-inverse of $\mathbf{\Sigma}_{\mathbf{\Phi}}$ obtained by replacing each non-zero diagonal entry by its reciprocal and transposing the resulting matrix. This shows that $\mathbf{\Phi}^{\dagger}$ can be estimated in $\mathcal{O}(n r \cdot \min \{ n, r \}) = \mathcal{O}(n r^2)$ operations, which is dominated by the complexity of the SVD computation of $\mathbf{X}$. 
	
	Moreover, the reconstruction steps consist of $m$ matrix multiplications, each making $\mathcal{O}(n r)$ operations, hence resulting in the final $\mathcal{O}(m n r)$ time complexity, again dominated by $\mathcal{O}(m n \cdot \min \{ m, n \})$.
	
	The adaptive EWMA update steps can be performed with $\mathcal{O}\left(1\right)$ complexity as shown in Lemma \ref{prop:efficient_update}. 
	
	Therefore, Algorithm \ref{alg:dmd} has $\mathcal{O}(m n \cdot \min \{m, n\})$ time complexity.
	
\end{proof} 

\begin{reptheorem}{th:complexity}
	The computational complexity of sequence processing steps at each time step $t$ in Algorithm \ref{alg:single_changepoint} in the context of single changepoint detection is $\mathcal{O}\left(pd(w-d) \cdot \min \{ pd, w-d \}\right)$, where $p$ is the dimension of the input stream, $w$ is the window length and $d$ denotes the auto-regressive order. 
\end{reptheorem}

\begin{proof}
	From Algorithm \ref{alg:single_changepoint}, one can notice that at each time step $t$, the new datum $x_t$ is received and processed to construct the new Hankel batch $\mathcal{X}_t \in \mathbb{R}^{pd \times w-d+1}$ where $p$ is the input stream dimension, $w$ is the window length and $d$ is the auto-regressive order. The time complexity of these formatting steps is dominated by the DMD step processing $\mathcal{X}_t$. Applying Lemma \ref{prop:dmd_complexity}, the time complexity of this DMD step on the new Hankel batch is $\mathcal{O}\left(pd (w-d+1) \min \{ pd, w-d+1 \}\right)$, with $w \in \mathbb{N}$ and $d \in \mathbb{N}, d \leq w$. 
	
	\begin{itemize}
		\item If $pd \leq w-d+1$, then 
		\begin{equation*}
			\mathcal{O}\left(pd (w-d+1) \min \{ pd, w-d+1 \}\right) = \mathcal{O}\left(p^2 d^2 (w-d) \right).
		\end{equation*}
		\item If $pd > w-d+1$, i.e. $w-d \leq pd$, then 
		\begin{equation*}
			\mathcal{O}\left(pd (w-d+1) \min \{ pd, w-d+1 \}\right) = \mathcal{O}\left(p d (w-d)^2 \right).
		\end{equation*}
	\end{itemize}
	
	Lemma \ref{prop:efficient_update} shows that the further adaptive EWMA statistics updates are dominated by the computational complexity of the previous DMD algorithm. Therefore, the time complexity of Algorithm \ref{alg:single_changepoint} is $\mathcal{O}\left(pd (w-d) \min \{ pd, w-d \}\right)$. 
	
\end{proof}

\begin{figure}[t]
	\centering
	\includegraphics[width=\linewidth]{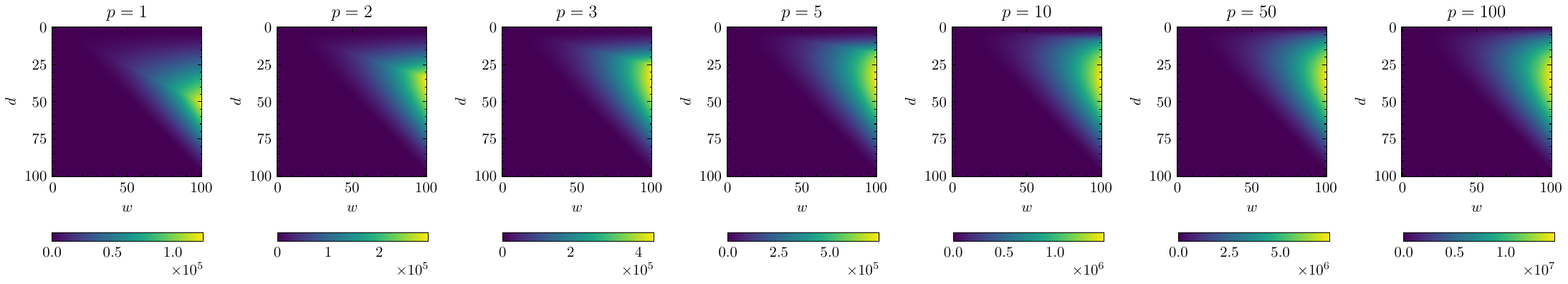}
	\caption{Illustration of $(d, w) \mapsto pd (w-d) \min \{ pd, w-d \}$ for different values of $p$ and $1~\leq~d~\leq~w~\leq~100$.}
	\label{fig:complexity}
\end{figure}

\begin{corollary}[Computational complexity of hyperparameter selection] \label{th:complexity_model_selection}
	During hyperparameter selection as described in Algorithm \ref{alg:model_selection}, for any considered model $\theta = \{w, d, r\} \in \Theta$ as defined in Equation~\eqref{eq:parameters_constraints}, the total number of operations is maximized when $d =\left[ \frac{w}{p+1} \right]$ or $d =\left[ \frac{w}{3} \right]$ and the worst case time complexity of Algorithm \ref{alg:model_selection} is $\mathcal{O}\left(p T_0^3\right)$.
\end{corollary}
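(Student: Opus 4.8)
The plan is to reduce the claim to an explicit two‑variable optimisation, solve it by elementary calculus first in $d$ and then in $w$, and read off the $\mathcal{O}(pT_0^3)$ bound. First I would invoke Theorem~\ref{th:complexity} (which itself rests on Lemma~\ref{prop:dmd_complexity}): for a fixed candidate model $\theta=(w,d,r)\in\Theta$, each pass of the inner loop of Algorithm~\ref{alg:model_selection} is dominated by the DMD call on the Hankel batch $\mathcal{X}_t\in\mathbb{R}^{pd\times(w-d+1)}$, at cost $\mathcal{O}\!\left(pd(w-d)\min\{pd,\,w-d\}\right)$; the Hankel assembly, the unrolling step, the Frobenius-norm error and the $\mathcal{O}(1)$ streaming updates of Lemma~\ref{prop:efficient_update} are all dominated (the harmless ``$+1$'' discrepancy between $w-d$ and $w-d+1$ is absorbed by the $\mathcal{O}(\cdot)$, and the degenerate case $d=w$ is negligible). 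Since this bound does not depend on $r$, the cost attached to $\theta$ depends on $(w,d)$ only through $f(w,d)\vcentcolon=pd(w-d)\min\{pd,\,w-d\}$, and it remains to maximise $f$ over integers $1\le d\le w\le T_0$. I would relax $d$ to a real variable and round at the end.

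Next I would optimise $f(w,\cdot)$ for fixed $w$ by splitting at the threshold $d_0\vcentcolon=w/(p+1)$, where $pd=w-d$. On $(0,d_0]$ we have $f=p^2d^2(w-d)$ with $\partial_d f=p^2 d(2w-3d)>0$ (because $d_0<2w/3$), so $f$ increases up to $d=d_0$. On $[d_0,w]$ we have $f=pd(w-d)^2$ with $\partial_d f=p(w-d)(w-3d)$, positive for $d<w/3$ and negative for $d>w/3$. Hence if $p\le2$ then $d_0\ge w/3$ and $f$ is non-increasing on $[d_0,w]$, so the maximiser is $d=d_0=w/(p+1)$; if $p\ge3$ then $d_0<w/3$ and $f$ has an interior maximiser at $d=w/3$, and comparing $f(w/3)=\tfrac{4p}{27}w^3$ with $f(d_0)=\tfrac{p^3}{(p+1)^3}w^3$ — equivalently $4(p+1)^3$ with $27p^2$, the former dominating for all $p\ge2$ with equality exactly at $p=2$ — shows the maximum sits at $d=w/3$. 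In all cases the continuous maximiser is $w/(p+1)$ or $w/3$, and since $f(w,\cdot)$ is unimodal on each of the two pieces, the integer maximiser is the corresponding rounded value $[w/(p+1)]$ or $[w/3]$, as claimed.

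Finally I would optimise over $w$: for fixed $d$, both $w-d$ and $\min\{pd,\,w-d\}$ are non-decreasing in $w$, so $f(w,d)$ is non-decreasing in $w$ and the worst case is attained at $w=T_0$. Substituting $w=T_0$ with the optimal $d$ yields $f(T_0,T_0/3)=\tfrac{4p}{27}T_0^3$ for $p\ge2$ and $f(T_0,T_0/2)=\tfrac18 T_0^3$ for $p=1$, so in every case the worst-case per-step cost over $\Theta$ is at most $\tfrac{4p}{27}T_0^3=\mathcal{O}(pT_0^3)$; combined with the reduction in the first step, this gives the stated $\mathcal{O}(pT_0^3)$ worst-case complexity of Algorithm~\ref{alg:model_selection}.

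I expect the main obstacle to be the low-dimensional bookkeeping rather than any deep idea: one must treat $p=1$, $p=2$ and $p\ge3$ separately because the position of the threshold $d_0$ relative to $w/3$ flips, verify the inequality $4(p+1)^3\ge27p^2$ and pin down that equality holds only at $p=2$ (which is precisely why the statement lists \emph{two} candidate values of $d$), and confirm that replacing the real optimiser by the nearest feasible integer keeps the true discrete maximiser at one of the two stated points for $T_0$ large enough; I would also double-check that the ``$+1$'' terms in the DMD cost and the boundary case $d=w$ do not affect the asymptotics.
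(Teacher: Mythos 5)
Your proposal is correct and follows essentially the same route as the paper's proof: reduce the per-model cost to $pd(w-d)\min\{pd,\,w-d\}$ via Theorem~\ref{th:complexity}, split on which argument of the $\min$ is active (equivalently at $d_0=w/(p+1)$), optimise $d$ in each regime to obtain the candidates $\left[\frac{w}{p+1}\right]$ and $\left[\frac{w}{3}\right]$, and evaluate at $w=T_0$ to get $\mathcal{O}(pT_0^3)$. Your write-up is in fact slightly more careful than the paper's, since you explicitly compare the two regime maxima (the inequality $4(p+1)^3\ge 27p^2$) and separate the cases $p=1$, $p=2$, $p\ge 3$, whereas the paper only exhibits the two candidates without the cross-regime comparison.
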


\begin{proof}
	Assume one monitors a multivariate $p$-dimensional input stream. Recall that within the hyperparameter selection phase, given a burn-in period $T_0$, we consider the set of possible models $\Theta$ defined as
	\begin{equation*}
		\Theta \vcentcolon = \left\{ (w, d, r) \in \mathbb{N}^3 \mid w \leq T_0, d \leq w, r \leq \min{(pd, w-d+1)} \right\}.
	\end{equation*}
	Considering that competing model parameters $\theta = ( w, d, r ) \in \Theta$ may be run in parallel within the burn-period, the time complexity of the complete hyperparameter selection process is equal to the worst case time complexity among all considered models. One may notice that steps within the burn-in period defined in Algorithm \ref{alg:model_selection} have the same time complexity as the one stated in Theorem \ref{th:complexity}. Hence, considering a triplet $\theta = ( w, d, r ) \in \Theta$, 
	\begin{itemize}
		\item $(i)$: $pd \leq w-d$, then $pd (w-d) \cdot \min \{ pd, w-d \} = p^2 d^2 (w-d)$. Assuming fixed window length $w$, the function $d \mapsto p^2 d^2 (w - d)$ is maximised for $d = \left[ \frac{2w}{3} \right]$, i.e. the closest integer to $\frac{2w}{3}$. However, notice that $pd \leq w-d \Leftrightarrow d \leq \frac{w}{p+1}$. Thus, since the function $d \mapsto p^2 d^2 (w - d)$ is monotically increasing on $\{ 1, 2, \dots, \left[ \frac{2w}{3} \right] \}$, the closest integer on the considered domain of values is $d = \left[ \frac{w}{p+1} \right]$, and thus $p^2 d^2 (w-d) \lesssim w^3 \leq T_0^3$.
		\item $(ii)$: if $pd > w-d$, then $pd (w-d) \cdot \min \{ pd, w-d \} = pd (w-d)^2$. For any window length $w \leq T_0$, this cost is maximized for $d = \left[ \frac{w}{3} \right]$. However, notice that $pd > w-d \Leftrightarrow d > \frac{w}{p+1} \geq \frac{w}{2}$. If $p = 1$, then the closest integer on the domain is $\left[ \frac{w}{2} \right]$ as in $(i)$, otherwise $d = \left[ \frac{w}{3} \right]$, and thus $pd (w-d)^2 \lesssim p w^3 \leq p T_0^3$.
	\end{itemize}
	
\end{proof}

\paragraph{Consequences.}
Corollary \ref{th:complexity_model_selection} leverages Theorem \ref{th:complexity} result, assuming that competing models can be run in parallel on the burn-in period, to derive a maximum bound of the computational complexity of our proposed model selection procedure. 
Both results are useful to derive the complexity of our multiple changepoints detection algorithm with hyperparameter selection, described in Algorithm \ref{alg:multiple_changepoints_with_model_selection}, where parameters are adapted automatically after a change is detected with complexity $\mathcal{O}\left(p T_0^3\right)$, otherwise the data stream is processed with complexity $\mathcal{O}(pd(w-d) \cdot \min \{ pd, w-d \})$ on segments without changes.

\newpage
\section{Experiments and simulations}
\label{sec:experiments_details}

\subsection{Computing resources}
\label{sec:computing_resources}

The experiments were conducted on a high-performance computing cluster, which comprises dual Xeon CPUs with 4 to 10 cores per CPU and 16 to 256 GB of memory per node. The total computation time for the experiments was a couple of hours to 2-3 days for the longest simulations.

\subsection{Synthetic data}
\label{sec:experiment_details_synthetic}

\subsubsection{Data generation process}
\label{sec:data_generation_process}

We consider the following univariate time series generating process
\begin{equation*}
	x_t = \sum_{k=1}^{N} \alpha_k \sin{(\omega_k t)} + \beta t + \gamma + \mathcal{N}(0, \sigma^2), \text{ where}
\end{equation*}

\begin{itemize}
	\item $N$ denotes the number of periodicities in the dynamics,
	\item $\omega_k$ is the $k^{\text{th}}$ periodicity, $k \in \{ 1, 2, \dots, N \}$,
	\item $\alpha_k$ corresponds to the amplitude associated to the $k^{\text{th}}$ periodicity,
	\item $\beta$ denotes a linear trend coefficient,
	\item $\gamma$ is the stream location at initialisation, 
	\item $\sigma^2$ accounts for the noise variance.
\end{itemize}

For notational simplicity when considering a single periodicity ($N = 1$), we denote $\alpha_1$ as $\alpha$ and $\omega_1$ as $\omega$. 

The time series are generated for $t \in \{1, 2, \dots, 600\}$ with fixed parameters before and after the changepoint at time $t = \tau = 300$. Periodicities are chosen to avoid discontinuities at the changepoint location that would make the detection of subtle changes in periodicity or amplitude trivial.

We simulate a wide range of change types and magnitudes, covering diverse scenarios to thoroughly assess each algorithm's performance. The pre-change and post-change process parameters used in the simulations are presented, ensuring a comprehensive exploration of parameter spaces.

\paragraph{Change in periodicity.}
In this type of change, we consider a single periodicity ($N = 1$), and fix parameters such that $\alpha_1 = 1$, $\beta = 0$, $\gamma = 0$, $\sigma = 0.1$ and the only varying parameter is the periodicity $\omega$, which is chosen such that $\omega = \frac{6 \pi}{75}$ when $t < \tau = 300$ and $\omega \in \{ \frac{5 \pi}{75}, \frac{7 \pi}{75}, \frac{8 \pi}{75} \}$ when $t \geq \tau = 300$.

\paragraph{Change in location.}
In this type of change, we consider a single periodicity ($N = 1$), and fix parameters such that $\omega = \frac{4 \pi}{75}$, $\alpha = 1$, $\beta = 0$, $\sigma = 0.1$ and the only varying parameter is the location $\gamma$, which is chosen such that $\gamma = 0$ when $t < \tau = 300$ and $\gamma \in \{ -0.5, 0.5, 1 \}$ when $t \geq \tau = 300$.

\paragraph{Change in amplitude.}
In this type of change, we consider a single periodicity ($N = 1$), and fix parameters such that $\omega = \frac{13 \pi}{150}$, $\beta = 0$, $\gamma = 0$, $\sigma = 0.1$ and the only varying parameter is the amplitude $\alpha$, which is chosen such that $\alpha = 1$ when $t < \tau = 300$ and $\alpha \in \{ 0.5, 2, 3 \}$ when $t \geq \tau = 300$.

\paragraph{Change in trend.}
In this type of change, we consider a single periodicity ($N = 1$), and fix parameters such that $\omega = \frac{10 \pi}{75}$, $\alpha = 1$, $\sigma = 0.1$ and the only varying parameters are the trend $\beta$ and the location $\gamma$, which are chosen such that $\beta = \frac{1}{30}$ and $\gamma = 0$ when $t < \tau = 300$ and $\beta \in \{ - \frac{1}{30}, 0, \frac{2}{30} \}$ and $\gamma = 10$ when $t \geq \tau = 300$.

\paragraph{Change in mean.}
In this type of change, we consider no periodicity ($N = 0$), and fix parameters such that $\beta = 0$, $\sigma = 1$ and the only varying parameter is the location $\gamma$, which is chosen such that $\gamma = 0$ when $t < \tau = 300$ and $\gamma \in \{ -2, 3, 4 \}$ when $t \geq \tau = 300$.

\paragraph{Change in variance.}
In this type of change, we consider no periodicity ($N = 0$), and fix parameters such that $\beta = 0$, $\gamma = 0$ and the only varying parameter is the noise variance $\sigma$, which is chosen such that $\sigma = 0.1$ when $t < \tau = 300$ and $\sigma \in \{ 0.2, 0.3, 0.4 \}$ when $t \geq \tau = 300$.

\paragraph{Change in double periodicity.}
In this type of change, we consider two periodicities ($N = 2$), and fix parameters such that $\alpha_1 = \alpha_2 = 1$, $\beta = 0$, $\gamma = 0$, $\sigma = 0.1$ and the only varying parameters are the two periodicities $\omega_1$ and $\omega_2$, which are chosen such that $(\omega_1, \omega_2) = (\frac{9 \pi}{75}, \frac{6 \pi}{75})$ when $t < \tau = 300$ and $(\omega_1, \omega_2) \in \left\{ (\frac{3 \pi}{75}, \frac{5 \pi}{75}), (\frac{9 \pi}{75}, \frac{3 \pi}{75}), (\frac{9 \pi}{75}, \frac{4 \pi}{75}) \right\}$ when $t \geq \tau = 300$.

\subsubsection{Models parameters}
\label{sec:synthetic_models_parameters}

In this section, we detail the different methods and parameters that were explored through simulations on synthetic data in Section \ref{sec:experiments_synthetic}. We investigate a variety of parameters across all models, attempting to make the comparison as fair as possible. The burn-in periods (or equivalent parameters) are configured \textit{similarly} for each model to facilitate equivalent initialisations. 

\paragraph{CPDMD.}

\begin{itemize}
	
	\item \textbf{Burn-in period} $T_0 = 100$. 
	
	As discussed in Section \ref{sec:model_selection}, the selection of the burn-in period $T_0$ directly generates the set of possible parameters $\Theta$, defined as 
	\begin{equation*}
		\Theta = \left\{ (w, d, r) \in \mathbb{N}^3 \mid w \leq T_0, d \leq w, r \leq \min{(pd, w-d+1)} \right\},
	\end{equation*}
	where $p$ is the dimension of the input stream. Due to computational constraints, not all triplets can be explored on the burn-in period. We follow the following generation process for $\Theta$ which only requires the burn-in period $T_0$, and  restricts the explored parameters to
	\begin{itemize}
		\item Window length $w \in \{ 0.4, 0.6, 0.8 \} \cdot T_0$,
		\item Auto-regressive order $d \in \{ 0.05, 0.1, 0.2, 0.4 \} \cdot T_0$,
		\item Truncation rank $r \in 2 \cdot \{1, 2, \dots, p\} = \{2, 4\}$ since $p=1$.
	\end{itemize}
	\item \textbf{Learning rate} $\lambda = 0.05 \text{ (default)}, 0.10$ of the adaptive EWMA algorithm.
	\item \textbf{Control limit} $L = 1.5, 2.5, 3.5, 4.5 \text{ (default)}, 5.5$ of the adaptive EWMA algorithm.
\end{itemize}

\paragraph{mSSA and mSSA-MW.}
We use the Python implementation provided by the authors of \cite{mssa} and available via the GitHub repository \href{https://github.com/ArwaAlanqary/mSSA_cpd}{\small \texttt{https://github.com/ArwaAlanqary/mSSA\_cpd}}. We follow the authors guidelines and consider the following parameters:
\begin{itemize}
	\item \textbf{Window size} $= 90, 100 \text{ (default)}, 120$.
	\item \textbf{Rows} $= 3, 7, 10 \text{ (default)}, 13, 15$.
	\item \textbf{Distance threshold} $= 1, 5 \text{ (default)}, 10$.
	\item \textbf{Rank} $= 0.95$.
	\item \textbf{Training size} $= 0.9$.
\end{itemize}

\paragraph{RuLSIF.}
We use the MATLAB implementation provided by the authors of \cite{rulsif} and available via the GitHub repository \href{https://github.com/anewgithubname/change_detection}{\small \texttt{https://github.com/anewgithubname/change\_detection}}. While RuLSIF is introduced as a retrospective changepoint detection method, its implementation is analogous to online context, where a threshold needs to be set (and cannot be known in advance) to make online decisions. We follow the authors guidelines and consider the following parameters:
\begin{itemize}
	\item $\boldsymbol{n} = 25, 50 \text{ (default)}, 75$.
	\item $\boldsymbol{k} = 5, 10 \text{ (default)}, 15$.
	\item $\boldsymbol{\alpha} = 0.1$.
	\item \textbf{Threshold} $= 0.001, 0.01, 0.05, 1, 2 \text{ (default)}, 3, 4, 5$.
\end{itemize}

\paragraph{BOCPDMS.}
We use the Python implementation provided by the authors of \cite{bocpdms} and avaialble via the GitHub repository \href{https://github.com/alan-turing-institute/bocpdms}{\small \texttt{https://github.com/alan-turing-institute/bocpdms}}. We consider the same following grid of parameters as in \cite{mssa}
\begin{itemize}
	\item \textbf{Prior on $\boldsymbol{a}$} $= 0.01, 1 \text{ (default)}, 100$.
	\item \textbf{Prior on $\boldsymbol{b}$} $= 0.01, 1 \text{ (default)}, 100$.
	\item \textbf{Intensity} $= 50, 100 \text{ (default)}, 200$.
\end{itemize}

\paragraph{EWMA.}
We use a self-implementation of the EWMA algorithm introduced by \cite{ewma} in Python, and we consider the following grid of parameters:
\begin{itemize}
	\item \textbf{Burn-in period} $T_0 = 100$.
	\item \textbf{Learning rate} $r = 0.05 \text{ (default)}, 0.1$.
	\item \textbf{Control limit} $L = 1.5, 2.5 \text{ (default)}, 3.5, 4.5, 5.5$.
\end{itemize}

\paragraph{EWMVar.}
We use a self-implementation of the EWMVar algorithm introduced by \cite{ewmvar} in Python, and we consider the following grid of parameters:
\begin{itemize}
	\item \textbf{Burn-in period} $T_0 = 100$.
	\item \textbf{Learning rate for the variance} $r = 0.01, 0.05 \text{ (default)}$.
	\item \textbf{Learning rate for the mean} $\lambda = 0.2$.
	\item \textbf{Lower control limit} $C_7 = 0.42, 0.63, 0.68 \text{ (default)}, 0.91$.
	\item \textbf{Upper control limit} $C_8 = 1.08 \text{ (default)}, 1.12, 1.14, 1.70$.
\end{itemize}

Since all algorithms have different computational complexites, we may run each of them on a different number of sequences. We provide below in Table \ref{tab:simulations_sequence_samples} a summary of the number of sequences that are used for the simulations.

\clearpage

\begin{table}[h!]
	\centering
	\smaller
	\caption{Number of sequences used to estimate the performance metrics for each algorithm, change type, and change size.}
	\begin{tabular}{l c c}
		\toprule
		& \textbf{Sequences with a single changepoint} & \textbf{Sequences with no changepoint} \\
		\midrule
		\textbf{Sequence length} & $t \in [1, 600]$ & $t \in [1, 100000]$ \\
		\textbf{Metrics} & Precision, recall, $F_1$-Score, $\text{ARL}_1$ & $\text{ARL}_0$ \\
		\midrule
		EWMA & $1000$ & $100$ \\
		EWMVar & $1000$ & $100$ \\
		BOCPDMS & $50$ & N/A \\
		RuLSIF & $100$ & $10$ \\
		mSSA & $1000$ & $100$ \\
		mSSA-MW & $1000$ & $100$ \\
		CPDMD & $1000$ & $100$ \\
		\bottomrule
	\end{tabular}
	\label{tab:simulations_sequence_samples}
\end{table} 

\begin{table}[h!]
	\centering
	\smaller
	\caption{Summary of selected changepoint detection algorithms capabilities.}
	\begin{tabular}{l c c c c}
		\toprule
		\textbf{Algorithm} & \textbf{Online} & \textbf{Nonparametric} & \textbf{Multivariate data} & \textbf{Periodic data} \\
		\midrule
		EWMA & \cmark & \cmark & \xmark & \xmark \\
		EWMVar & \cmark & \cmark & \xmark & \xmark \\
		BOCPDMS & \cmark & \xmark & \cmark & \xmark \\
		RuLSIF & \xmark & \cmark & \cmark & \cmark \\
		mSSA & \cmark & \cmark & \cmark & \cmark \\
		mSSA-MW & \cmark & \cmark & \cmark & \cmark \\
		\textbf{CPDMD} & \cmark & \cmark & \cmark & \cmark \\
		\bottomrule
	\end{tabular}
	\label{tab:methods_capabilities}
\end{table}

\clearpage
\subsubsection{Global performance}
\label{sec:global_with_bocpdms}

We provide below additional results with BOCPDMS performance for simulations in the context of single changepoint detection. Note that Table \ref{tab:performance_synthetic_with_bocpdms} is thus the same as Table \ref{tab:performance_synthetic_without_bocpdms} in Section \ref{sec:experiments_simulations}, with only additional performance metrics for the BOCPDMS algorithm. 

\begin{table}[h!]
	\centering
	\smaller
	\caption{Evaluation of selected changepoint detection algorithms on synthetic data, comparing performance achieved with the parameter set yielding highest $F_1$-Score (\textit{best}) against default parametrisation (\textit{default}). We highlight the top two performance metrics for each set of parameters in bold.}
	\begin{tabular}{l l r@{\hskip\tabcolsep} >{\scriptsize}r@{\hskip\tabcolsep} r@{\hskip\tabcolsep} >{\scriptsize}r@{\hskip\tabcolsep} r@{\hskip\tabcolsep} >{\scriptsize}r@{\hskip\tabcolsep} r@{\hskip\tabcolsep} >{\scriptsize}r@{\hskip\tabcolsep} r@{\hskip\tabcolsep} >{\scriptsize}r@{\hskip\tabcolsep}}
		\toprule
		\textbf{Algorithm} & \textbf{Params.} & \multicolumn{2}{c}{$\boldsymbol{P}$} & \multicolumn{2}{c}{$\boldsymbol{R}$} & \multicolumn{2}{c}{$\boldsymbol{F_1}$} & \multicolumn{2}{c}{$\mathbf{ARL_1}$} & \multicolumn{2}{c}{$\mathbf{ARL_0}$} \\
		\midrule
		\multirow{2}{*}{EWMA} & Best & \textbf{.669} & (.004) & \textbf{.474}  & (.003) & \textbf{.555} & (.003) & 9.78 & (7.44) & \textbf{71085.25} & (41976.94) \\
		& Default & .188 & (.003) & .181 & (.003) & .185 & (.003) & \textbf{5.88} & (5.11) & 4261.15 & (18156.77) \\
		\hline
		\multirow{2}{*}{EWMVar} & Best & .275 & (.005) & .131 & (.002) & .178 & (.003) & 10.96 & (6.86) & \textbf{71202.77} & (45092.24) \\
		& Default & .452 & (.009) & .089 & (.002) & .148 & (.003) & 12.36 & (7.40) & \textbf{99900.00} & (0.00) \\
		\hline
		\multirow{2}{*}{BOCPDMS} & Best & .540 & (.020) & .319 & (.014) & .401 & (.016) & \textbf{4.50} & (5.83) & N/A & (N/A) \\
		& Default & \textbf{.492} & (.020) & .297 & (.013) & .371 & (.015) & \textbf{4.52} & (5.96) & N/A & (N/A) \\
		\hline
		\multirow{2}{*}{RuLSIF} & Best & .537 & (.013) & .414 & (.010) & .468 & (.011) & 19.01 & (6.80) & 41934.74 & (49177.96) \\
		& Default & .448 & (.011) & \textbf{.364} & (.010) & \textbf{.401} &(.010)  & 23.58 & (4.61) & 57147.14 & (49746.53) \\
		\hline
		\multirow{2}{*}{mSSA} & Best & .475 & (.005) & .266 & (.003) & .341 & (.004) & \textbf{6.91} & (6.09) & 43786.39 & (47153.54) \\
		& Default & .304 & (.003) & .210 & (.002) & .248 & (.003) & 6.28 & (5.80) & 27023.57 & (42587.84) \\
		\hline
		\multirow{2}{*}{mSSA-MW} & Best & .443 & (.005) & .313 & (.004)& .367 & (.004) & 8.79 & (5.87) & 13834.54 & (32965.67) \\
		& Default & .043 & (.001) & .041 & (.001) & .042 & (.001) & 9.11 & (7.29) & 8210.28 & (20999.07) \\
		\hline
		\multirow{2}{*}{\textbf{CPDMD (ours})} & Best & \textbf{.960} & (.001) & \textbf{.902} & (.002) & \textbf{.930} & (.002) & 7.11 & (6.14) & 4034.30 & (3997.10) \\
		& Default & \textbf{.978} & (.001) & \textbf{.807} & (.003) & \textbf{.884} & (.002) & 9.53 & (6.47) & \textbf{58341.28} & (38161.07) \\
		\bottomrule
	\end{tabular}
	\label{tab:performance_synthetic_with_bocpdms}
\end{table}

\begin{figure}[h!]
	\centering
	\includegraphics[width=\linewidth]{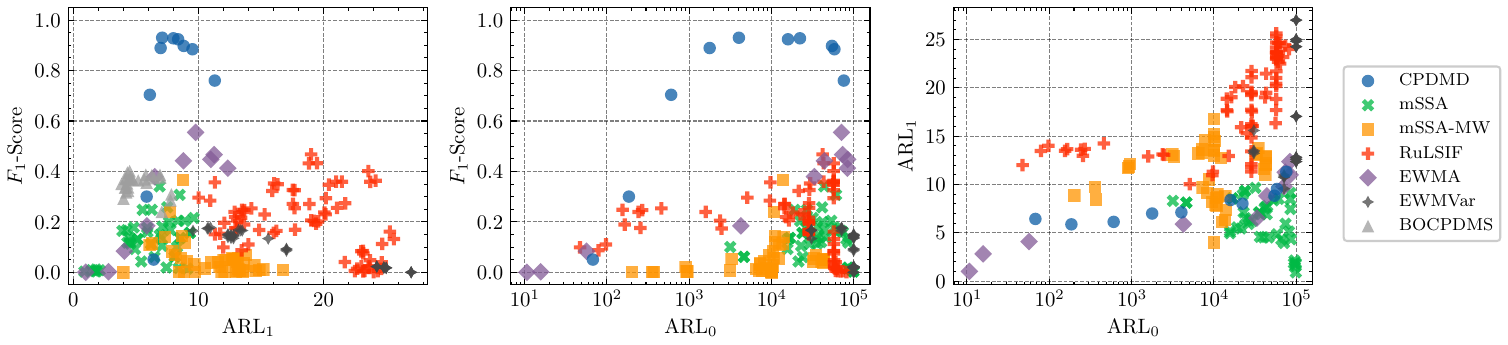}
	\caption{Performance comparison of changepoint detection algorithms on synthetic data across all scenarios. Each dot corresponds to an algorithm with a specific choice of parameter values.}
	\label{fig:simulations_global_with_bocpdms}
\end{figure}

\clearpage
\subsubsection{Change in periodicity}

The synthetic data generation process for this type of change is detailed in Appendix \ref{sec:data_generation_process}, and we provide in Figure \ref{fig:illustration_periodicity} an illustration of this change type and the considered change sizes. The successful detection region corresponds to the time region in which detected changepoints are counted as true positives. With respect to Definition \ref{def:f1}, we use a left margin $\mu_l = 0$ and a right margin $\mu_r = 30$. Performance of the grid of tested models is shown in Figure \ref{fig:simulations_periodicity} and a summary of performance is provided in Table \ref{tab:simulations_periodicity}.

\begin{figure}[h!]
	\centering
	\includegraphics[width=\linewidth]{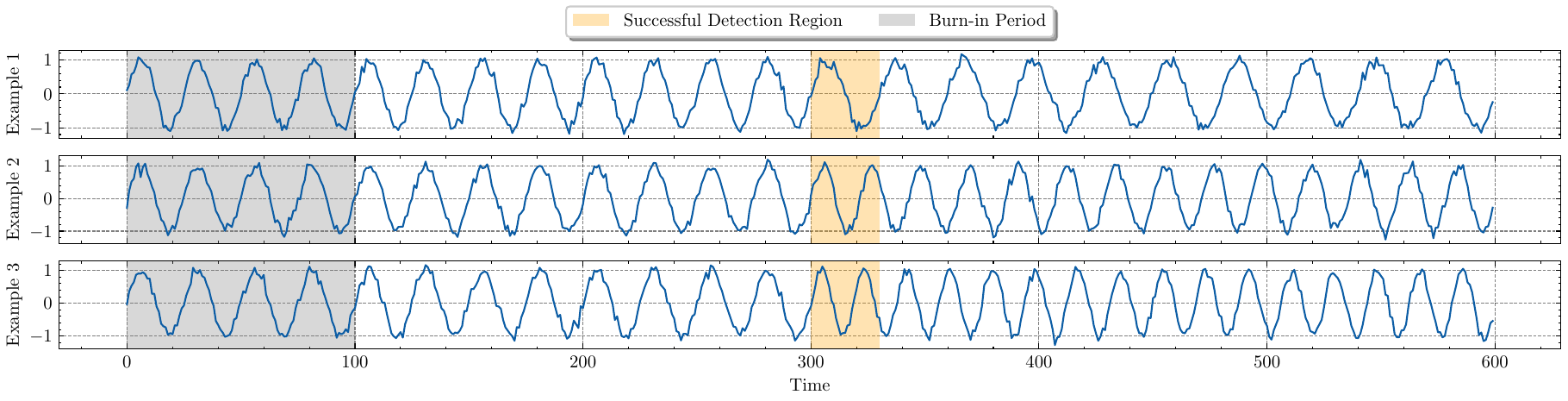}
	\caption{Illustration of the different change sizes considered in the periodicity change simulations.}
	\label{fig:illustration_periodicity}
\end{figure}

\begin{figure}[h!]
	\centering
	\includegraphics[width=\linewidth]{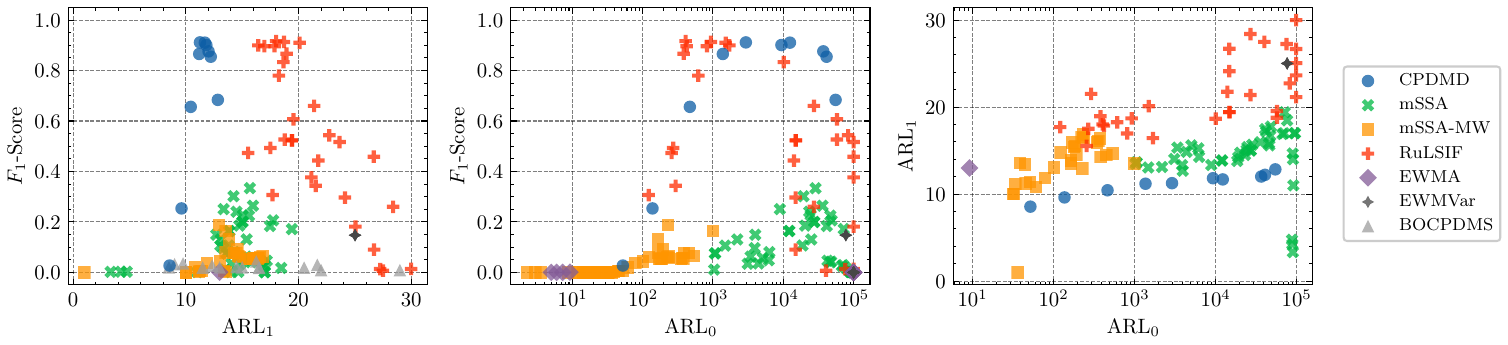}
	\caption{Performance comparison of changepoint detection algorithms on synthetic data for a change in periodicity. Each dot corresponds to an algorithm with a specific choice of parameter values.}
	\label{fig:simulations_periodicity}
\end{figure}

\begin{table}[h!]
	\centering
	\smaller
	\caption{Evaluation of selected changepoint detection algorithms on synthetic data, comparing performance achieved with the parameter set yielding highest $F_1$-Score (\textit{best}) against default parametrisation (\textit{default}). We highlight the top two performance metrics for each set of parameters in {bold}.}
	\begin{tabular}{l l r@{\hskip\tabcolsep} >{\scriptsize}r@{\hskip\tabcolsep} r@{\hskip\tabcolsep} >{\scriptsize}r@{\hskip\tabcolsep} r@{\hskip\tabcolsep} >{\scriptsize}r@{\hskip\tabcolsep} r@{\hskip\tabcolsep} >{\scriptsize}r@{\hskip\tabcolsep} r@{\hskip\tabcolsep} >{\scriptsize}r@{\hskip\tabcolsep}}
		\toprule
		\textbf{Algorithm} & \textbf{Params.} & \multicolumn{2}{c}{$\boldsymbol{P}$} & \multicolumn{2}{c}{$\boldsymbol{R}$} & \multicolumn{2}{c}{$\boldsymbol{F_1}$} & \multicolumn{2}{c}{$\mathbf{ARL_1}$} & \multicolumn{2}{c}{$\mathbf{ARL_0}$} \\
		\midrule
		\multirow{2}{*}{EWMA} & Best & .001 & (.001) & .001 & (.000) & .001 & (.000) & 13.00 & (0.00) & 9.27 & (0.58) \\
		& Default & .001 & (.001) & .001 & (.000) & .001 & (.000) & 13.00 & (0.00) & 9.27 & (0.58) \\
		\hline
		\multirow{2}{*}{EWMVar} & Best & .297 & (.014) & .098 & (.005) & .147 & (.008) & 25.02 & (6.45) & \textbf{77421.77} & (35052.88) \\
		& Default & .000 & (.000) & .000 & (.000) & .000 & (.000) & N/A & (N/A) & \textbf{99900.00} & (0.00) \\
		\hline
		\multirow{2}{*}{BOCPDMS} & Best & .121 & (.059) & .027 & (.014) & .044 & (.021) & 16.25 & (10.21) & N/A & (N/A) \\
		& Default & .041 & (.024) & .020 & (.012) & .027 & (.015) & 12.33 & (7.51) & N/A & (N/A) \\
		\hline
		\multirow{2}{*}{RuLSIF} & Best &\textbf{ .917} & (.016) & \textbf{.917} & (.016) & \textbf{.917} & (.016) & 18.02 & (4.64) & 409.30 & (411.67) \\
		& Default & .000 & (.000) & .000 & (.000) & .000 & (.000) & N/A & (N/A) & \textbf{99920.00} & (0.00) \\
		\hline
		\multirow{2}{*}{mSSA} & Best & .375 & (.010) & .301 & (.009) & .334 & (.009) & 15.68 & (5.62) & \textbf{28707.71} & (42667.33) \\
		& Default & \textbf{.288} & (.010) & \textbf{.188} & (.007) & \textbf{.227} & (.008) & 15.07 & (6.88) & 24466.87 & (40557.90) \\
		\hline
		\multirow{2}{*}{mSSA-MW} & Best & .198 & (.008) & .178 & (.007) & .187 & (.007) & \textbf{12.93} & (6.78) & 229.29 & (207.34) \\
		& Default & .037 & (.003) & .037 & (.003) & .037 & (.003) & \textbf{11.87} & (6.77) & 79.18 & (65.99) \\
		\hline
		\multirow{2}{*}{\textbf{CPDMD (ours)}} & Best & \textbf{.955} & (.004) & \textbf{.872} & (.006) & \textbf{.912} & (.005) & \textbf{11.25} & (2.74) & 2936.04 & (2642.37) \\
		& Default & \textbf{.994} & (.002) & \textbf{.750} & (.008) & \textbf{.855} & (.005) & \textbf{12.21} & (2.56) & 41186.52 & (35630.17) \\
		\bottomrule
	\end{tabular}
	\label{tab:simulations_periodicity}
\end{table}

\clearpage
\subsubsection{Change in location}

The synthetic data generation process for this type of change is detailed in Appendix \ref{sec:data_generation_process}, and we provide in Figure \ref{fig:illustration_location} an illustration of this change type and the considered change sizes. The successful detection region corresponds to the time region in which detected changepoints are counted as true positives. With respect to Definition \ref{def:f1}, we use a left margin $\mu_l = 0$ and a right margin $\mu_r = 30$. Performance of the grid of tested models is shown in Figure \ref{fig:simulations_location} and a summary of performance is provided in Table \ref{tab:simulations_location}.

\begin{figure}[h!]
	\centering
	\includegraphics[width=\linewidth]{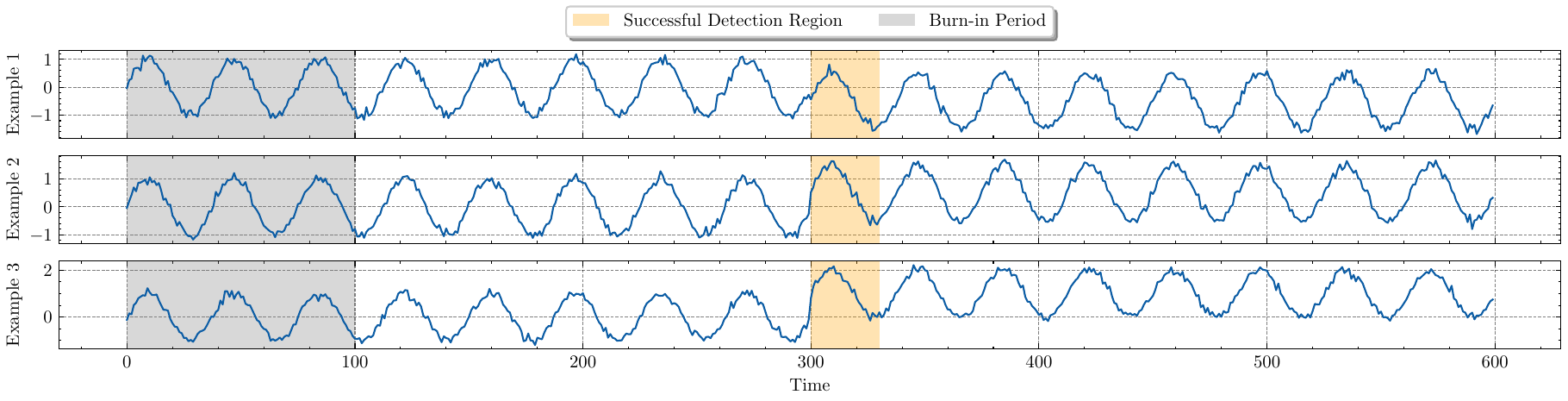}
	\caption{Illustration of the different change sizes considered in the location change simulations.}
	\label{fig:illustration_location}
\end{figure}

\begin{figure}[h!]
	\centering
	\includegraphics[width=\linewidth]{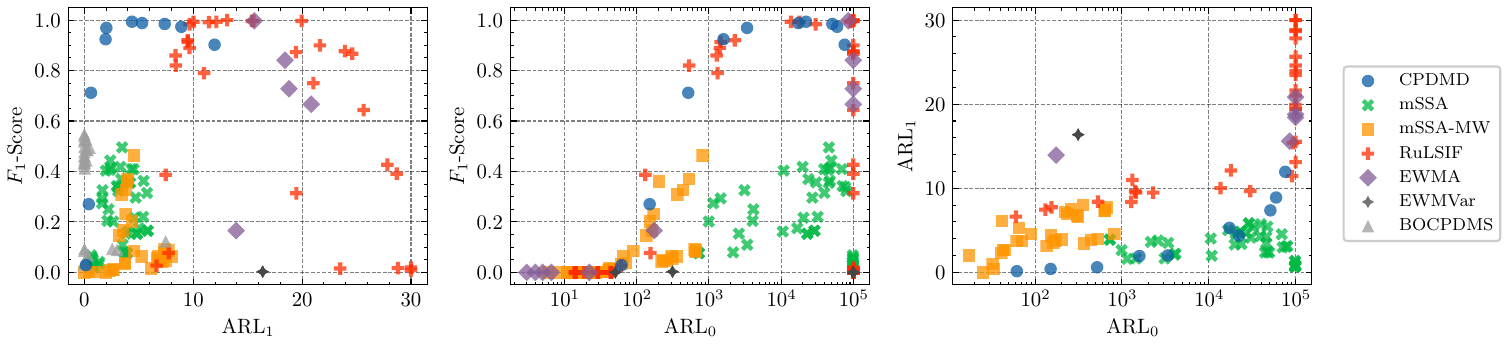}
	\caption{Performance comparison of changepoint detection algorithms on synthetic data for a change in location. Each dot corresponds to an algorithm with a specific choice of parameter values.}
	\label{fig:simulations_location}
\end{figure}

\begin{table}[h!]
	\centering
	\smaller
	\caption{Evaluation of selected changepoint detection algorithms on synthetic data, comparing performance achieved with the parameter set yielding highest $F_1$-Score (\textit{best}) against default parametrisation (\textit{default}). We highlight the top two performance metrics for each set of parameters in {bold}.}
	\begin{tabular}{l l r@{\hskip\tabcolsep} >{\scriptsize}r@{\hskip\tabcolsep} r@{\hskip\tabcolsep} >{\scriptsize}r@{\hskip\tabcolsep} r@{\hskip\tabcolsep} >{\scriptsize}r@{\hskip\tabcolsep} r@{\hskip\tabcolsep} >{\scriptsize}r@{\hskip\tabcolsep} r@{\hskip\tabcolsep} >{\scriptsize}r@{\hskip\tabcolsep}}
		\toprule
		\textbf{Algorithm} & \textbf{Params.} & \multicolumn{2}{c}{$\boldsymbol{P}$} & \multicolumn{2}{c}{$\boldsymbol{R}$} & \multicolumn{2}{c}{$\boldsymbol{F_1}$} & \multicolumn{2}{c}{$\mathbf{ARL_1}$} & \multicolumn{2}{c}{$\mathbf{ARL_0}$} \\
		\midrule
		\multirow{2}{*}{EWMA} & Best & \textbf{.997} & (.001) & \textbf{.997} & (.001) & \textbf{.997} & (.001) & 15.61 & (7.88) & \textbf{85058.20} & (29473.09) \\
		& Default & .000 & (.000) & .000 & (.000) & .000 & (.000) & N/A & (N/A) & 6.57 & (0.50) \\
		\hline
		\multirow{2}{*}{EWMVar} & Best & .003 & (.001) & .003 &(.001) & .003 & (.001) & 16.38 & (0.92) & 313.45 & (33.01) \\
		& Default & .000 & (.000) & .000 & (.000) & .000 & (.000) & N/A & (N/A) & \textbf{99900.00} & (0.00) \\
		\hline
		\multirow{2}{*}{BOCPDMS} & Best & .983 & (.017) & .380 & (.039) & .548 & (.042) & 0.00 & (0.00) & N/A & (N/A) \\
		& Default & \textbf{.578} & (.054) & .320 & (.039) & .412 & (.044) & 0.04 & (0.29) & N/A & (N/A) \\
		\hline
		\multirow{2}{*}{RuLSIF} & Best & \textbf{1.000} & (.000) & \textbf{1.000} & (.000) & \textbf{1.000} & (.000) & 13.11 & (2.57) & \textbf{99970.00} & (0.00) \\
		& Default & .877 & (.019) & \textbf{.877} & (.019) & \textbf{.877} & (.019) & 23.93 & (3.63) & \textbf{99920.00} & (0.00) \\
		\hline
		\multirow{2}{*}{mSSA} & Best & .771 & (.011) & .365 & (.009) & .496 & (.009) & \textbf{3.46} & (2.48) & 45940.81 & (46430.82) \\
		& Default & .538 & (.012) & .342 & (.009) & .418 & (.010) & \textbf{3.31} & (2.58) & 19904.84 & (36828.91) \\
		\hline
		\multirow{2}{*}{mSSA-MW} & Best & .607 & (.011) & .375 & (.009) & .464 & (.009) & 4.56 & (3.52) & 821.39 & (875.52) \\
		& Default & .038 & (.004) & .038 & (.004) & .038 & (.004) & \textbf{3.74} & (5.44) & 71.00 & (54.03) \\
		\hline
		\multirow{2}{*}{\textbf{CPDMD (ours})} & Best & \textbf{.997} & (.001) & .991 & (.002) & .994 & (.001) & \textbf{4.37} & (6.72) & 22104.95 & (20841.94) \\
		& Default & \textbf{.993} & (.002) & \textbf{.955} & (.004) & \textbf{.973} & (.002) & 8.90 & (8.06) & 59582.69 & (37379.71) \\
		\bottomrule
	\end{tabular}
	\label{tab:simulations_location}
\end{table}

\clearpage
\subsubsection{Change in amplitude}

The synthetic data generation process for this type of change is detailed in Appendix \ref{sec:data_generation_process}, and we provide in Figure \ref{fig:illustration_amplitude} an illustration of this change type and the considered change sizes. The successful detection region corresponds to the time region in which detected changepoints are counted as true positives. With respect to Definition \ref{def:f1}, we use a left margin $\mu_l = 0$ and a right margin $\mu_r = 30$. Performance of the grid of tested models is shown in Figure \ref{fig:simulations_amplitude} and a summary of performance is provided in Table \ref{tab:simulations_amplitude}.

\begin{figure}[h!]
	\centering
	\includegraphics[width=\linewidth]{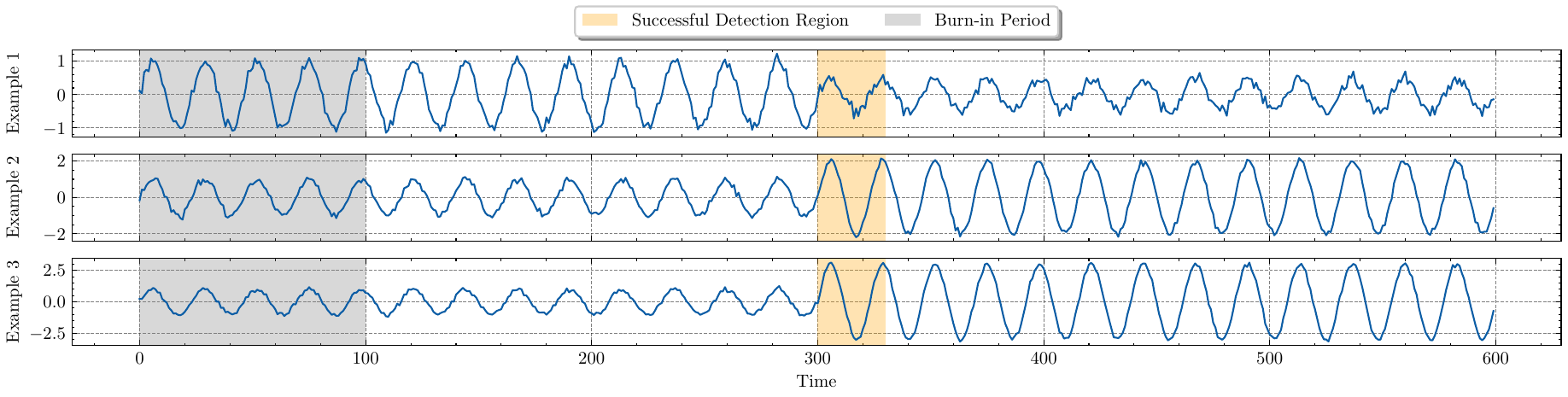}
	\caption{Illustration of the different change sizes considered in the amplitude change simulations.}
	\label{fig:illustration_amplitude}
\end{figure}

\begin{figure}[h!]
	\centering
	\includegraphics[width=\linewidth]{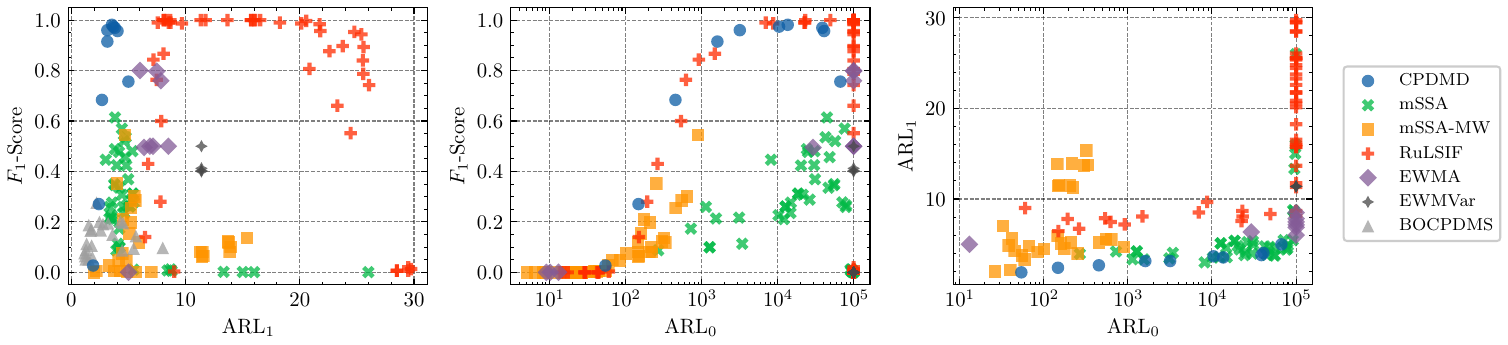}
	\caption{Performance comparison of changepoint detection algorithms on synthetic data for a change in amplitude. Each dot corresponds to an algorithm with a specific choice of parameter values.}
	\label{fig:simulations_amplitude}
\end{figure}

\begin{table}[h!]
	\centering
	\smaller
	\caption{Evaluation of selected changepoint detection algorithms on synthetic data, comparing performance achieved with the parameter set yielding highest $F_1$-Score (\textit{best}) against default parametrisation (\textit{default}). We highlight the top two performance metrics for each set of parameters in {bold}.}
	\begin{tabular}{l l r@{\hskip\tabcolsep} >{\scriptsize}r@{\hskip\tabcolsep} r@{\hskip\tabcolsep} >{\scriptsize}r@{\hskip\tabcolsep} r@{\hskip\tabcolsep} >{\scriptsize}r@{\hskip\tabcolsep} r@{\hskip\tabcolsep} >{\scriptsize}r@{\hskip\tabcolsep} r@{\hskip\tabcolsep} >{\scriptsize}r@{\hskip\tabcolsep}}
		\toprule
		\textbf{Algorithm} & \textbf{Params.} & \multicolumn{2}{c}{$\boldsymbol{P}$} & \multicolumn{2}{c}{$\boldsymbol{R}$} & \multicolumn{2}{c}{$\boldsymbol{F_1}$} & \multicolumn{2}{c}{$\mathbf{ARL_1}$} & \multicolumn{2}{c}{$\mathbf{ARL_0}$} \\
		\midrule
		\multirow{2}{*}{EWMA} & Best & \textbf{1.000} & (.000) & .667 & (.009) & .800 & (.006) & 6.01 & (1.01) & \textbf{99900.00} & (0.00) \\
		& Default & .567 & (.010) & .440 & (.009) & .495 & (.009) & 6.36 & (0.97) & 29236.98 & (39902.66) \\
		\hline
		\multirow{2}{*}{EWMVar} & Best & \textbf{1.000} & (.000) & .333 & (.008) & .500 & (.009) & 11.39 & (1.66) & \textbf{99900.00} & (0.00) \\
		& Default & .000 & (.000) & .000 & (.000) & .000 & (.000) & N/A & (N/A) & \textbf{99900.00} & (0.00) \\
		\hline
		\multirow{2}{*}{BOCPDMS} & Best & .371 & (.052) & .220 & (.035) & .276 & (.041) & \textbf{2.21} & (4.48) & N/A & (N/A) \\
		& Default & .206 & (.041) & .140 & (.029) & .167 & (.034) & \textbf{2.57} & (4.34) & N/A & (N/A) \\
		\hline
		\multirow{2}{*}{RuLSIF} & Best & \textbf{1.000} & (.000) & \textbf{1.000} & (.000) & \textbf{1.000} & (.000) & 8.06 & (2.66) & 23118.50 & (19393.84) \\
		& Default & \textbf{.943} & (.014) & \textbf{.943} & (.014) & \textbf{.943} & (.014) & 25.39 & (2.25) & \textbf{99920.00} & (0.00) \\
		\hline
		\multirow{2}{*}{mSSA} & Best & .813 & (.009) & .493 & (.009) & .614 & (.009) & 3.83 & (2.01) & 44491.10 & (45021.24) \\
		& Default & .571 & (.011) & .427 & (.009) & .489 & (.010) & \textbf{3.80} & (2.16) & 24336.95 & (38369.22) \\
		\hline
		\multirow{2}{*}{mSSA-MW} & Best & .620 & (.010) & .485 & (.009) & .544 & (.009) & 4.70 & (2.69) & 903.92 & (846.10) \\
		& Default & .046 & (.004) & .046 & (.004) & .046 & (.004) & 4.14 & (5.08) & 84.69 & (66.65) \\
		\hline
		\multirow{2}{*}{\textbf{CPDMD (ours})} & Best & .995 & (.001) & \textbf{.969} & (.003) & \textbf{.981} & (.002) & \textbf{3.55} & (1.84) & 13587.46 & (16750.13) \\
		& Default & \textbf{.997} & (.001) & \textbf{.920} & (.005) & \textbf{.957} & (.003) & 4.04 & (1.88) & 40865.57 & (35297.29) \\
		\bottomrule
	\end{tabular}
	\label{tab:simulations_amplitude}
\end{table}

\clearpage
\subsubsection{Change in trend}

The synthetic data generation process for this type of change is detailed in Appendix \ref{sec:data_generation_process}, and we provide in Figure \ref{fig:illustration_trend} an illustration of this change type and the considered change sizes. The successful detection region corresponds to the time region in which detected changepoints are counted as true positives. With respect to Definition \ref{def:f1}, we use a left margin $\mu_l = 0$ and a right margin $\mu_r = 30$. Performance of the grid of tested models is shown in Figure \ref{fig:simulations_trend} and a summary of performance is provided in Table \ref{tab:simulations_trend}.

\begin{figure}[h!]
	\centering
	\includegraphics[width=\linewidth]{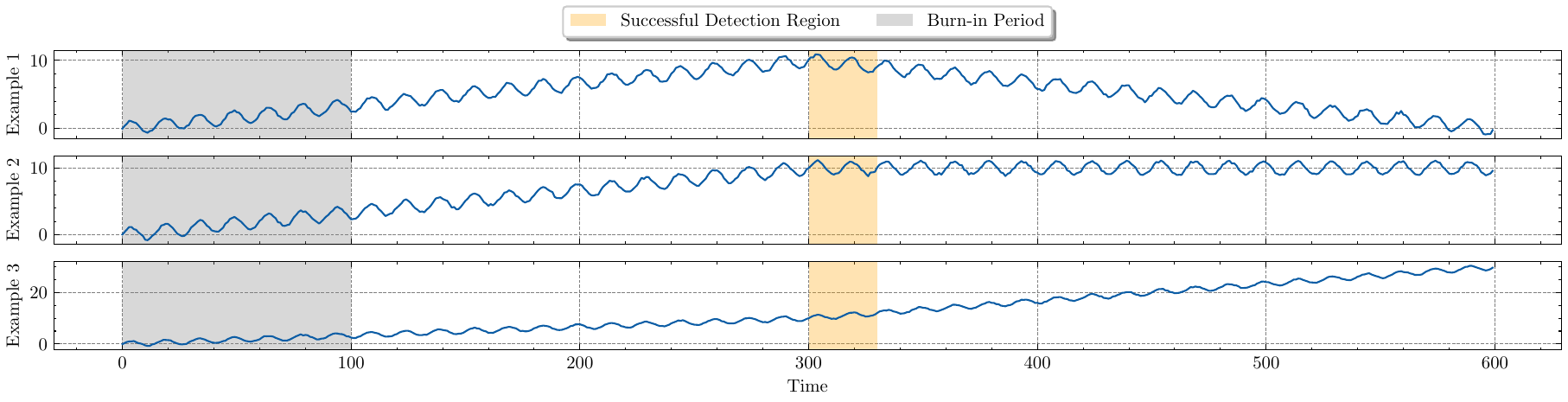}
	\caption{Illustration of the different change sizes considered in the trend change simulations.}
	\label{fig:illustration_trend}
\end{figure}

\begin{figure}[h!]
	\centering
	\includegraphics[width=\linewidth]{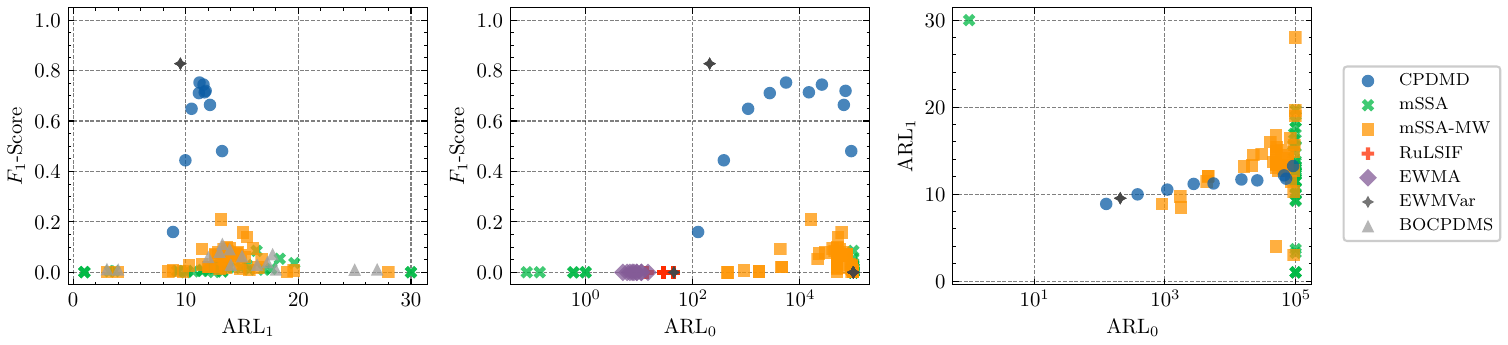}
	\caption{Performance comparison of changepoint detection algorithms on synthetic data for a change in trend. Each dot corresponds to an algorithm with a specific choice of parameter values.}
	\label{fig:simulations_trend}
\end{figure}

\begin{table}[h!]
	\centering
	\smaller
	\caption{Evaluation of selected changepoint detection algorithms on synthetic data, comparing performance achieved with the parameter set yielding highest $F_1$-Score (\textit{best}) against default parametrisation (\textit{default}). We highlight the top two performance metrics for each set of parameters in {bold}.}
	\begin{tabular}{l l r@{\hskip\tabcolsep} >{\scriptsize}r@{\hskip\tabcolsep} r@{\hskip\tabcolsep} >{\scriptsize}r@{\hskip\tabcolsep} r@{\hskip\tabcolsep} >{\scriptsize}r@{\hskip\tabcolsep} r@{\hskip\tabcolsep} >{\scriptsize}r@{\hskip\tabcolsep} r@{\hskip\tabcolsep} >{\scriptsize}r@{\hskip\tabcolsep}}
		\toprule
		\textbf{Algorithm} & \textbf{Params.} & \multicolumn{2}{c}{$\boldsymbol{P}$} & \multicolumn{2}{c}{$\boldsymbol{R}$} & \multicolumn{2}{c}{$\boldsymbol{F_1}$} & \multicolumn{2}{c}{$\mathbf{ARL_1}$} & \multicolumn{2}{c}{$\mathbf{ARL_0}$} \\
		\midrule
		\multirow{2}{*}{EWMA} & Best & .000 & (.000) & .000 & (.000) & .000 & (.000) & N/A & (N/A) & 14.67 & (1.15) \\
		& Default & .000 &(.000)  & .000 & (.000) & .000 & (.000) & N/A & (N/A) & 7.47 & (0.50) \\
		\hline
		\multirow{2}{*}{EWMVar} & Best & \textbf{.827} & (.007) & \textbf{.827} & (.007) & \textbf{.827} & (.007) & \textbf{9.52} & (5.38) & 207.76 & (6.63) \\
		& Default & .000 & (.000) & .000 & (.000) & .000 & (.000) & N/A & (N/A) & \textbf{99900.00} & (0.00) \\
		\hline
		\multirow{2}{*}{BOCPDMS} & Best & .194 & (.050) & .080 & (.022) & .113 & (.030) & 13.25 & (9.43) & N/A & (N/A) \\
		& Default & .000 & (.000) & .000 & (.000) & .000 & (.000) & N/A & (N/A) & N/A & (N/A) \\
		\hline
		\multirow{2}{*}{RuLSIF} & Best & .000 & (.000) & .000 & (.000) & .000 & (.000) & N/A & (N/A) & 44.00 & (0.00) \\
		& Default & .000 & (.000) & .000 & (.000) & .000 & (.000) & N/A & (N/A) & 29.00 & (0.00) \\
		\hline
		\multirow{2}{*}{mSSA} & Best & .102 & (.006) & .073 & (.005) & .085 & (.005) & 16.30 & (7.51) & \textbf{99880.00} & (0.00) \\
		& Default & .033 & (.011) & .003 & (.001) & .006 & (.002) & \textbf{11.44} & (8.19) & \textbf{99900.00} & (0.00) \\
		\hline
		\multirow{2}{*}{mSSA-MW} & Best & .215 & (.007) & .204 & (.007)& .210 & (.007) & 13.15 & (5.35) & \textbf{16158.87} & (13911.10) \\
		& Default & \textbf{.103} & (.006) & \textbf{.082} & (.005) & \textbf{.092} & (.005) & 13.36 & (8.73) & 56953.57 & (17415.25) \\
		\hline
		\multirow{2}{*}{\textbf{CPDMD (ours})} & Best & \textbf{.907} & (.006) & \textbf{.643} & (.009) & \textbf{.753} & (.007) & \textbf{11.22} & (3.73) & 5551.77 & (4942.42) \\
		& Default & \textbf{.980} & (.003) & \textbf{.502} & (.009) & \textbf{.664} & (.008) & \textbf{12.15} & (3.73) & 66510.56 & (34645.38) \\
		\bottomrule
	\end{tabular}
	\label{tab:simulations_trend}
\end{table}

\clearpage
\subsubsection{Change in mean}

The synthetic data generation process for this type of change is detailed in Appendix \ref{sec:data_generation_process}, and we provide in Figure \ref{fig:illustration_mean} an illustration of this change type and the considered change sizes. The successful detection region corresponds to the time region in which detected changepoints are counted as true positives. With respect to Definition \ref{def:f1}, we use a left margin $\mu_l = 0$ and a right margin $\mu_r = 30$. Performance of the grid of tested models is shown in Figure \ref{fig:simulations_mean} and a summary of performance is provided in Table \ref{tab:simulations_mean}.

\begin{figure}[h!]
	\centering
	\includegraphics[width=\linewidth]{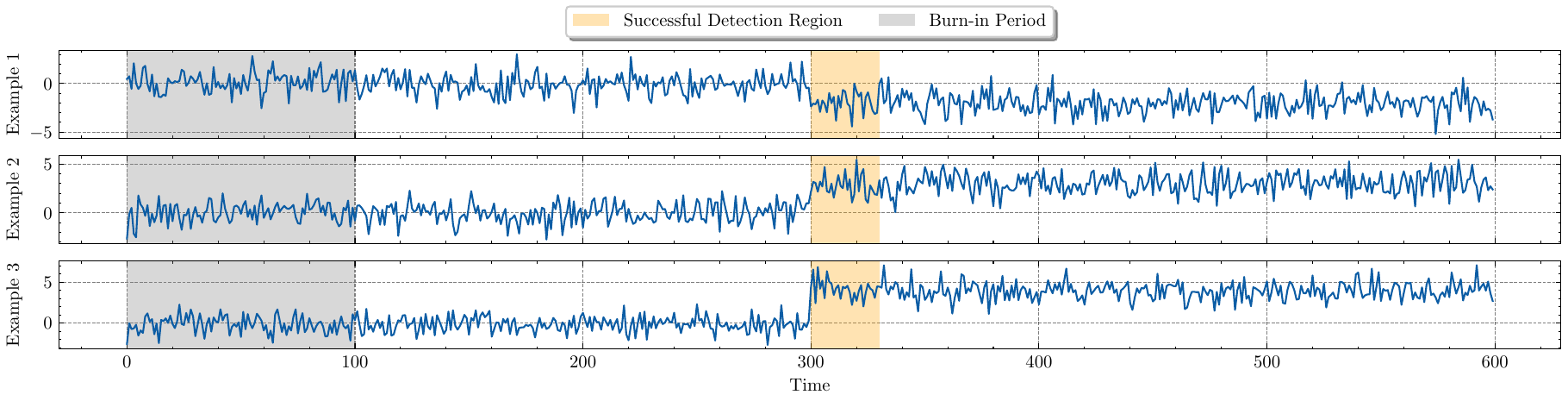}
	\caption{Illustration of the different change sizes considered in the mean change simulations.}
	\label{fig:illustration_mean}
\end{figure}

\begin{figure}[h!]
	\centering
	\includegraphics[width=\linewidth]{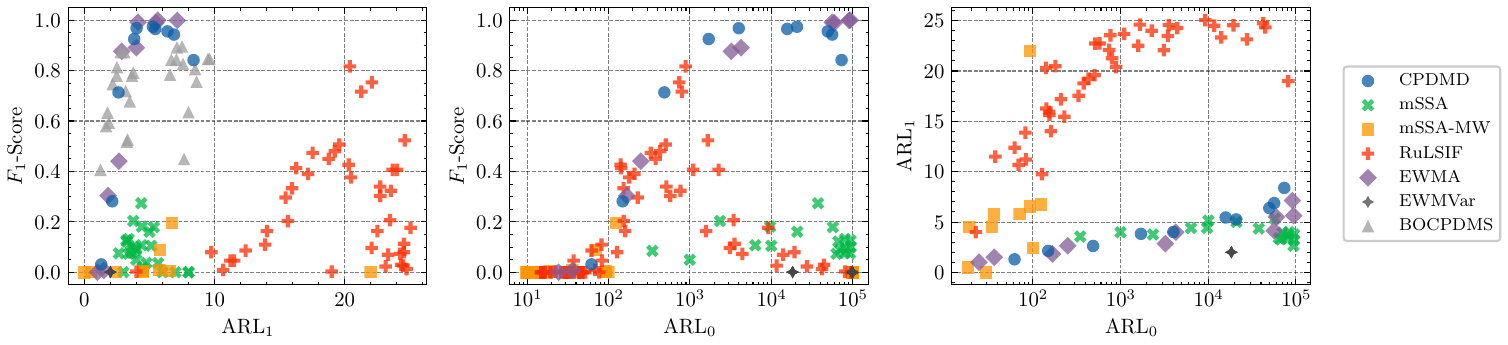}
	\caption{Performance comparison of changepoint detection algorithms on synthetic data for a change in mean. Each dot corresponds to an algorithm with a specific choice of parameter values.}
	\label{fig:simulations_mean}
\end{figure}

\begin{table}[h!]
	\centering
	\smaller
	\caption{Evaluation of selected changepoint detection algorithms on synthetic data, comparing performance achieved with the parameter set yielding highest $F_1$-Score (\textit{best}) against default parametrisation (\textit{default}). We highlight the top two performance metrics for each set of parameters in {bold}.}
	\begin{tabular}{l l r@{\hskip\tabcolsep} >{\scriptsize}r@{\hskip\tabcolsep} r@{\hskip\tabcolsep} >{\scriptsize}r@{\hskip\tabcolsep} r@{\hskip\tabcolsep} >{\scriptsize}r@{\hskip\tabcolsep} r@{\hskip\tabcolsep} >{\scriptsize}r@{\hskip\tabcolsep} r@{\hskip\tabcolsep} >{\scriptsize}r@{\hskip\tabcolsep}}
		\toprule
		\textbf{Algorithm} & \textbf{Params.} & \multicolumn{2}{c}{$\boldsymbol{P}$} & \multicolumn{2}{c}{$\boldsymbol{R}$} & \multicolumn{2}{c}{$\boldsymbol{F_1}$} & \multicolumn{2}{c}{$\mathbf{ARL_1}$} & \multicolumn{2}{c}{$\mathbf{ARL_0}$} \\
		\midrule
		\multirow{2}{*}{EWMA} & Best & \textbf{1.000}  & (.000) & \textbf{1.000} & (.000) & \textbf{1.000} & (.000) & 5.64 & (3.42) & \textbf{94051.07} & (21322.11) \\
		& Default & .441 & (.009)  & .441 & (.009) & .441  & (.009) & \textbf{2.65} & (1.78) & 250.47 & (290.29) \\
		\hline
		\multirow{2}{*}{EWMVar} & Best & .004 & (.003) & .001 & (.000) & .001 & (.001) & \textbf{2.00} & (1.41) & 18408.77 & (31222.62) \\
		& Default & .000  & (.000)  & .000 & (.000) & .000 & (.000) & N/A & (N/A) & \textbf{99900.00}  & (0.00) \\
		\hline
		\multirow{2}{*}{BOCPDMS} & Best & .905 & (.024) & .887 & (.026) & .896 & (.025) & 7.50 & (5.97) & N/A & (N/A) \\
		& Default & \textbf{.682} &(.040)  &\textbf{.673}  & (.040) & \textbf{.678} & (.040) & \textbf{3.48} & (3.00) & N/A & (N/A) \\
		\hline
		\multirow{2}{*}{RuLSIF} & Best & .817 & (.022) & .817 & (.022) & .817 & (.022) & 20.40 & (3.41) & 890.10  & (615.54) \\
		& Default & .427 &(.028)  & .427 & (.028) & .427 & (.028) & 20.30 & (5.80) & 141.60 & (97.35) \\
		\hline
		\multirow{2}{*}{mSSA} & Best & .480 & (.014) & .192 & (.007) & .274 & (.009) & \textbf{4.36} & (2.66) & \textbf{37678.13} & (45640.53) \\
		& Default & .219 &(.009)  & .155 & (.007) &.181  & (.007) & 4.41 & (3.04) & 9717.39 & (27538.01) \\
		\hline
		\multirow{2}{*}{mSSA-MW} & Best & .201 &(.008)  & .190 & (.007) &  .196 & (.007) & 6.73 & (6.35) & 123.82 & (101.40) \\
		& Default &.000  & (.000)  & .000 & (.000) & .000 & (.000) & N/A & (N/A) & 14.22 & (6.40) \\
		\hline
		\multirow{2}{*}{\textbf{CPDMD (ours})} & Best & \textbf{.995} & (0.001) & \textbf{.953} & (0.004) & \textbf{.974} & (0.002) & 5.26 & (4.43) &20833.05  & (21516.31) \\
		& Default & \textbf{.990} & (0.002)  & \textbf{.899} & (0.006) & \textbf{.943} & (0.003) & 6.88 & (4.76) & \textbf{56526.97} & (35946.23) \\
		\bottomrule
	\end{tabular}
	\label{tab:simulations_mean}
\end{table}

\clearpage
\subsubsection{Change in variance}

The synthetic data generation process for this type of change is detailed in Appendix \ref{sec:data_generation_process}, and we provide in Figure \ref{fig:illustration_variance} an illustration of this change type and the considered change sizes. The successful detection region corresponds to the time region in which detected changepoints are counted as true positives. With respect to Definition \ref{def:f1}, we use a left margin $\mu_l = 0$ and a right margin $\mu_r = 30$. Performance of the grid of tested models is shown in Figure \ref{fig:simulations_variance} and a summary of performance is provided in Table \ref{tab:simulations_variance}.

\begin{figure}[h!]
	\centering
	\includegraphics[width=\linewidth]{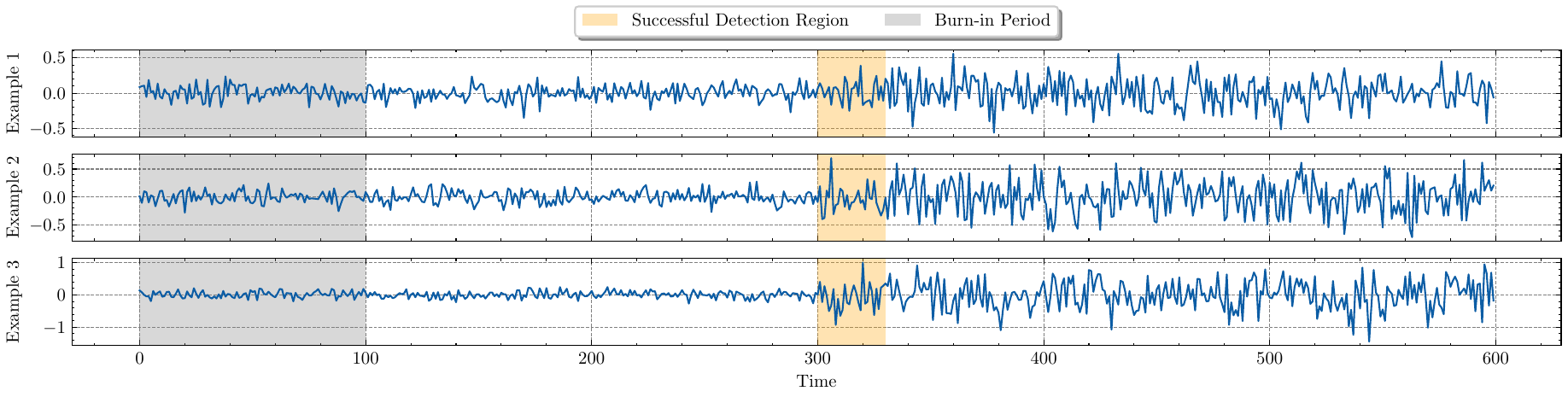}
	\caption{\small Illustration of the different change sizes considered in the variance change simulations.}
	\label{fig:illustration_variance}
\end{figure}

\begin{figure}[h!]
	\centering
	\includegraphics[width=\linewidth]{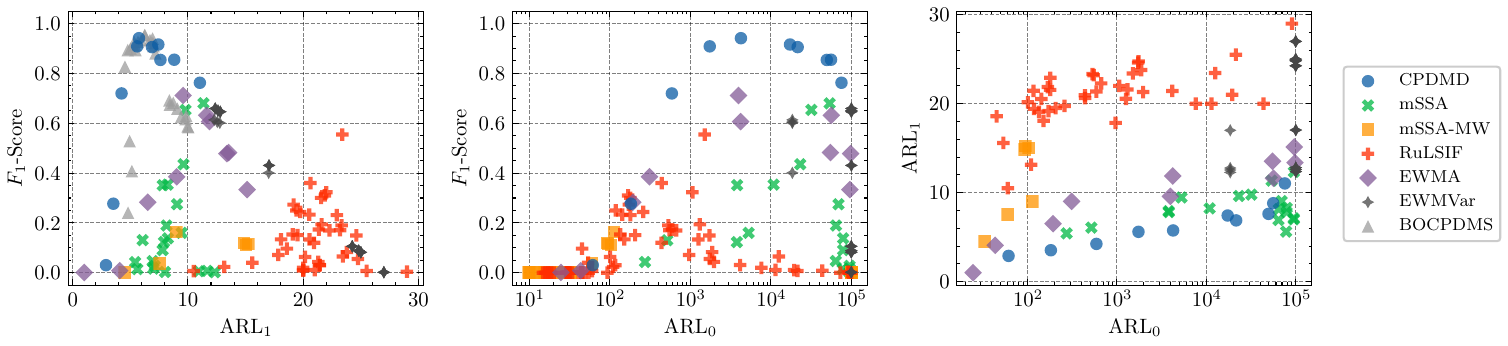}
	\caption{Performance comparison of changepoint detection algorithms on synthetic data for a change in variance. Each dot corresponds to an algorithm with a specific choice of parameter values.}
	\label{fig:simulations_variance}
\end{figure}

\begin{table}[h!]
	\centering
	\smaller
	\caption{Evaluation of selected changepoint detection algorithms on synthetic data, comparing performance achieved with the parameter set yielding highest $F_1$-Score (\textit{best}) against default parametrisation (\textit{default}). We highlight the top two performance metrics for each set of parameters in {bold}.}
	\begin{tabular}{l l r@{\hskip\tabcolsep} >{\scriptsize}r@{\hskip\tabcolsep} r@{\hskip\tabcolsep} >{\scriptsize}r@{\hskip\tabcolsep} r@{\hskip\tabcolsep} >{\scriptsize}r@{\hskip\tabcolsep} r@{\hskip\tabcolsep} >{\scriptsize}r@{\hskip\tabcolsep} r@{\hskip\tabcolsep} >{\scriptsize}r@{\hskip\tabcolsep}}
		\toprule
		\textbf{Algorithm} & \textbf{Params.} & \multicolumn{2}{c}{$\boldsymbol{P}$} & \multicolumn{2}{c}{$\boldsymbol{R}$} & \multicolumn{2}{c}{$\boldsymbol{F_1}$} & \multicolumn{2}{c}{$\mathbf{ARL_1}$} & \multicolumn{2}{c}{$\mathbf{ARL_0}$} \\
		\midrule
		\multirow{2}{*}{EWMA} & Best & .711 & (.009) & .711 & (.009) & .711 & (.009) & 9.58 & (7.90) & 3959.58 & (6248.10) \\
		& Default & .385 & (.009) & .385 & (.009) & .385 & (.009) & 9.02 & (7.68) & 310.69 & (373.74) \\
		\hline
		\multirow{2}{*}{EWMVar} & Best & .704 & (.009) & .620 & (.009) & .659 & (.009) & 12.36 & (7.40) & \textbf{99900.00} & (0.00) \\
		& Default & .704 & (.009) & .620 & (.009) & .659 & (.009) & 12.36 & (7.40) & \textbf{99900.00} & (0.00) \\
		\hline
		\multirow{2}{*}{BOCPDMS} & Best & \textbf{.953} & (.017) & \textbf{.953} & (.017)  & \textbf{.953} & (.017) & \textbf{6.24} & (7.10) & N/A & (N/A) \\
		& Default & \textbf{.926} & (.022) & \textbf{.920} & (.022) & \textbf{.923} &  (.022)& \textbf{6.96} & (7.38) & N/A & (N/A) \\
		\hline
		\multirow{2}{*}{RuLSIF} & Best & .622 & (.032) & .500 & (.029) & .555 &(.029) & 23.39 & (3.63) & 1506.40 & (1438.51) \\
		& Default & .301 & (.026) & .300 & (.026) & .301 & (.026) & 21.56 & (6.98) & 179.40 & (195.94) \\
		\hline
		\multirow{2}{*}{mSSA} & Best & .696 & (.009) & .665 & (.009) & .680 & (.008) & 11.35 & (6.85) & \textbf{53624.57} & (47612.65) \\
		& Default & .354 & (.009) & .354 & (.009) & .354 & (.009) & \textbf{8.23} & (5.97) & 10837.83 & (28977.50) \\
		\hline
		\multirow{2}{*}{mSSA-MW} & Best & .164 &(.007)  & .164 & (.007) & .164 & (.007) & 9.00 & (6.13) & 113.26 & (101.68) \\
		& Default & .000 & (.000) & .000 & (.000) & .000 & (.000) & N/A & (N/A) & 14.70 & (6.44) \\
		\hline
		\multirow{2}{*}{\textbf{CPDMD (ours})} & Best & \textbf{.942} & (.004) & \textbf{.941} & (.004) & \textbf{.941} & (.004) & \textbf{5.75} & (5.88) & 4244.30 & (4398.62) \\
		& Default & \textbf{.899} & (.006) & \textbf{.815} & (.007) & \textbf{.855} & (.006) & 8.81 & (6.51) & \textbf{55844.03} & (38654.98) \\
		\bottomrule
	\end{tabular}
	\label{tab:simulations_variance}
\end{table}

\clearpage
\subsubsection{Change in double periodicity}

The synthetic data generation process for this type of change is detailed in Appendix \ref{sec:data_generation_process}, and we provide in Figure \ref{fig:illustration_double} an illustration of this change type and the considered change sizes. The successful detection region corresponds to the time region in which detected changepoints are counted as true positives. With respect to Definition \ref{def:f1}, we use a left margin $\mu_l = 0$ and a right margin $\mu_r = 30$. Performance of the grid of tested models is shown in Figure \ref{fig:simulations_double} and a summary of performance is provided in Table \ref{tab:simulations_double}.

\begin{figure}[h!]
	\centering
	\includegraphics[width=\linewidth]{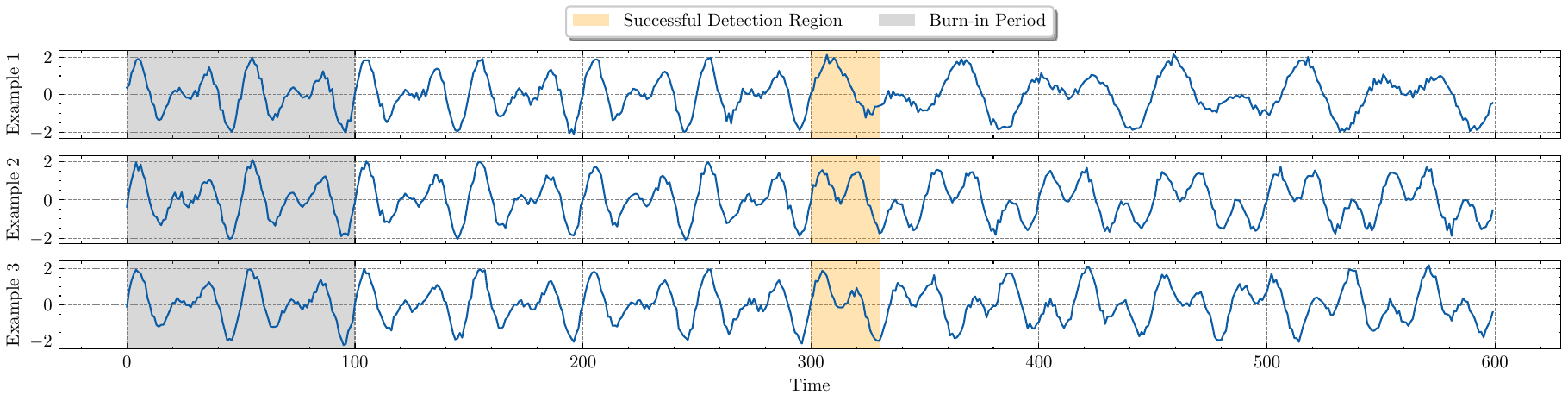}
	\caption{Illustration of the different change sizes considered in the double periodicity change simulations.}
	\label{fig:illustration_double}
\end{figure}

\begin{figure}[h!]
	\centering
	\includegraphics[width=\linewidth]{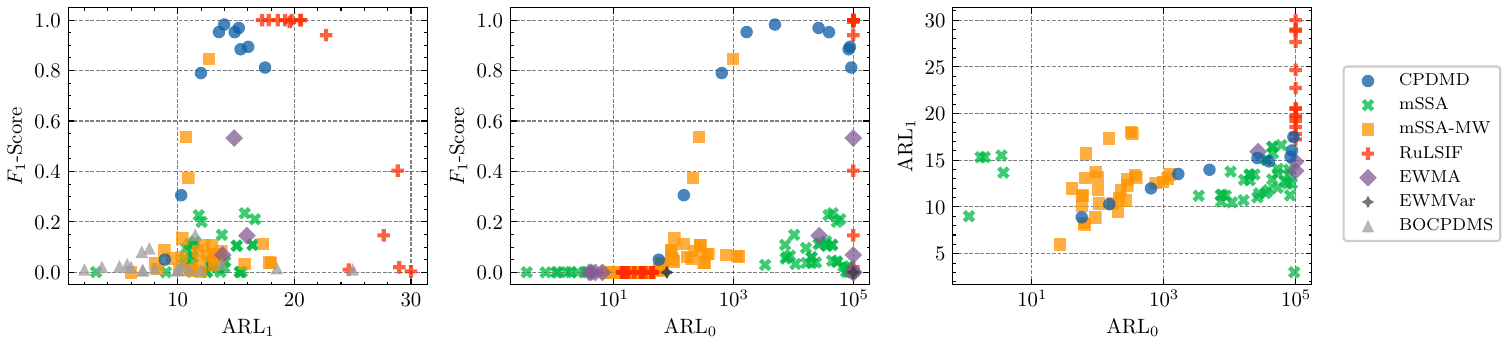}
	\caption{Performance comparison of changepoint detection algorithms on synthetic data for a change in double periodicity. Each dot corresponds to an algorithm with a specific choice of parameter values.}
	\label{fig:simulations_double}
\end{figure}

\begin{table}[h!]
	\centering
	\smaller
	\caption{Evaluation of selected changepoint detection algorithms on synthetic data, comparing performance achieved with the parameter set yielding highest $F_1$-Score (\textit{best}) against default parametrisation (\textit{default}). We highlight the top two performance metrics for each set of parameters in {bold}.}
	\begin{tabular}{l l r@{\hskip\tabcolsep} >{\scriptsize}r@{\hskip\tabcolsep} r@{\hskip\tabcolsep} >{\scriptsize}r@{\hskip\tabcolsep} r@{\hskip\tabcolsep} >{\scriptsize}r@{\hskip\tabcolsep} r@{\hskip\tabcolsep} >{\scriptsize}r@{\hskip\tabcolsep} r@{\hskip\tabcolsep} >{\scriptsize}r@{\hskip\tabcolsep}}
		\toprule
		\textbf{Algorithm} & \textbf{Params.} & \multicolumn{2}{c}{$\boldsymbol{P}$} & \multicolumn{2}{c}{$\boldsymbol{R}$} & \multicolumn{2}{c}{$\boldsymbol{F_1}$} & \multicolumn{2}{c}{$\mathbf{ARL_1}$} & \multicolumn{2}{c}{$\mathbf{ARL_0}$} \\
		\midrule
		\multirow{2}{*}{EWMA} & Best & .982 & (.004) & .365 & (.008) & .533 & (.009) & 14.84 & (2.83) & \textbf{99900.00} & (0.00) \\
		& Default & .000 & (.000) & .000 & (.000) & .000 & (.000) & N/A & (N/A) & 6.57 & (0.50) \\
		\hline
		\multirow{2}{*}{EWMVar} & Best & .000 & (.000) & .000 & (.000) & .000 & (.000) & N/A & (N/A) & \textbf{99900.00} & (0.00) \\
		& Default & .000 & (.000) & .000 & (.000) & .000 & (.000) & N/A & (N/A) & \textbf{99900.00} & (0.00) \\
		\hline
		\multirow{2}{*}{BOCPDMS} & Best & .152 & (.030) & .147 & (.029) & .149 & (.029) & \textbf{11.50} & (10.37) & N/A & (N/A) \\
		& Default & .017 & (.018) & .007 & (.007) & .010 & (.009) & 10.00 & (N/A) & N/A & (N/A) \\
		\hline
		\multirow{2}{*}{RuLSIF} & Best & \textbf{1.000} & (.000) & \textbf{1.000} & (.000) & \textbf{1.000} & (.000) & 17.24 & (1.55) & \textbf{99910.00} & (0.00) \\
		& Default & .000 & (.000) & .000 & (.000) & .000 & (.000) & N/A & (N/A) & \textbf{99920.00} & (0.00) \\
		\hline
		\multirow{2}{*}{mSSA} & Best & .342 & (.012) & .179 & (.007) & .235 & (.009)& 15.75 & (4.88) & 45381.45 & (47480.18) \\
		& Default & .000 & (.000) & .000 & (.000) & .000 & (.000)& \textbf{9.00} & (N/A) & 1.12 & (0.97) \\
		\hline
		\multirow{2}{*}{mSSA-MW} & Best & .846 &  (.007)& .846 & (.007) & .846 & (.007) & \textbf{12.67} & (4.51) & 981.60 & (983.34) \\
		& Default & \textbf{.086} & (.005) & \textbf{.086} & (.005) & \textbf{.086} & (.005) & \textbf{8.86} & (3.65) & 91.92 & (70.20) \\
		\hline
		\multirow{2}{*}{\textbf{CPDMD (ours})} & Best & \textbf{.990} & (.002) & \textbf{.975} & (.003) & \textbf{.983} & (.002) & 13.98 & (5.63) & 4938.27 & (4722.74) \\
		& Default & \textbf{.999} & (.001) & \textbf{.809} & (.007) & \textbf{.894} & (.004) & 16.05 & (5.71) & 87872.59 & (28593.59) \\
		\bottomrule
	\end{tabular}
	\label{tab:simulations_double}
\end{table}

\clearpage
\subsection{Real-world experiments}

\subsubsection{Datasets}
\label{sec:datasets}

In this section, we provide additional details regarding real-world datasets used in Section \ref{sec:experiments_real_world}.

\paragraph{HASC \cite{hasc}.}
Published by the Human Activity Sensing Consortium (HASC), this dataset consists of human activity data recorded via a wearable three-axis accelerometer. Each time series correspond to 120 seconds of acceleration measurements along the $x$, $y$, and $z$ axes. The changepoints indicate transitions between six distinct activities: stay, walk, jogging, skip, stair-up, and stair-down. The dataset contains $18$ $3$-dimensional sequences of length $\sim 12,000$. We evaluate performance of selected algorithms on a subset of the HASC dataset made of the first 10 recordings. This sample dataset was provided by the 2011 Human Activity Sensing Consortium challenge and has been previously analysed in changepoint detection studies \cite{mssa, density_ratio_cpd}. This dataset can be downloaded from \href{http://hasc.jp/hc2011/download-en.html}{\small \texttt{http://hasc.jp/hc2011/download-en.html}}. Following the definition of $F_1$-Score in Appendix \ref{sec:background}, and to take into account fluctuations in the human annotations for changepoints locations, we consider a left margin $\mu_l = 50$ and a right margin $\mu_r = 200$. 

\paragraph{Digits \cite{digits}.}
The \textit{Optical Recognition of Handwritten Digits Dataset} consists of $8 \times 8$ greyscale images (one channel) of a digit. Each pixel corresponds to an integer in the range $0$-$16$. In total, it contains $1,797$ samples, leading to $\sim 180$ digits per class. We generated sequences by concatenating piecewise-constant digit segments, where images were randomly sampled from the original dataset. Each sequence spans a length of $3,000$ with approximately $600$ timesteps between changepoints. The resulting sequence can be interpreted as a video displaying a single digit that abruptly transitions to a new digit at the changepoints locations. Ultimately, each image is flattened, yielding sequences of length $3,000$ with $64$ feature dimensions. The original dataset can be downloaded from \href{https://archive.ics.uci.edu/dataset/80/optical+recognition+of+handwritten+digits}{\small \texttt{https://archive.ics.uci.edu/dataset/80/optical+recognition+of+handwritten+digits}}. Following the definition of $F_1$-Score in Appendix \ref{sec:background}, while the images are extracted from a real-world dataset, we know exactly the changepoint locations and it can be assumed that there is no fluctuation in terms of annotation in this dataset, leading to a left margin $\mu_l = 0$ and a right margin $\mu_r = 50$.  

\paragraph{Yahoo \cite{yahoo}.}
The \textit{S5 - A Labeled Anomaly Detection Dataset} is provided as part of the Yahoo! Webscope program. More specifically, we use the A1 Benchmark which is based on the real production traffic to some of the Yahoo! properties, and consists of $67$ univariate time series of length $\sim 1,000$ with seasonalities and trends. Both anomalies and changepoints are labelled by human annotations. This dataset has already been used in the changepoint detection literature \cite{mssa, cpd_contrastive} to measure performance of algorithms in real-world web traffic data. The dataset can be accessed from \href{https://webscope.sandbox.yahoo.com/catalog.php?datatype=s&did=70}{\small \texttt{https://webscope.sandbox.yahoo.com/catalog.php?datatype=s\&did=70}}. Following the definition of $F_1$-Score in Appendix \ref{sec:background}, and to take into account fluctuations in the human annotations for changepoints locations, we consider a left margin $\mu_l = 50$ and a right margin $\mu_r = 50$. 

\subsubsection{Models parameters}
\label{sec:appendix_real_world_parameters}

In this section, we detail the different methods and parameters that were explored through simulations on real-world datasets in Section \ref{sec:experiments_real_world}. Since each real-world dataset has specific settings (i.e. dimension, sequence length, average inter-changepoint times), we evaluate the performance of each selected algorithm on a grid of parameters which is suited for a given dataset, while trying to make the comparisons as fair as possible and using similar burn-in periods (or equivalent parameters).

Since each considered dataset consists of multiple data streams, we evaluate algorithms' performance via the defined performance metrics ($F_1$-Score and covering metric), along with the standard deviation of these performance metrics per set of parameters across all sequences in each dataset. 

\paragraph{CPDMD.}
Given a burn-in period $T_0$, we use the same grid generation process as in synthetic data simulations detailed in Section \ref{sec:experiment_details_synthetic}, with explored ranks $r \in 10 \cdot \{ 1, 2, \dots, p \}$ due to the higher complexity of the generation process of real-world data compared to synthetic data. Note that while increasing the maximum rank, this does not modify the number of explored values for the rank but only consists in increasing the spacing between the explored rank values, and allowing to explore higher values, while avoiding to explore the complete set of possible rank values. 

\begin{itemize}
	
	\item \textbf{Learning rate} $\lambda = 0.05$.
	\item \textbf{Control limit} $L = 2.5, 3.5, 4.5 \text{ (default)}$.
	
	\item \textbf{HASC dataset}
	\begin{itemize}
		\item \textbf{Burn-in period} $T_0 = 300, 400 \text{ (default)}$.
	\end{itemize}
	
	\item \textbf{Digits dataset}
	\begin{itemize}
		\item \textbf{Burn-in period} $T_0 = 100, 150 \text{ (default)}$.
	\end{itemize}
	
	\item \textbf{Yahoo dataset}
	\begin{itemize}
		\item \textbf{Burn-in period} $T_0 = 100, 200 \text{ (default)}$.
	\end{itemize}
	
\end{itemize}

\paragraph{mSSA and mSSA-MW.}
We use the Python implementation provided by the authors of \cite{mssa} and available via the GitHub repository \href{https://github.com/ArwaAlanqary/mSSA_cpd}{\small \texttt{https://github.com/ArwaAlanqary/mSSA\_cpd}}. We follow the authors guidelines and consider the following parameters:

\begin{itemize}
	
	\item \textbf{Distance threshold} $= 5 \text{ (default)}, 10$.
	\item \textbf{Rank} $= 0.95$.
	\item \textbf{Training size} $= 0.9$.
	
	\item \textbf{HASC dataset}
	\begin{itemize}
		\item \textbf{Window size} $= 300, 400 \text{ (default)}$.
		\item \textbf{Rows} $= 17, 20 \text{ (default)}$.
	\end{itemize}
	
	\item \textbf{Digits dataset}
	\begin{itemize}
		\item \textbf{Window size} $= 100, 150 \text{ (default)}$.
		\item \textbf{Rows} $= 10, 12 \text{ (default)}$.
	\end{itemize}
	
	\item \textbf{HASC dataset}
	\begin{itemize}
		\item \textbf{Window size} $= 100, 200 \text{ (default)}$.
		\item \textbf{Rows} $= 10, 14 \text{ (default)}$.
	\end{itemize}
\end{itemize}

\paragraph{RuLSIF.}
We use the MATLAB implementation provided by the authors of \cite{rulsif} and available via the GitHub repository \href{https://github.com/anewgithubname/change_detection}{\small \texttt{https://github.com/anewgithubname/change\_detection}}. While RuLSIF is introduced as a retrospective changepoint detection method, its implementation is analogous to online context, where a threshold needs to be set (and cannot be known in advance) to make online decisions. We follow the authors guidelines and consider the following parameters:
\begin{itemize}
	\item $\boldsymbol{n} = 50 \text{ (default)}, 75$.
	\item $\boldsymbol{k} = 10 \text{ (default)}, 15$.
	\item $\boldsymbol{\alpha} = 0.1$.
	\item \textbf{Threshold} $= 7$.
\end{itemize}

\paragraph{BOCPDMS.}
We use the Python implementation provided by the authors of \cite{bocpdms} and available via the GitHub repository \href{https://github.com/alan-turing-institute/bocpdms}{\small \texttt{https://github.com/alan-turing-institute/bocpdms}}. We consider the same following grid of parameters as in \cite{mssa}

\begin{itemize}
	\item \textbf{Intensity} $= 100 \text{ (default)}, 200$.
	\item \textbf{Prior on $\boldsymbol{a}$} $=$ \textbf{Prior on $\boldsymbol{b}$} $= 0.01, 1 \text{ (default)}$.
	
\end{itemize}

\paragraph{EWMA.}
We use a self-implementation of the EWMA algorithm introduced by \cite{ewma} in Python, and we consider the following grid of parameters:
\begin{itemize}
	\item \textbf{Learning rate} $r = 0.05 \text{ (default)}, 0.1$.
	\item \textbf{Control limit} $L = 2.5 \text{ (default)}, 3.5, 4.5$.
	
	\item \textbf{HASC dataset}
	\begin{itemize}
		\item \textbf{Burn-in period} $T_0 = 300, 400 \text{ (default)}$.
	\end{itemize}
	
	\item \textbf{Digits dataset}
	\begin{itemize}
		\item \textbf{Burn-in period} $T_0 = 100, 150 \text{ (default)}$.
	\end{itemize}
	
	\item \textbf{Yahoo dataset}
	\begin{itemize}
		\item \textbf{Burn-in period} $T_0 = 100, 200 \text{ (default)}$.
	\end{itemize}
	
\end{itemize}

\paragraph{EWMVar.}
We use a self-implementation of the EWMVar algorithm introduced by \cite{ewmvar} in Python, and we consider the following grid of parameters:
\begin{itemize}
	\item \textbf{Learning rate for the variance} $r = 0.05$.
	\item \textbf{Learning rate for the mean} $\lambda = 0.2$.
	\item \textbf{Control limits} $(C_7, C_8) = (0.42, 1.7), (0.63, 1.14), (0.68, 1.08) \text{ (default)}, (0.91, 1.12)$.
	\item \textbf{HASC dataset}
	\begin{itemize}
		\item \textbf{Burn-in period} $T_0 = 300, 400 \text{ (default)}$.
	\end{itemize}
	
	\item \textbf{Digits dataset}
	\begin{itemize}
		\item \textbf{Burn-in period} $T_0 = 100, 150 \text{ (default)}$.
	\end{itemize}
	
	\item \textbf{Yahoo dataset}
	\begin{itemize}
		\item \textbf{Burn-in period} $T_0 = 100, 200 \text{ (default)}$.
	\end{itemize}
\end{itemize}

While algorithms like EWMA and EWMVar are designed for univariate changepoint detection, we adapt them to the multivariate datasets (HASC and Digits) by running separate instances per component and taking the union of detected changepoints across all components.

\clearpage

\subsubsection{Examples of real-world data segmentations}

We provide in Figures \ref{fig:real_world_hasc} and \ref{fig:real_world_yahoo} illustrations of CPDMD online changepoint detection results. 

\begin{figure}[h!]
	\centering
	\includegraphics[width=\linewidth]{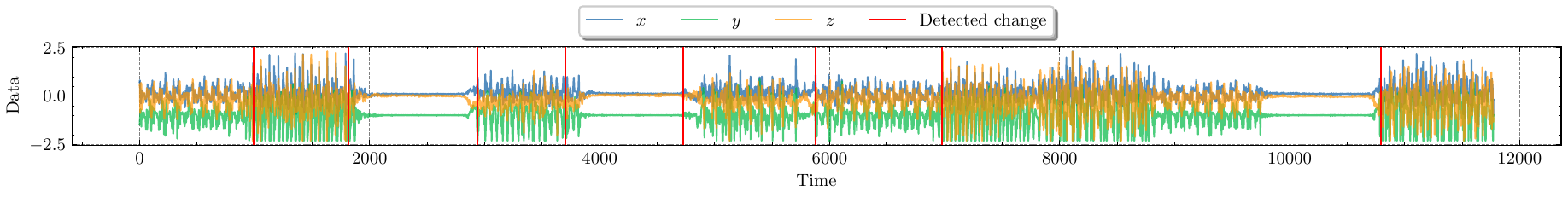}
	\caption{Example of CPDMD online segmentation on a multivariate sequence extracted from the HASC dataset.}
	\label{fig:real_world_hasc}
\end{figure}

\begin{figure}[h!]
	\centering
	\includegraphics[width=\linewidth]{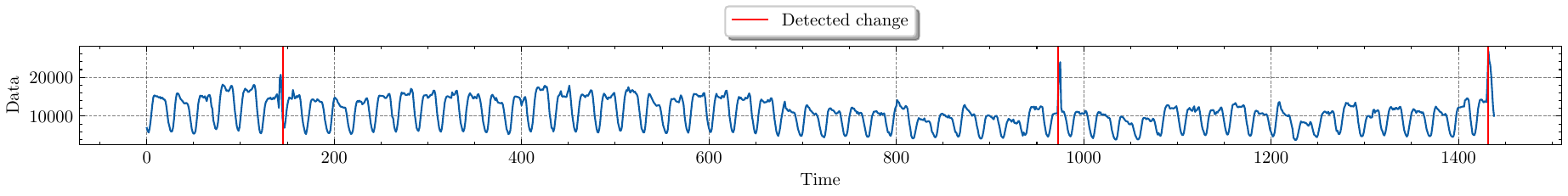}
	\caption{Example of CPDMD online segmentation on a univariate sequence extracted from the Yahoo dataset.}
	\label{fig:real_world_yahoo}
\end{figure}

%%- - - - - - - - - - - - - - - - - - - - - - - - - - - - - - - - - - - - - %%
\clearpage
\small
\bibliographystyle{plain}
\bibliography{refs}

%%- - - - - - - - - - - - - - - - - - - - - - - - - - - - - - - - - - - - - %%

\end{document}